\documentclass[12pt]{article}
\usepackage[dvips]{graphicx}
\usepackage{amsmath,amsfonts,amssymb,latexsym,epsfig}
\include{references}
\usepackage{kpfonts}

\usepackage{fullpage}
\usepackage{amsthm}    
\usepackage{mathabx}    
\usepackage{mathtools}
\usepackage{graphicx}   
\usepackage{verbatim}   
\usepackage{color}      
\usepackage{caption,subcaption}


\theoremstyle{definition}

\newtheorem{theorem}{Theorem}[section]
\newtheorem{lemma}[theorem]{Lemma}
\newtheorem{example}[theorem]{Example}
\newtheorem{remark}[theorem]{Remark}

\newtheorem{proposition}[theorem]{Proposition}
\newtheorem{assumption}[theorem]{Assumption}


\numberwithin{equation}{section}
%
%

\usepackage{resizegather}

%
%
%
\begin{document}
\title{Brownian Motion in an N-scale periodic Potential}
\author{A. B. Duncan, G.A. Pavliotis \\
        Department of Mathematics\\
    Imperial College London \\
        London SW7 2AZ, UK 
                           }
\maketitle

\begin{abstract}
We study the problem of Brownian motion in a multiscale potential. The potential is assumed to have $N+1$ scales (i.e. $N$ small scales and one macroscale) and to depend periodically on all the small scales. We show that for nonseparable potentials, i.e. potentials in which the microscales and the macroscale are fully coupled,  the homogenized equation is an overdamped Langevin equation with multiplicative noise driven by the free energy, for which the detailed balance condition still holds.  The calculation of the effective diffusion tensor requires the solution of a system of $N$ coupled Poisson equations.
\end{abstract}



\section{Introduction}
The evolution of complex systems arising in chemistry and biology often involve dynamic phenomena occuring at a wide range of time and length scales. Many such systems are characterised by the presence of a hierarchy of barriers in the underlying energy landscape, giving rise to a complex network of metastable regions in configuration space.   Such energy landscapes occur naturally in macromolecular models of solvated systems, in particular protein dynamics. In such cases the rugged energy landscape is due to the many competing interactions in the energy function \cite{bryngelson1995funnels}, giving rise to frustration, in a manner analogous to spin glass models \cite{bryngelson1987spin,onuchic1997theory}.  Although the large scale structure will determine the minimum energy configurations of the system, the small scale fluctuations of the energy landscape will still have a significant influence on the dynamics of the protein, in particular the behaviour at equilibrium,  the most likely pathways for binding and folding, as well as the stability of the conformational states.   Rugged energy landscapes arise in various other contexts, for example nucleation at a phase transition and solid transport in condensed matter.
\\\\
To study the influence of small scale potential energy fluctuations on the system dynamics, a number of simple mathematical models have been proposed which capture the essential features of such systems.   In one such model, originally proposed by Zwanzig \cite{zwanzig1988diffusion},  the dynamics are modelled as an overdamped Langevin diffusion in a rugged two--scale potential $V^\epsilon$, 
\begin{equation}
\label{eq:sde1}
  dX^\epsilon_t = -\nabla V^\epsilon(X_t)\,dt + \sqrt{2\sigma}\,dW_t,\quad \sigma = \beta^{-1} = k_BT,
\end{equation}
where $T$ is the temperature and $k_B$ is Boltmann's constant.  The function $V^\epsilon(x) = V(x,x/\epsilon)$ is a smooth potential which has been perturbed by a rapidly fluctuating function with wave number controlled by the small scale parameter $\epsilon > 0$.  See Figure \ref{fig:potential} for an illustration.  Zwanzig's analysis was based on an effective medium approximation of the mean first passage time, from which the standard Lifson-Jackson formula \cite{lifson1962self} for the effective diffusion coefficient was recovered.  In the context of protein dynamics,  phenomenological models based on (\ref{eq:sde1}) are widespread in the literature, including but not limited to \cite{ansari2000mean,hyeon2003can,mondal2009noise,saven1994kinetics}.  Theoretical aspects of such models have also been previously studied.   More recent studies include \cite{dean2014diffusion} where the authors study diffusion in a strongly correlated quenched random potential constructed from a periodically-extended path of a fractional Brownian motion, and \cite{banerjee2014diffusion} in which the authors perform a numerical study of the effective diffusivity of diffusion in a potential obtained from a realisation of a stationary isotropic Gaussian random field.
\begin{figure}[h!]
\includegraphics[width=1.0\textwidth]{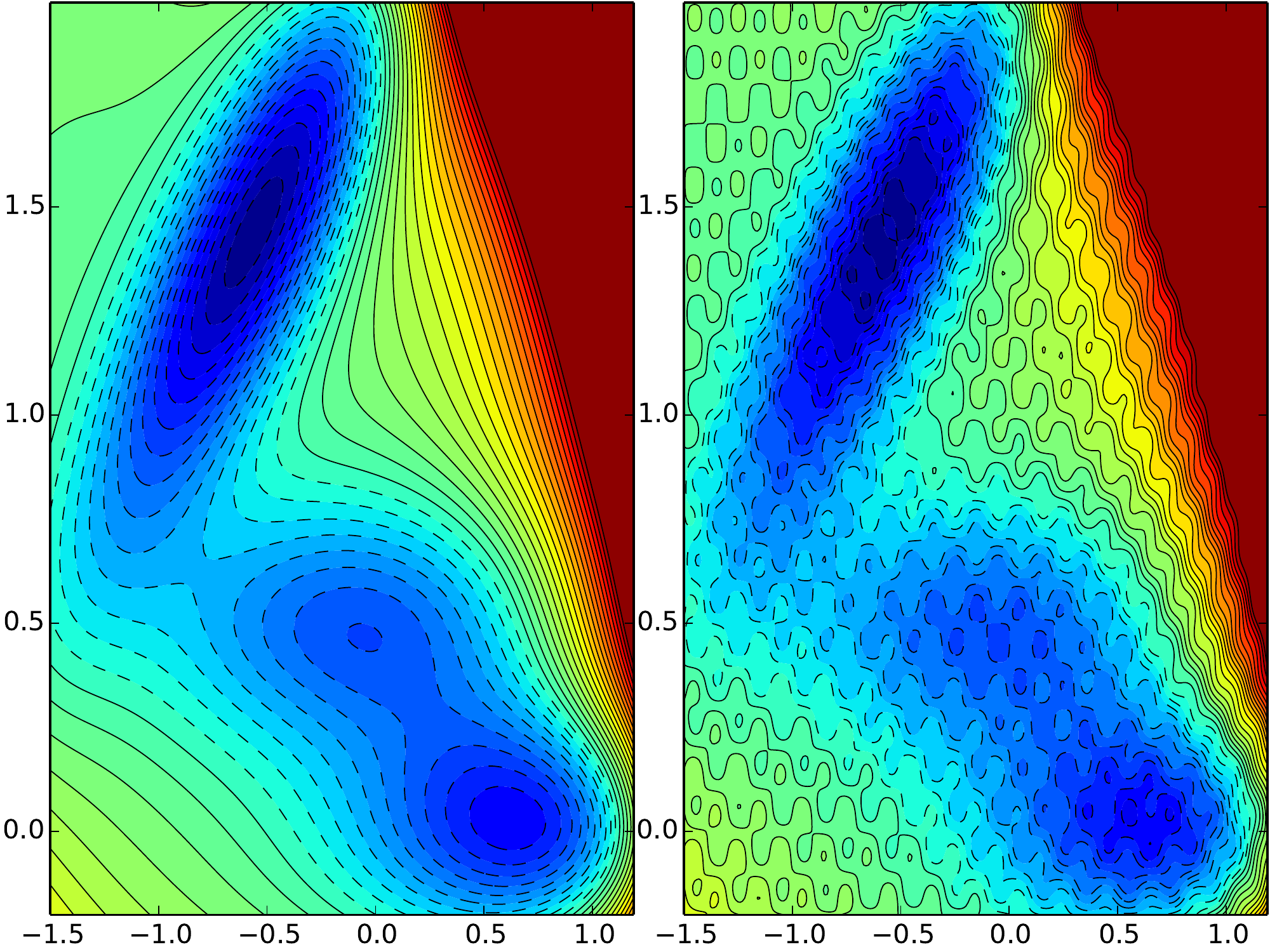}
\centering
\caption{Example of a multiscale potential.  The left panel shows the isolines of the Mueller potential \cite{ren2002probing,muller1980reaction}.  The right panel shows the corresponding rugged energy landscape where the Mueller potential is perturbed by high frequency periodic fluctuations.}
\label{fig:potential}
\end{figure}

For the case where (\ref{eq:sde1}) possesses one characteristic lengthscale controlled by $\epsilon>0$, the convergence of $X_t^\epsilon$ to a coarse-grained process $X_t^0$ in the limit $\epsilon\rightarrow 0$ over a finite time interval is well-known.  When the rapid oscillations are periodic, under a diffusive rescaling this problem can be recast as a periodic homogenization problem, for which it can be shown that the process $X_t^\epsilon$ converges weakly to a Brownian motion with constant effective diffusion tensor $D$ (covariance matrix) which can be calculated by solving an appropriate Poisson equation posed on the unit torus,  see for example \cite{pavliotis2008multiscale,bensoussan1978asymptotic}.  The analogous case where the rapid fluctuations arise from a stationary ergodic random field has been studied in \cite[Ch. 9]{komorowski2012fluctuations}.  The case where the potential $V^\epsilon$ possesses periodic fluctuations with two or three well-separated characteristic timescales, i.e. $V^\epsilon(x) = V(x,x/\epsilon,x\epsilon^2)$ follow from the results in \cite[Ch. 3.7]{bensoussan1978asymptotic}, in which case the dynamics of the coarse-grained model in the $\epsilon\rightarrow 0$ limit are characterised by an It\^{o} SDE whose coefficients can be calculated in terms of the solution of an associated Poisson equation.  A generalization of these results to diffusion processes having $N$-well separated scales was explored in Section 3.11.3 of the same text, but no proof of convergence is offered in this case.   Similar diffusion approximations for systems with one fast scale and one slow scale,  where the fast dynamics are not periodic have been studied in \cite{pardoux2001poisson}.
\\\\
Further properties of the homogenized dynamics, in addition to the calculation of the mean first passage time, have been investigated.  For potentials of the form $V^\epsilon(x) = \alpha V(x) + p(x/\epsilon)$ for a smooth periodic function $p(\cdot)$ it was shown in \cite{pavliotis2007parameter} that the maximum likelihood estimator for the drift coefficients of the homogenized equation, given observations of the slow variable of the full dynamics (\ref{eq:sde1}) is \emph{asymptotically biased}.  Further results on inference of multiscale diffusions including (\ref{eq:sde1}) can be found in \cite{krumscheid2013semiparametric,krumscheid2014perturbation}.  In \cite{dupuis2011rare}, asymptotically optimal importance sampling schemes for studying rare events associated with (\ref{eq:sde1}) of the form $V^\epsilon(x) = V(x, x/\epsilon)$ were constructed by studying the $\epsilon\rightarrow 0$ limit of an associated Hamilton-Jacobi-Bellmann equation, the results were subsequently generalised to random stationary ergodic fluctuations in \cite{spiliopoulos2015rare}.  In \cite{Hartmann2014279}, the authors study optimal control problems for two-scale systems.   Small $\epsilon$ asymptotics for the exit time distribution of (\ref{eq:sde1}) were studied in \cite{almada2014scaling}.
\\\\
A model for Brownian dynamics in a potential $V$ possessing infinitely many characteristic lengthscales was studied in \cite{ben2003multiscale}.  In particular, the authors studied the large-scale diffusive behaviour of the overdamped Langevin dynamics in potentials of the form
\begin{equation}
\label{eq:arous_potential}
  V^n(x) = \sum_{k=0}^{n}U_k\left(\frac{x}{R_k}\right),
\end{equation}
obtained as a superposition of H\"{o}lder continuous periodic functions with period $1$.  It was shown in \cite{ben2003multiscale} that the effective diffusion coefficient decays exponentially fast with the number of scales, provided that the scale ratios $R_{k+1}/R_{k}$ are bounded from above and below, which includes cases where the is no scale separation.   From this the authors were able to show that the effective dynamics exhibits subdiffusive behaviour, in the limit of infinitely many scales.
\\\\
In this paper we study the dynamics of diffusion in a rugged potential possessing $N$ well-separated lengthscales.  More specifically, we study the dynamics of (\ref{eq:sde1}) where the multiscale potential is chosen to have the form
$$
  V^\epsilon(x) = V(x, x/\epsilon, x/\epsilon^2, \ldots, x/\epsilon^N),
$$
where $V$ is a smooth function, which is periodic in all but the first variable.  Clearly, $V$ can always be written in the form 
\begin{equation}
\label{eq:Veps}
  V(x_0, x_1, \ldots, x_N) = V_0(x_0) + V_1(x_0, x_1, \ldots, x_N),
\end{equation}
where  $(x_0,x_1,\ldots, x_N) \in \mathbb{R}^d\times \left(\mathbb{T}^d\right)^{N}$.  In this paper, we shall assume that the large scale component of the potential $V_0$ is smooth and confining on $\mathbb{R}^d$, and that the perturbation $V_1$ is a smooth bounded function which is periodic in all but the first variable.  Unlike \cite{ben2003multiscale}, we work under the assumption of explicit scale separation, however we also permit more general potentials than those of the form (\ref{eq:arous_potential}), allowing possibly nonlinear interactions between the different scales, and even full coupling between scales \footnote{we will refer to potentials of the form $V^\epsilon(x) = V_0(x) + V_1(x/\epsilon,\ldots, x/\epsilon^N)$ where $V_1$ is periodic in all variables as separable.}.  To emphasize the fact that the potential (\ref{eq:Veps}) leads to a fully coupled system across scales, we introduce the auxiliary processes $X_t^{(j)} = X_t/\epsilon^{j}$, $j=0,\ldots, N$.  The SDE (\ref{eq:sde1}) can then be written as a fully coupled system of SDEs driven by the same Brownian motion $W_t$,
\begin{subequations}
\label{eq:multiscale_system}
\begin{align}
dX^{(0)}_t &= -\sum_{i=0}^{N}\epsilon^{-i}\nabla_{x_i}V\left(X^{(0)}_t, X^{(1)}_t,\ldots, X^{(N)}_t\right)\,dt + \sqrt{2\sigma}\,dW_t\\
dX^{(1)}_t &= -\sum_{i=0}^{N}\epsilon^{-i+1}\nabla_{x_i}V\left(X^{(0)}_t, X^{(1)}_t,\ldots, X^{(N)}_t\right)\,dt + \sqrt{\frac{2\sigma}{\epsilon^2}}\,dW_t\\
 \notag&\vdots\\
dX^{(N)}_t &= -\sum_{i=0}^{N}\epsilon^{-i+N}\nabla_{x_i}V\left(X^{(0)}_t, X^{(1)}_t, \ldots, X^{(N)}_t\right)\,dt + \sqrt{\frac{2\sigma}{\epsilon^{2N}}}\,dW_t 
\end{align}
\end{subequations}
in which case $X_t^{(0)}$ is considered to be a ``slow'' variable, while $X_t^{(1)}, \ldots X_t^{(N)}$ are ``fast'' variables.  In this paper, we first provide an explicit proof of the convergence of the solution of (\ref{eq:sde1}), $X_t^\epsilon$  to a coarse-grained (homogenized) diffusion process $X_t^0$ given by the unique solution of the following It\^{o} SDE:
\begin{equation}
\label{eq:homogenized}
dX_t^0 = -\mathcal{M}(X_t^0)\nabla \Psi(X_t^0)\,dt + \sigma\nabla\cdot\mathcal{M}(X_t^0)\,dt + \sqrt{2\sigma\mathcal{M}(X_t^0)}\,dW_t,
\end{equation}
where 
$$\Psi(x) = -\sigma\log Z(x),$$
denotes the free energy, for 
$$
  Z(x) = 
  \int_{\mathbb{T}^d}\cdots\int_{\mathbb{T}^d}e^{- V(x, y_1, \ldots, y_N)/\sigma}\,dy_1\ldots dy_N,
$$
and where $\mathcal{M}(x)$ is a symmetric uniformly positive definite tensor 
which is independent of $\epsilon$.  The formula of the effective diffusion tensor is given in Section \ref{sec:setup}.   The multiplicative noise is due to the full coupling between the macroscopic and the $N$ microscopic scales.\footnote{For additive potentials of the form (\ref{eq:arous_potential}), i.e. when there is no interaction between the macroscale and the microscales, the noise in the homogenized equation is additive.} In particular, we show that although the noise in $X_t^\epsilon$ is additive,  the coarse-grained dynamics will exhibit multiplicative noise,  arising from the interaction between the microscopic fluctuations and the thermal fluctuations.   For one-dimensional potentials, we are able to obtain an explicit expression for $\mathcal{M}(x)$, regardless of the number of scales involved.  In higher dimensions, $\mathcal{M}(x)$ will be expressed in terms of the solution of a recursive family of Poisson equations which can be solved only numerically.  We also obtain a variational characterization of the effective diffusion tensor, analogous to the standard variational characterisations for the effective conductivity tensor for multiscale conductivity problems, see for example \cite{jikov2012homogenization}.  Using this variational characterisation, we are able to derive tight bounds on the effective diffusion tensor, and in particular, show that as $N \rightarrow \infty$, the eigenvalues of the effective diffusion tensor will converge to zero, suggesting that diffusion in potentials with infinitely many scales will exhibit anomalous diffusion.  The focus of this paper is the rigorous analysis of the homogenization problem for (\ref{eq:sde1}) with $V^\epsilon$ given by (\ref{eq:Veps}).  In a companion paper, \cite{duncan2016noise} we study in detail qualitative properties of the solution to the homogenized equation (\ref{eq:homogenized}), including noise-induced transitions and noise-induced hysteresis behaviour.
\\\\
For the cases $N=1,2$ the main result of this paper, namely the derivation of the coarse grained dynamics, arises as a special case of  \cite[Chapter 3.7]{bensoussan1978asymptotic}.  However, to our knowledge, the results in this paper are the first which rigorously prove the existence of this limit for arbitrarily many scales.  A standard tool for the rigorous analysis of periodic homogenization problems is two-scale convergence \cite{allaire1992homogenization,nguetseng1989general}.  This theory was extended to study reiterated homogenization problems in \cite{allaire1996multiscale}.  The techniques developed in these papers do not seem to be directly applicable to the problem here for several reasons:  first, we work in an unbounded domain, second the operators that we consider, i.e. the infinitesimal generator of the diffusion process (\ref{eq:sde1}) cannot  be written in divergence form.  The application of two-scale convergence to our problem would require extending two-scale convergence to weighted $L^2$-spaces,  that depend both on the large and small scale parameters, something which does not seem to be straightforward.   Our method for proving the homogenization theorem, Theorem~\ref{thm:homog_main} is based on the well known martingale approach to proving limit theorems~\cite{bensoussan1978asymptotic,papanicolaou1977martingale,pardoux2001poisson}. The main technical difficulty in applying such well known techniques is the construction of the corrector field/compensator. This turns out to be a very tedious task, since we consider the case where all scales, the macroscale and the $N$-- microscales, are fully coupled.
\\\\
Note that although we consider the homogenized process $X_t^0$, the solution of (\ref{eq:homogenized_sde}) to be a coarse grained version of the multiscale process $X_t^\epsilon$, both processes have the same configuration space.  We must therefore distinguish this approach with other coarse graining methodologies where effective dynamics are obtained for a lower dimensional set of coordinates of the original system, see for example \cite{legoll2010effective,blanc2010finite,hijon2010mori,peters2008projection}.  Nonetheless, one can still draw parallels between our approach and method described in \cite{legoll2010effective,blanc2010finite}.  Indeed, when writing (\ref{eq:sde1}) in the form  (\ref{eq:multiscale_system}) we can still view the limit $\epsilon \rightarrow 0$ as a form of dimension reduction,  approximating the fast-slow system (\ref{eq:multiscale_system}) of $N+1$ processes $(X^{(0)}_t, X_t^{(1)}, \ldots,  X_t^{(N)})$ taking values in $\mathbb{R}^{dN}$ by a single $\mathbb{R}^d$--valued process $X_t^0$ whose effective dynamics are characterised by the free energy $Z(x)$ and an effective diffusion tensor
\\\\
Our assumptions on the potential $V^\epsilon$ in (\ref{eq:Veps}) guarantee that the full dynamics (\ref{eq:sde1}) is ergodic and reversible with invariant distribution $\pi^\epsilon$.  Furthermore, the coarse-grained dynamics (\ref{eq:homogenized}) is ergodic and reversible with respect to the equilibrium distribution 
$$\pi^0(x) = Z(x)/\overline{Z}.$$  
Indeed, the natural intepretation of $\Psi(x)=-\sigma \log Z(x)$ is as the free energy corresponding to the coarse-grained variable $X_t$.   The weak convergence of $X_t^\epsilon$ to $X_t^0$ implies in particular that the distribution of $X_t^\epsilon$ will converge weakly to that of $X_t^0$, uniformly over finite time intervals $[0,T]$,  which does not say anything about the convergence of the respective stationary distributions $\pi^\epsilon$ to $\pi^0$.  In Section \ref{sec:equilibrium} we study the equilibrium behaviour of $X_t^\epsilon$ and $X_t^0$ and show that the long-time limit $t\rightarrow \infty$ and the coarse-graining limit $\epsilon\rightarrow 0$ commute, and in particular that the equilibrium measure $\pi^\epsilon$ of $X_t^\epsilon$ converges in the weak sense to $\pi^0$.  We also study the rate of convergence to equilibrium for both processes, and we obtain bounds relating the two rates.  This question is naturally related to the study of the Poincar\'{e} constants for the full and coarse--grained potentials.
\\\\
The rest of the paper is organized as follows.  In Section \ref{sec:setup} we state the assumptions on the structure of the multiscale potential and state the main results of this paper.  In Section \ref{sec:properties} we study properties of the effective dynamics, providing expressions for the diffusion tensor in terms of a variational formula, and derive various bounds.  In Section \ref{sec:equilibrium}  we study properties of the effective potential, and prove convergence of the equilibrium distribution of $X_t^\epsilon$ to the coarse-grained equilibrium distribution $\pi^0$.



\section{Setup and Statement of Main Results}
\label{sec:setup}
In this section we provide conditions on the multiscale potential which are required to obtain a well-defined homogenization limit.  In particular, we shall highlight assumptions necessary for the ergodicity of the full model as well as the coarse-grained dynamics.
\\\\
We will consider the overdamped Langevin dynamics
\begin{align}
\label{eq:sde_multiscale}
dX^\epsilon_t = -\nabla V^{\epsilon}(X_t^\epsilon)\,dt + \sqrt{2\sigma}\,dW_t,
\end{align}
where $V^\epsilon(x)$ is of the form
\begin{equation}
\label{eq:multiscale_potential}
  V^\epsilon(x) = V\left(x, \frac{x}{\epsilon}, \frac{x}{\epsilon^2}, \ldots, \frac{x}{\epsilon^N}\right),
\end{equation}
and where $V:\mathbb{R}^d\times\mathbb{T}^d\times\ldots\times \mathbb{T}^d \rightarrow \mathbb{R}$ is a smooth function which is assumed to be periodic with period $1$ in all but its first argument.   The multiscale potentials we consider in this paper can be viewed as a smooth confining potential perturbed by smooth, bounded fluctuations which become increasingly rapid as $\epsilon \rightarrow 0$, see Figure \ref{fig:potential} for an illustration.   More specifically, we will assume that the multiscale potential $V$ satisfies the following assumptions.\footnote{We remark that we can always write (\ref{eq:Veps}) in the form (\ref{eq:potential_split}) where $V_0(x) = \int_{\mathbb{T}^d}\cdots\int_{\mathbb{T}^d}V(x,x_1,\ldots, x_N)\,dx_1\,\ldots dx_N$. }
\begin{assumption}
\label{ass:potential1}
 The potential $V$ is given by
\begin{equation}
\label{eq:potential_split}
V(x_0, x_1, \ldots, x_N) = V_0(x_0) + V_1(x_0, x_1, \ldots, x_N),
\end{equation}
where:
\begin{enumerate}
\item $V_0$ is a smooth confining potential, i.e. $e^{-V_0(x)} \in L^1(\mathbb{R}^d)$ and  $V_0(x) \rightarrow \infty$ as $|x|\rightarrow \infty$.
\item The perturbation $V_1(x_0, x_1, \ldots, x_N)$ is smooth and bounded uniformly in $x$, independently of $\epsilon$.
\item  There exists $C > 0$ such that $\left\lVert \nabla^2 V_0 \right\rVert_{L^\infty(\mathbb{R}^d)} \leq C$.
\end{enumerate}
\end{assumption}
\begin{remark}
We note that Assumption 3 quite stringent, since it implies that $V_0$ is quadratic to leading order.  This assumption is also made in \cite{pardoux2001poisson}.  In cases where the process $X^{\epsilon}_0 \sim \pi^\epsilon$, i.e. the process is started in stationary, this condition can be relaxed considerably. 
\end{remark}
The infinitesimal generator $\mathcal{L}^\epsilon$ of $X_t^\epsilon$ is the  selfadjoint extension of
\begin{equation}
\label{eq:multiscale_generator}
  \mathcal{L}^\epsilon f(x) = -\nabla V^\epsilon(x)\cdot\nabla f(x) + \sigma\Delta f(x),\quad f\in C^\infty_0(\mathbb{R}^d).
\end{equation}
Since $V_0$ is confining, it follows that the corresponding overdamped Langevin equation
\begin{equation}
\label{eq:unperturbed_sde}
  dZ_t = -\nabla V_0(Z_t)\,dt + \sqrt{2\sigma}dW_t,
\end{equation}
is ergodic with unique stationary distribution
$$
  \pi_{ref}(x) = \frac{1}{Z}\exp(-V_0(x)/\sigma), \quad Z = \int_{\mathbb{R}^d}e^{-V_0(x)/\sigma}\,dx.
$$
Since $V_1$ is bounded uniformly, by Assumption \ref{ass:potential1}, it follows that the potential $V^\epsilon$ is also confining, and therefore  $X^\epsilon_t$ is ergodic, possessing a unique invariant distribution given by $\pi^\epsilon(x) = \frac{e^{-V^\epsilon(x)/\sigma}}{Z^\epsilon},$  where $Z^\epsilon = \int_{\mathbb{R}^d} e^{-V^\epsilon(x)/\sigma}$.  Moreover, noting that the generator $\mathcal{L}^\epsilon$ of $X_t^\epsilon$ can be written as
$$
  \mathcal{L}^\epsilon f(x) = \sigma\,e^{V^\epsilon(x)}\nabla\cdot\left(e^{-V^\epsilon(x)}\nabla f(x)\right), \quad f \in C^2_0(\mathbb{R}^d).
$$
it follows that $\pi^\epsilon$ is reversible with respect to the dynamics $X_t^\epsilon$, c.f. \cite{pavliotis2014stochastic,gardiner1985handbook}.\\\\
Our main objective in this paper is to study the dynamics (\ref{eq:sde_multiscale}) in the limit of infinite scale separation $\epsilon\rightarrow 0$. Having introduced the model and the assumptions we can now present the main result of the paper.

\begin{theorem}[Weak convergence of $X_t^\epsilon$ to $X^0_t$]
\label{thm:homog_main}
Suppose that Assumption \ref{ass:potential1} holds and let $T > 0$, and the initial condition $X_0$ is distributed according to some probability distribution $\nu$ on $\mathbb{R}^d$.  Then as $\epsilon\rightarrow 0$, the process $X_t^\epsilon$ converges weakly in $(C[0,T] ; \mathbb{R}^d)$ to the diffusion process $X_t^0$ with  generator defined by
\begin{equation}
\label{eq:homogenized_generator}
\mathcal{L}^0 f(x) = \frac{\sigma}{Z(x)}\nabla_x\cdot\left(Z(x)\mathcal{M}(x)\nabla_x f(X) \right),\quad f\in C^2_0(\mathbb{R}^d),
\end{equation}
and where
\begin{equation}
\label{eq:effective_potential}
Z(x) = \int_{\mathbb{T}^d}\cdots\int_{\mathbb{T}^d}e^{- V(x, x_1,\ldots, x_N)/\sigma}\,dx_N\ldots dx_1
\end{equation}
and 
\begin{equation}
\label{eq:effective_diffusion}
\mathcal{M}(x) = \frac{1}{Z(x)}\int_{\mathbb{T}^d}\cdots\int_{\mathbb{T}^d}(1 + \nabla_{x_N}\theta_N^\top)\cdots(1+\nabla_{x_1}\theta_1^\top)e^{- V(x,x_1,\ldots,x_N)/\sigma}\,dx_N\cdots dx_1. 
\end{equation}
The correctors are defined recursively as follows:   define $\theta_{N-k}$ to be the weak solution  of the PDE 
\begin{equation}
\label{eq:corrector}
  \nabla_{x_{N-k}}\cdot(\mathcal{K}_{N-k}(x_0, \ldots, x_{N-k})(\nabla_{x_{N-k}}\theta_{x_{N-k}}(x_0,\ldots, x_{N-k})+I)) =0,
\end{equation}
where $\theta_{N-k}(x_0,\ldots, x_{N-k-1},\cdot) \in H^1(\mathbb{T}^d)$ and where
\begin{equation}
\begin{aligned}
\mathcal{K}_{N-k}(&x_0,\ldots, x_{N-k})\\ = &\int_{\mathbb{T}^d}\cdots\int_{\mathbb{T}^d}(I + \nabla_N\theta_N^\top)\cdots(I+\nabla_{N-k+1}\theta_{N-k+1}^\top)e^{- V/\sigma}\,dx_N\ldots dx_{N-k+1},
\end{aligned}
\end{equation}
for $k = 1, \ldots, N-1$, and
\begin{equation}
\mathcal{K}_N(x, x_1,\ldots, x_N) = e^{- V(x, x_1,\ldots, x_N)/\sigma}I
\end{equation}
where $I$ denotes the identity matrix in $\mathbb{R}^{d\times d}$.  Provided that Assumptions \ref{ass:potential1} hold, Proposition \ref{prop:existence_poisson_general} guarantees existence and uniqueness (up to a constant) of solutions to the coupled Poisson equations (\ref{eq:corrector}).  Furthermore, the solutions will depend smoothly on the slow variable $x$ as well as the fast variables $y_1, \ldots, y_N$. The process $X_t^0$ is the unique solution to the It\^{o} SDE
\begin{equation}
\label{eq:homogenized_sde}
dX_t^0 = -\mathcal{M}(X_t^0)\nabla\Psi(X_t^0)\,dt +\sigma\nabla\cdot \mathcal{M}(X_t^0)\,dt+ \sqrt{2\sigma\mathcal{M}(X_t^0)}\,dW_t,
\end{equation}
where 
$$
\Psi(x) = -\sigma\log Z(x) = -\sigma\log\left(\int_{\mathbb{T}^d}\cdots\int_{\mathbb{T}^d} e^{-V(x, y_1, \ldots, y_N)/\sigma}\,dy_1\ldots dy_N\right).
$$
\end{theorem}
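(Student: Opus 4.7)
The plan is to prove Theorem~\ref{thm:homog_main} via the martingale (perturbed test function) method: for each $f\in C^2_0(\bbR^d)$ I construct an expansion $f^\epsilon = f + \sum_{k=1}^{2N}\epsilon^k \phi_k$, with $\phi_k=\phi_k(x_0,y_1,\ldots,y_N)$ smooth and periodic in $y_1,\ldots,y_N\in\bbT^d$, chosen so that $\mathcal{L}^\epsilon f^\epsilon - \mathcal{L}^0 f = O(\epsilon)$ uniformly on compact sets. Combined with tightness of $\{X^\epsilon_\cdot\}$ in $C([0,T];\bbR^d)$ and uniqueness of the martingale problem for $\mathcal{L}^0$, this yields the asserted weak convergence.

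First I would unfold the generator using $\nabla_x = \nabla_{x_0} + \sum_{i=1}^N \epsilon^{-i}\nabla_{y_i}$, so that acting on functions of $(x_0,y_1,\ldots,y_N)$ evaluated at $y_i=x/\epsilon^i$ it splits as $\mathcal{L}^\epsilon = \sum_{k=0}^{2N}\epsilon^{-k}\mathcal{L}^{(k)}$, with the top-order piece $\mathcal{L}^{(2N)}$ acting only in $y_N$ as the divergence-form operator $\sigma e^{V/\sigma}\nabla_{y_N}\cdot(e^{-V/\sigma}\nabla_{y_N}\cdot)$, whose kernel consists of the constants in $y_N$. Substituting the expansion and matching powers of $\epsilon^{-k}$ from the highest down, one obtains at each stage a cell problem on $\bbT^d$ in the next-fastest variable. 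The solvability condition (Fredholm alternative), expressed as integration against the local invariant density in the fastest remaining variable, produces the averaged coefficient $\mathcal{K}_{N-k}$ that is propagated to the next slower scale; the vector-valued cell solutions are exactly the correctors $\theta_{N-k}$ of (\ref{eq:corrector}). Iterating from $k=0$ down to the macroscopic level, the solvability condition at order $\epsilon^0$ identifies $\mathcal{L}^0 f = \sigma Z(x)^{-1}\nabla_x\cdot(Z(x)\mathcal{M}(x)\nabla_x f)$ with $\mathcal{M}(x)$ given by (\ref{eq:effective_diffusion}); the divergence-form/reversible structure is preserved at every averaging step, which is why the limit has the self-adjoint form (\ref{eq:homogenized_generator}) and can be rewritten in It\^o form (\ref{eq:homogenized_sde}).

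With the correctors in hand, the rigorous step is standard. Tightness in $C([0,T];\bbR^d)$ follows from Assumption~\ref{ass:potential1}: since $V_1$ and $\nabla V_1$ are bounded uniformly in $\epsilon$ while $V_0$ is confining with bounded Hessian, one obtains uniform-in-$\epsilon$ moment bounds $\sup_{\epsilon,t\le T}\mathbb{E}|X^\epsilon_t|^p < \infty$ and Kolmogorov-type estimates on increments. Applying It\^o's formula to $f^\epsilon(X^\epsilon_t, X^\epsilon_t/\epsilon,\ldots,X^\epsilon_t/\epsilon^N)$ gives a martingale whose drift equals $\mathcal{L}^\epsilon f^\epsilon = \mathcal{L}^0 f(X^\epsilon_s) + O(\epsilon)$, with the remainder controlled using boundedness of $\phi_k$ and of its derivatives on $\bbR^d\times(\bbT^d)^N$ (a consequence of the elliptic regularity in Proposition~\ref{prop:existence_poisson_general} together with smoothness of $V$ and boundedness of $V_1$). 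Passing to the limit along any convergent subsequence, and using that $f^\epsilon\to f$ and $\epsilon\phi_k(X^\epsilon_t,\ldots)\to 0$ uniformly, shows that any limit point solves the martingale problem for $\mathcal{L}^0$. Uniqueness of that martingale problem, which holds because $\mathcal{M}$ is smooth and uniformly positive definite (Section~\ref{sec:properties}) and $Z\in C^\infty(\bbR^d)$ is bounded away from $0$ on compacta, identifies the limit as the law of $X^0_t$.

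The principal obstacle is the recursive construction of the correctors $\theta_{N-k}$ and the matrices $\mathcal{K}_{N-k}$, and in particular propagating the uniform ellipticity and smoothness needed for the Fredholm alternative through all $N$ stages. At each level one must verify that $\mathcal{K}_{N-k}$ inherits positive definiteness from $\mathcal{K}_{N-k+1}$ (so that (\ref{eq:corrector}) is a well-posed divergence-form equation on $\bbT^d$), that the solution $\theta_{N-k}$ depends smoothly on the slower parameters $(x_0,y_1,\ldots,y_{N-k-1})$ so that one can differentiate it in the next step of the expansion, and that the accumulated divergence-form structure gives rise, at the final step, to the self-adjoint generator (\ref{eq:homogenized_generator}). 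This bookkeeping, which would be carried out by induction on $k$ using parametric elliptic estimates on the torus, is the content of Proposition~\ref{prop:existence_poisson_general} and is precisely what distinguishes the genuinely $N$-scale problem from the two- and three-scale cases covered by \cite{bensoussan1978asymptotic}.
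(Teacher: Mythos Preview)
Your overall architecture---perturbed test function expansion, tightness, identification of the limit via the martingale problem, uniqueness via Stroock--Varadhan---is exactly the route taken in the paper, and your description of the recursive corrector construction and of the inductive propagation of ellipticity and smoothness through the tensors $\mathcal{K}_{N-k}$ is accurate.

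There is, however, a genuine gap in your tightness step. You write that moment bounds and Kolmogorov increment estimates follow from Assumption~\ref{ass:potential1} because ``$V_1$ and $\nabla V_1$ are bounded uniformly in $\epsilon$''. But the drift in (\ref{eq:sde_multiscale}) is $-\nabla_x V^\epsilon(x)=-\sum_{i=0}^N \epsilon^{-i}\nabla_{x_i}V$, which is of order $\epsilon^{-N}$; boundedness of $\nabla V_1$ as a function on $\mathbb{R}^d\times(\mathbb{T}^d)^N$ does not make the drift of $X^\epsilon$ bounded in $\epsilon$. Direct Gronwall/Kolmogorov arguments on the original SDE therefore do not give uniform-in-$\epsilon$ control of $\sup_{t\le T}|X^\epsilon_t|$ or of increments. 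The paper does not attempt this. Instead it reuses the corrector expansion for tightness: it takes $\phi_0(x)=\log(1+|x|^2)$ for the first Billingsley condition and $\phi_0(x)=x$ for the second, builds the associated $\phi^\epsilon=\phi_0+\epsilon\phi_1+\cdots+\epsilon^{2N-1}\phi_{2N-1}$ from Proposition~\ref{prop:test_functions}, and applies It\^o's formula to $\phi^\epsilon(X^\epsilon_t)$. By construction all $O(\epsilon^{-k})$ drift terms cancel, and what remains is controlled by the pointwise estimates (\ref{eq:pointwise_estimates}) together with Assumption~\ref{ass:potential1}(3); the bounded-Hessian hypothesis on $V_0$ is used precisely at this point (see (\ref{eq:tight_bound1}) and (\ref{eq:tight_bound2})), not in the identification step. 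In short, the correctors are needed twice---once for tightness and once for identification---and your proposal only uses them for the latter.

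A minor related point: for the identification step you say the $\phi_k$ and their derivatives are bounded on $\mathbb{R}^d\times(\mathbb{T}^d)^N$. That is true when $\phi_0\in C^\infty_c$, but for the tightness argument above one needs the growth estimate (\ref{eq:pointwise_estimates}), since $\phi_0$ is no longer compactly supported.
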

The proof, which closely follows that of \cite{pardoux2001poisson} is postponed to Section \ref{sec:homog_proof}.  Theorem \ref{thm:homog_main} confirms the intuition that the coarse-grained dynamics is driven by the free energy.  On the other hand, the corresponding SDE has multiplicative noise given by a space dependent diffusion tensor $\mathcal{M}(x)$. We can show that the homogenized process (\ref{eq:homogenized_sde}) is ergodic with unique invariant distribution
$$
  \pi^0(x) = \frac{Z(x)}{\overline{Z}} = \frac{1}{\overline{Z}}e^{-\Psi(x)/\sigma},\quad \mbox{ where } \quad \overline{Z} = \int_{\mathbb{R}^d}Z(x)\,dx.
$$
It is important to note that the reversibility of $X_t^\epsilon$ with respect to $\pi^\epsilon$ is preserved under the homogenization procedure.  In particular, the homogenized SDE (\ref{eq:homogenized_sde}) will be reversible with respect to the Gibbs measure $\pi^0(x)$.  Indeed, (\ref{eq:homogenized_sde}) has the form of the most general diffusion process that is reversible with respect to $\pi^0(x)$, see \cite[Sec. 4.7]{pavliotis2014stochastic}.
\\\\ 
While Theorem \ref{thm:homog_main} only characterises the convergence of $X_t^\epsilon$ to $X_t^0$ over finite time intervals, quite often we are interested in the equilibrium behaviour and in the rate of convergence to equilibrium for the coarse--grained process.  In Section \ref{sec:equilibrium} we study the properties of the invariant distributions $\pi^\epsilon$ and $\pi^0$ of $X_t^\epsilon$ and $X_t^0$, respectively.  In particular, we show that $\pi^\epsilon$ converges to $\pi^0$ in the sense of weak convergence of probability measures, and moreover characterise the rate of convergence to equilibrium for both $X_t^\epsilon$ and $X_t^0$ in terms of $\epsilon$, the parameter which measures scale separation. 
\\\\
As is characteristic with homogenization problems,  when $d=1$ we can obtain, up to quadratures, an explicit expression for the homogenized SDE.   In this case, we obtain explicit expressions for the correctors $\theta_1, \ldots, \theta_N$, so that the intermediary coefficients $\mathcal{K}_1, \ldots, \mathcal{K}_N$  can be expressed as 
$$
\mathcal{K}_i(x_0, x_1, \ldots, x_{i}) = \left(\int e^{V(x_0, x_1, \ldots, x_{i}, x_{i+1}, \ldots, x_N)/\sigma}\,dx_{i+1}\ldots dx_N\right)^{-1},\quad i=1,\ldots, N.
$$

\begin{proposition}[Effective Dynamics in one dimension]
When $d=1$, the effective diffusion coefficient $\mathcal{M}(x)$ in (\ref{eq:homogenized_sde}) is given by
\begin{equation}
\label{eq:eff_diff_1d}
  \mathcal{M}(x) = \frac{1}{Z_1(x)\widehat{Z}_1(x)},
\end{equation}
where 
$$
Z_1(x) = \int\cdots\int e^{-V_1(x, x_1,\ldots, x_N)/\sigma}\,dx_1\ldots dx_N,
$$
and 
$$
\widehat{Z}_1(x) = \int\cdots\int e^{V_1(x, x_1,\ldots, x_N)/\sigma}\,dx_1\ldots dx_N.
$$
\end{proposition}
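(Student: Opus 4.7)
The plan is to solve each corrector PDE (\ref{eq:corrector}) explicitly in one dimension and telescope the resulting recursion. In $d=1$ the corrector equation at level $i$ reduces to the ODE
$$\partial_{x_i}\bigl(\mathcal{K}_i(x_0,\ldots,x_i)(\partial_{x_i}\theta_i+1)\bigr) = 0$$
on $\mathbb{T}^1$, so the quantity $\mathcal{K}_i(\partial_{x_i}\theta_i + 1) = C_i(x_0,\ldots,x_{i-1})$ is constant in $x_i$. Solving for $\partial_{x_i}\theta_i$ and using periodicity of $\theta_i$ in $x_i$ (which forces $\int_{\mathbb{T}}\partial_{x_i}\theta_i\,dx_i = 0$) yields
$$C_i(x_0,\ldots,x_{i-1}) = \left(\int_{\mathbb{T}}\frac{dx_i}{\mathcal{K}_i(x_0,\ldots,x_i)}\right)^{-1}.$$

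Next I would derive a closed recursion for the $\mathcal{K}_i$ themselves. Since $\theta_i$ depends only on $(x_0,\ldots,x_i)$, the factor $(1+\partial_{x_i}\theta_i)$ may be pulled through the inner integrations over $x_{i+1},\ldots,x_N$ appearing in the definition of $\mathcal{K}_{i-1}$, giving
$$\mathcal{K}_{i-1}(x_0,\ldots,x_{i-1}) = \int_{\mathbb{T}}(1+\partial_{x_i}\theta_i)\,\mathcal{K}_i\,dx_i = C_i(x_0,\ldots,x_{i-1}).$$
Combining the two displays produces the recursion $\mathcal{K}_{i-1} = \left(\int_{\mathbb{T}} dx_i/\mathcal{K}_i\right)^{-1}$. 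Starting from the base case $\mathcal{K}_N = e^{-V/\sigma}$ and inducting downward on $i$ then establishes the explicit formula
$$\mathcal{K}_i(x_0,\ldots,x_i) = \left(\int_{\mathbb{T}^{N-i}} e^{V(x_0,x_1,\ldots,x_N)/\sigma}\,dx_{i+1}\cdots dx_N\right)^{-1},$$
matching the intermediate expression asserted just before the proposition.

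Finally, setting $i=0$ and recalling from (\ref{eq:effective_diffusion}) that $Z(x)\mathcal{M}(x) = \mathcal{K}_0(x)$, I would split $V = V_0 + V_1$ and factor the $x$-dependent Boltzmann weights $e^{\pm V_0(x)/\sigma}$ out of both $Z(x)$ and $\mathcal{K}_0(x)^{-1}$. This gives $Z(x) = e^{-V_0(x)/\sigma}Z_1(x)$ and $\mathcal{K}_0(x)^{-1} = e^{V_0(x)/\sigma}\widehat{Z}_1(x)$; the $V_0$ factors cancel and the claimed identity $\mathcal{M}(x) = 1/(Z_1(x)\widehat{Z}_1(x))$ follows.

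The argument is essentially an exercise once one notices that in one dimension the divergence-form corrector PDE admits an explicit first integral. No analytical estimates are needed, since Proposition \ref{prop:existence_poisson_general} already supplies existence, uniqueness and smoothness of the $\theta_i$. The only point requiring genuine care is the book-keeping of which variables each $\theta_i$ and $\mathcal{K}_i$ depends on, so that the successive factorizations of the nested integrals over $\mathbb{T}^{N}$ are rigorously justified; this is the main (minor) obstacle.
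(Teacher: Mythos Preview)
Your argument is correct and follows exactly the route the paper indicates: the paper simply asserts the closed form for the intermediary coefficients $\mathcal{K}_i$ in one dimension (the display immediately preceding the proposition) without giving the derivation, and your proof supplies precisely that derivation by integrating the first-order divergence-form ODE, enforcing periodicity to fix the constant, and telescoping the resulting harmonic-mean recursion $\mathcal{K}_{i-1}^{-1}=\int_{\mathbb{T}}\mathcal{K}_i^{-1}\,dx_i$. The final cancellation of the $e^{\pm V_0/\sigma}$ factors is handled correctly; the only cosmetic point is that the paper stops its indexing at $\mathcal{K}_1$, so what you call $\mathcal{K}_0$ is literally $Z(x)\mathcal{M}(x)$, but this is exactly what you use.
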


Equation (\ref{eq:eff_diff_1d}) generalises the expression for the effective diffusion coefficient for a two-scale potential that was derived in \cite{zwanzig1988diffusion} without any appeal to homogenization theory.    In higher dimensions we will not be able to obtain an explicit expression for $\mathcal{M}(x)$, however we are able to obtain bounds on the eigenvalues of $\mathcal{M}(x)$.  In particular, we are able to show that (\ref{eq:eff_diff_1d}) acts as a lower bound for the eigenvalues of $\mathcal{M}(x)$.  
\begin{proposition}
\label{corr:unif_pos_def}
The effective diffusion tensor $\mathcal{M}$ is uniformly positive definite over $\mathbb{R}^d$.  In particular,
\begin{equation}
\label{eq:M_bounds}
0 < \,e^{-{osc}(V_1)/\sigma} \leq \frac{1}{Z_1(x)\widehat{Z}_1(x)} \leq e\cdot\mathcal{M}(x)e \leq 1,
\end{equation}
for all $e \in \mathbb{R}^d$ such that $|e|=1$, where
$$
{osc}(V_1) = \sup_{\substack{x\in\mathbb{R}^d, \\ y_1,\ldots, y_N \in \mathbb{T}^d}} V_1(x,y_1,\ldots, y_N) - \inf_{\substack{x\in\mathbb{R}^d, \\ y_1,\ldots, y_N \in \mathbb{T}^d}} V_1(x,y_1,\ldots, y_N)
$$
\end{proposition}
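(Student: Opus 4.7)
The plan is to derive the three nontrivial inequalities by a downward induction on the scale index $i$, using the direct and dual Wiener variational characterisations of the homogenised matrices $\mathcal{K}_i$ at each level. For each unit vector $e$ and each $i$, let $\psi_{i+1}=\psi_{i+1}^{(e)}$ denote the scalar corrector in direction $e$, i.e.\ the $e$-combination of the components of $\theta_{i+1}$, which solves $\nabla_{x_{i+1}}\cdot \mathcal{K}_{i+1}(\nabla_{x_{i+1}}\psi_{i+1}+e)=0$ on $\mathbb{T}^d$. Multiplying this equation by $\psi_{i+1}$ and integrating by parts yields the Dirichlet-type identity $e^\top \mathcal{K}_i e = \int_{\mathbb{T}^d}(e+\nabla_{x_{i+1}}\psi_{i+1})^\top \mathcal{K}_{i+1}(e+\nabla_{x_{i+1}}\psi_{i+1})\,dx_{i+1}$. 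By induction this shows that each $\mathcal{K}_i$ is symmetric positive-definite and identifies it as the effective conductivity of the periodic cell problem on $\mathbb{T}^d$ with slow parameters $x_0,\ldots, x_i$. From standard homogenisation theory I would therefore invoke both the direct variational formula $e^\top \mathcal{K}_i e = \min_\phi \int(e+\nabla_{x_{i+1}}\phi)^\top \mathcal{K}_{i+1}(e+\nabla_{x_{i+1}}\phi)\,dx_{i+1}$ and its dual $e^\top \mathcal{K}_i^{-1} e = \min_\tau \int \tau^\top \mathcal{K}_{i+1}^{-1}\tau\,dx_{i+1}$, the dual minimum being over $\tau\in L^2(\mathbb{T}^d;\mathbb{R}^d)$ with $\nabla_{x_{i+1}}\cdot\tau=0$ and cell average $e$; both are derived independently in Section~\ref{sec:properties}.

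For the upper bound $e\cdot\mathcal{M}(x)e\leq 1$, I would substitute the trivial test function $\phi \equiv 0$ into the direct formula at each level to obtain $e^\top \mathcal{K}_i e \leq \int e^\top \mathcal{K}_{i+1} e\,dx_{i+1}$, and iterate down to the base $\mathcal{K}_N = e^{-V/\sigma}I$ to get $e^\top \mathcal{K}_0 e \leq Z(x)$; dividing by $Z(x)$ yields the claim. Symmetrically, for the lower bound $1/(Z_1\widehat{Z}_1)\leq e\cdot\mathcal{M}(x)e$, I would feed the constant admissible flux $\tau \equiv e$ into the dual formula at each level, obtaining $e^\top \mathcal{K}_i^{-1} e \leq \int e^\top \mathcal{K}_{i+1}^{-1}e\,dx_{i+1}$, and iterate down to $\mathcal{K}_N^{-1} = e^{V/\sigma}I$ to deduce $e^\top \mathcal{K}_0^{-1} e \leq \widehat{Z}(x)$, where $\widehat{Z}(x) := \int_{(\mathbb{T}^d)^N}e^{V(x,x_1,\ldots,x_N)/\sigma}\,dx_1\cdots dx_N$. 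Symmetric positive-definiteness of $\mathcal{K}_0$ converts this into $e^\top \mathcal{K}_0 e \geq 1/\widehat{Z}(x)$; dividing by $Z(x)$ and using the elementary cancellation $Z(x)\widehat{Z}(x) = Z_1(x)\widehat{Z}_1(x)$, where the factors $e^{\pm V_0(x)/\sigma}$ drop out, produces the lower Wiener bound.

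Finally, $e^{-\mathrm{osc}(V_1)/\sigma}\leq 1/(Z_1\widehat{Z}_1)$ is immediate from the pointwise estimates $Z_1(x)\leq e^{-\inf V_1/\sigma}$ and $\widehat{Z}_1(x)\leq e^{\sup V_1/\sigma}$ on the unit-volume tori, and strict positivity reduces to the uniform boundedness of $V_1$ granted by Assumption~\ref{ass:potential1}. The only delicate ingredient of the whole argument is verifying the dual variational identity at each intermediate scale: the cell conductivity $\mathcal{K}_{i+1}$ depends nontrivially on the slow parameters $x_0,\ldots, x_i$, so one must check that the Hilbert-space minimisation carries through with measurable dependence on those parameters. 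Once the smoothness of the correctors in the slow variables granted by Proposition~\ref{prop:existence_poisson_general} is available, this is a routine extension of the single-scale proof and the induction proceeds cleanly.
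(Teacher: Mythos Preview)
Your argument is correct, and for the upper bound $e\cdot\mathcal{M}(x)e\leq 1$ it coincides with the paper's: both substitute the trivial test function into the direct variational formula of Lemma~\ref{lemma:variational}. The lower bound, however, is obtained differently. The paper does \emph{not} derive the dual (flux) variational characterisation you invoke --- Section~\ref{sec:properties} contains only the direct formula --- so your claim that ``both are derived independently in Section~\ref{sec:properties}'' is inaccurate. Instead, the paper (Lemma~\ref{lem:corr_existence_ellipticity}) first establishes the quadratic-form representation
\[
\mathcal{K}_{i}=\int\cdots\int (I+\nabla_{i+1}\theta_{i+1})^{\top}\cdots(I+\nabla_{N}\theta_{N})^{\top}\,e^{-V/\sigma}\,(I+\nabla_{N}\theta_{N})\cdots(I+\nabla_{i+1}\theta_{i+1})\,dx_{N}\cdots dx_{i+1},
\]
observes that the product $(I+\nabla_{N}\theta_{N})\cdots(I+\nabla_{i+1}\theta_{i+1})$ integrates to the identity, and then applies the Cauchy--Schwarz inequality once to obtain $|v|^{2}\leq (v\cdot\mathcal{K}_{i}v)\,\widehat{Z}_{i}$ directly. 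Your route --- iterating the harmonic-mean (Wiener) bound obtained from the constant flux $\tau\equiv e$ in the dual problem --- is equally valid and pleasingly symmetric with the upper-bound argument, but it requires you to supply the dual variational identity at each scale yourself; the paper's Cauchy--Schwarz argument is more self-contained since it needs only the quadratic-form representation, which falls out of the corrector equations by integration by parts. The elementary inequality $e^{-\mathrm{osc}(V_1)/\sigma}\leq 1/(Z_1\widehat{Z}_1)$ is handled identically in both.
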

This result follows immediately from Lemmas \ref{lem:corr_existence_ellipticity} and \ref{lemma:variational} which are proved in Section \ref{sec:properties}.
\begin{remark}
The bounds in (\ref{eq:M_bounds}) highlight the two extreme possibilities for fluctuations occurring in the potential $V^\epsilon$.  The inequality $\frac{1}{Z_1(x)\widehat{Z}_1(x)} \leq e\cdot \mathcal{M}(x)e$  is attained when the multiscale fluctuations $V_1(x_0, \ldots, x_N)$ are constant in all but one dimension (e.g. the analogue of a layered composite material, \cite[Sec 5.4]{cioranescu2000introduction}, \cite[Sec 12.6.2]{pavliotis2008multiscale}).  In the other extreme, the inequality $e\cdot\mathcal{M}(x) e = 1$ is attained in the abscence of fluctuations, i.e. when $V_1 = 0$.  
\end{remark}

\begin{remark}
Clearly, the lower bound in (\ref{eq:M_bounds}) becomes exponentially small in the limit as $\sigma \rightarrow 0$.
\end{remark}

While Theorem \ref{thm:homog_main}  guarantees weak convergence of $X_t^\epsilon$ to $X_t^0$ in $C([0,T];\mathbb{R}^d)$ for fixed $T$,  it makes no claims regarding the convergence at infinity, i.e. of $\pi^\epsilon$ to $\pi^0$.  However, under the conditions of Assumption \ref{ass:potential1} we can show that $\pi^\epsilon$ converges weakly to $\pi^0$, so that the  $T\rightarrow \infty$ and $\epsilon\rightarrow 0$ limits commute, in the sense that:
$$
  \lim_{\epsilon \rightarrow 0}\lim_{T\rightarrow \infty}\mathbb{E}[f(X_T^\epsilon)] = \lim_{T\rightarrow \infty}\lim_{\epsilon \rightarrow 0}\mathbb{E}[f(X_T^\epsilon)],
$$
for all $f \in L^2(\pi_{ref})$.

\begin{proposition}[Weak convergence of $\pi^\epsilon$ to $\pi^0$]
\label{prop:weak_limit}
Suppose that Assumption \ref{ass:potential1} holds.  Then for all $f \in L^2(\pi_{ref})$,
\begin{equation}
\label{eq:error_weak}
  \int_{\mathbb{R}^d} f(x)\,\pi^{\epsilon}(dx) \rightarrow \int_{\mathbb{R}^d} f(x)\pi^0(dx),
\end{equation}
as $\epsilon\rightarrow 0$.
\end{proposition}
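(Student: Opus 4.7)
The plan is to reduce the claim to a multi-scale periodic averaging lemma and then apply it to both the numerator and denominator of the density of $\pi^\epsilon$. Writing
\begin{equation*}
\int f\, d\pi^\epsilon = \frac{\int_{\mathbb{R}^d} f(x)\, e^{-V^\epsilon(x)/\sigma}\, dx}{Z^\epsilon},
\end{equation*}
the proposition follows from the two limits (a) $Z^\epsilon \to \overline{Z}$ and (b) $\int_{\mathbb{R}^d} f(x)\, e^{-V^\epsilon(x)/\sigma}\, dx \to \int_{\mathbb{R}^d} f(x)\, Z(x)\, dx$ for $f \in L^2(\pi_{ref})$.

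Both statements are consequences of the following averaging lemma, which I would prove first: for any $g(x, y_1,\ldots, y_N)$ continuous in $x\in\mathbb{R}^d$ and periodic in each $y_i\in\mathbb{T}^d$, satisfying $|g(x,y)|\le h(x)$ with $h\in L^1(\mathbb{R}^d)$,
\begin{equation*}
\int_{\mathbb{R}^d} g\bigl(x, x/\epsilon, \ldots, x/\epsilon^N\bigr)\, dx \;\xrightarrow{\epsilon\to 0}\; \int_{\mathbb{R}^d}\int_{(\mathbb{T}^d)^N} g(x, y_1,\ldots, y_N)\, dy_1\cdots dy_N\, dx.
\end{equation*}
I would establish this by first truncating to a large ball (the envelope $h\in L^1$ controls the tail uniformly in $\epsilon$), then approximating $g$ uniformly on the compact truncated domain by finite tensor-product sums $\sum_k \varphi_k(x)\psi_{k,1}(y_1)\cdots\psi_{k,N}(y_N)$ via Stone--Weierstrass, and finally proving the identity for a single tensor product by Fourier expansion: writing $\psi_i(y)=\sum_{m\in\mathbb{Z}^d}\widehat\psi_i(m)e^{2\pi i m\cdot y}$, only the all-zero frequency contributes $\overline\psi_1\cdots\overline\psi_N$, while for every other frequency tuple the phase $m_1/\epsilon+\cdots+m_N/\epsilon^N$ diverges thanks to the separation of scales (its magnitude is $\sim |m_j|/\epsilon^j$ where $j$ is the largest index with $m_j\neq 0$), so a standard Riemann--Lebesgue argument against $\varphi\in C_c(\mathbb{R}^d)$ kills the remaining terms.

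Claim (a) follows by applying the lemma to $g(x,y)=e^{-V(x,y)/\sigma}$, which is dominated by $e^{\|V_1\|_\infty/\sigma}e^{-V_0(x)/\sigma}\in L^1(\mathbb{R}^d)$ by Assumption~\ref{ass:potential1}. For claim (b) with $f\in C_c(\mathbb{R}^d)$, apply the lemma to $g(x,y)=f(x)e^{-V(x,y)/\sigma}$, whose envelope only picks up a factor of $\|f\|_\infty$. To pass from $C_c$ test functions to $f\in L^2(\pi_{ref})$, I use the uniform density bound
\begin{equation*}
\frac{d\pi^\epsilon}{d\pi_{ref}}(x)=\frac{Z}{Z^\epsilon}\, e^{-V_1(x,x/\epsilon,\ldots,x/\epsilon^N)/\sigma}\le C,
\end{equation*}
with $C$ independent of $\epsilon$ once $\epsilon$ is small (using (a) to bound $Z/Z^\epsilon$ and the boundedness of $V_1$), and analogously for $d\pi^0/d\pi_{ref}$. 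Given $f\in L^2(\pi_{ref})\subset L^1(\pi_{ref})$ (Cauchy--Schwarz against the probability measure $\pi_{ref}$) and $f_n\in C_c(\mathbb{R}^d)$ with $f_n\to f$ in $L^1(\pi_{ref})$, the three-term splitting
\begin{equation*}
\int f\, d\pi^\epsilon - \int f\, d\pi^0 = \int(f-f_n)\, d\pi^\epsilon + \Bigl(\int f_n\, d\pi^\epsilon - \int f_n\, d\pi^0\Bigr) + \int(f_n-f)\, d\pi^0
\end{equation*}
has outer terms uniformly bounded by $C\|f-f_n\|_{L^1(\pi_{ref})}$ and middle term vanishing by the $C_c$ case, completing the argument.

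The main technical obstacle is the multi-scale averaging lemma itself: the Fourier approach is clean on each tensor summand, but the Stone--Weierstrass reduction requires working in the joint variable $(x,y_1,\ldots,y_N)$ on an unbounded $x$-domain, so one must carefully interleave compactification in $x$, uniform approximation on the compact part, and the diverging-phase analysis for finitely many nonzero frequency vectors at each step. The recursive scale separation $\epsilon^{i+1}/\epsilon^i = \epsilon \to 0$ is what prevents any resonance between modes, but the bookkeeping across $N$ scales is the delicate part of the argument.
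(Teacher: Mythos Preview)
Your proposal is correct and complete in outline; the averaging lemma you describe is a standard and valid route, and the extension from $C_c$ to $L^2(\pi_{ref})$ via the uniform Radon--Nikodym bound $d\pi^\epsilon/d\pi_{ref}\le C$ is clean.

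The paper takes a different path. Rather than Fourier analysis, it argues by weak $L^2(\pi_{ref})$-compactness of the family $e^{-V_1(x,x/\epsilon,\ldots,x/\epsilon^N)/\sigma}$ and then identifies the weak limit by testing against indicators $\mathbf{1}_\Omega$ of bounded open sets. For each such $\Omega$ it uses a geometric ``unfolding'' argument in the spirit of Cioranescu--Donato: tile $\Omega$ by translates of the cube $\epsilon^N\mathbb{T}^d$, change variables on each cube to replace $x/\epsilon^N$ by a free periodic variable $y$, and use smoothness of $V$ to freeze the slower arguments at the cube centre up to $O(\delta)$. This averages out the fastest scale; the procedure is then iterated $N$ times, peeling off one scale at each pass. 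Your Riemann--Lebesgue approach handles all $N$ scales simultaneously via the diverging phase $\sum_i m_i/\epsilon^i$, which is arguably more direct and makes the dependence on the scale-separation hypothesis transparent; the paper's cube-tiling argument is closer to the classical homogenization literature and avoids any appeal to Fourier decay, at the cost of an inductive $N$-step structure. Both routes rely on the same boundedness of $V_1$ to control tails and pass to general $L^2(\pi_{ref})$ test functions.
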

If Assumption \ref{ass:potential1} holds, then for every $\epsilon > 0$, the potential $V^\epsilon$ is confining, so that the process $X_t^\epsilon$ is ergodic.   If the ``unperturbed'' process defined by (\ref{eq:unperturbed_sde})
converges to equilibrium exponentially fast in $L^2(\pi_{ref})$, then so will $X_t^\epsilon$ and $X_t^0$.   Moreover, we can relate the rates of convergence of the three processes.

\begin{proposition}
\label{prop:poincare}
Suppose that Assumptions \ref{ass:potential1} holds and let $P_t$ be the semigroup associated with the dynamics (\ref{eq:unperturbed_sde}) and suppose that $\pi_{ref}(x) = \frac{1}{Z_0}e^{-V_0(x)/\sigma}$ satisfies Poincar\'{e}'s inequality with constant $\rho/\sigma$, i.e.
\begin{equation}
\label{eq:poincare_unperturbed}
  \mbox{Var}_{\pi_{ref}}(f) \leq \frac{\sigma}{\rho}\int |\nabla f(x)|^2\, \pi_{ref}(dx),\quad f \in H^1(\pi_{ref}),
\end{equation}
or equivalently
\begin{equation}
\label{eq:conv1}
  \mbox{Var}_{\pi_{ref}}\left(P_t f\right) \leq e^{-2\rho t/\sigma} \mbox{Var}_{\pi_{ref}}(f),\quad f \in L^2(\pi_{ref}),
\end{equation}
for all $t \geq 0$.  Let $P_t^\epsilon$ and $P_t^0$ denote the semigroups associated with the full dynamics (\ref{eq:sde_multiscale}) and homogenized dynamics (\ref{eq:homogenized_sde}), respectively.  Then for all $f \in L^2(\pi_{ref})$,
\begin{equation}
\label{eq:conv2}
  \mbox{Var}_{\pi^\epsilon}(P_t^\epsilon f) \leq e^{-2\gamma t/\sigma}\mbox{Var}_{\pi^\epsilon}(f),
\end{equation}
and
\begin{equation}
\label{eq:conv3}
  \mbox{Var}_{\pi^0}(P_t^0 f) \leq e^{-2{\widetilde{\gamma}}t/\sigma}\mbox{Var}_{\pi^0}(f).
\end{equation}
for  $\gamma = \rho \, e^{-2{osc} (V_1)/\sigma}$ and $\widetilde{\gamma} = \rho e^{-3{osc}(V_1)/\sigma}$. 
\end{proposition}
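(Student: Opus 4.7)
The plan is to prove the exponential decay estimates (\ref{eq:conv2}) and (\ref{eq:conv3}) by establishing Poincar\'{e} inequalities for $\pi^\epsilon$ and $\pi^0$ respectively, obtained from the hypothesized Poincar\'{e} inequality (\ref{eq:poincare_unperturbed}) by bounding the density ratios against $\pi_{ref}$ and, for $\pi^0$, additionally invoking the pointwise ellipticity of $\mathcal{M}$ supplied by Proposition \ref{corr:unif_pos_def}. The equivalence of a Poincar\'{e} inequality and exponential variance decay used in (\ref{eq:poincare_unperturbed})--(\ref{eq:conv1}) is standard: along any reversible Markov semigroup, $\frac{d}{dt}\mathrm{Var}_\pi(P_t f)$ equals minus twice the Dirichlet form of $P_t f$, and integration by parts against the divergence-form representations of $\mathcal{L}^\epsilon$ (as recorded just before Theorem~\ref{thm:homog_main}) and of $\mathcal{L}^0$ from (\ref{eq:homogenized_generator}) identifies these Dirichlet forms as $\sigma\int |\nabla f|^2\,d\pi^\epsilon$ and $\sigma\int \nabla f\cdot\mathcal{M}(x)\nabla f\,d\pi^0$. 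Hence it is enough to establish the weighted Poincar\'{e} inequalities $\mathrm{Var}_{\pi^\epsilon}(f)\le (\sigma/\gamma)\int |\nabla f|^2 d\pi^\epsilon$ and $\mathrm{Var}_{\pi^0}(f)\le (\sigma/\widetilde{\gamma})\int \nabla f\cdot\mathcal{M}\nabla f\,d\pi^0$, after which (\ref{eq:conv2}) and (\ref{eq:conv3}) follow by Gronwall.

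For $\pi^\epsilon$, write $d\pi^\epsilon/d\pi_{ref}=(Z_0/Z^\epsilon)\,e^{-V_1^\epsilon(x)/\sigma}$ with $V_1^\epsilon(x)=V_1(x,x/\epsilon,\ldots,x/\epsilon^N)$. Since $V_1$ is bounded, the exponential factor and the normalization ratio separately lie in exponential bands determined by $\inf V_1$ and $\sup V_1$, so $e^{-\mathrm{osc}(V_1)/\sigma}\le d\pi^\epsilon/d\pi_{ref}\le e^{\mathrm{osc}(V_1)/\sigma}$ uniformly in $x$ and $\epsilon$. Given any $f$ and any constant $c$, $\mathrm{Var}_{\pi^\epsilon}(f)\le \int(f-c)^2 d\pi^\epsilon \le e^{\mathrm{osc}(V_1)/\sigma}\int(f-c)^2 d\pi_{ref}$; choosing $c$ to be the $\pi_{ref}$-mean yields $\mathrm{Var}_{\pi^\epsilon}(f)\le e^{\mathrm{osc}(V_1)/\sigma}\mathrm{Var}_{\pi_{ref}}(f)$. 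Applying (\ref{eq:poincare_unperturbed}) and then the reciprocal bound $\int |\nabla f|^2 d\pi_{ref}\le e^{\mathrm{osc}(V_1)/\sigma}\int |\nabla f|^2 d\pi^\epsilon$ gives the Poincar\'{e} inequality for $\pi^\epsilon$ with constant $(\sigma/\rho) e^{2\mathrm{osc}(V_1)/\sigma}$, i.e. $\gamma=\rho e^{-2\mathrm{osc}(V_1)/\sigma}$, proving (\ref{eq:conv2}).

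For $\pi^0$ the argument is parallel. A direct calculation gives
$$
\frac{d\pi^0}{d\pi_{ref}}(x) \;=\; \frac{Z_0}{\bar{Z}} \int_{\mathbb{T}^d}\cdots\int_{\mathbb{T}^d} e^{-V_1(x,y_1,\ldots,y_N)/\sigma}\,dy_1\cdots dy_N,
$$
and applying the bounded-oscillation argument to the inner integral and to $Z_0/\bar{Z}$ separately shows that $e^{-\mathrm{osc}(V_1)/\sigma}\le d\pi^0/d\pi_{ref}\le e^{\mathrm{osc}(V_1)/\sigma}$. Proceeding exactly as in the previous step yields $\mathrm{Var}_{\pi^0}(f)\le (\sigma/\rho)e^{2\mathrm{osc}(V_1)/\sigma}\int |\nabla f|^2 d\pi^0$. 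The extra factor required for (\ref{eq:conv3}) comes from the pointwise ellipticity bound $\mathcal{M}(x)\ge e^{-\mathrm{osc}(V_1)/\sigma}\,I$ of Proposition \ref{corr:unif_pos_def}, which gives $\int |\nabla f|^2 d\pi^0 \le e^{\mathrm{osc}(V_1)/\sigma}\int \nabla f\cdot\mathcal{M}(x)\nabla f\,d\pi^0$ and produces the weighted Poincar\'{e} inequality with constant $(\sigma/\rho)e^{3\mathrm{osc}(V_1)/\sigma}$, i.e. $\widetilde{\gamma}=\rho e^{-3\mathrm{osc}(V_1)/\sigma}$.

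The proof is essentially careful bookkeeping. The mildly non-trivial step is the integration by parts identifying the homogenized Dirichlet form from (\ref{eq:homogenized_generator}) and the reversibility of $\pi^0\propto Z$; everything else is exponential comparison. The three multiplicative factors $e^{\mathrm{osc}(V_1)/\sigma}$ — two arising from two-sided ratio comparison of densities, one from the $\mathcal{M}$-ellipticity estimate — precisely account for the gap between $\gamma$ and $\widetilde{\gamma}$, the extra factor being the cost of passing from the additive-noise process $X_t^\epsilon$ to the $\mathcal{M}$-weighted coarse-grained dynamics. The approach gives bounds that are sub-optimal compared to a Holley--Stroock perturbation argument (which would give one fewer factor of $e^{\mathrm{osc}(V_1)/\sigma}$), but the advantage is a self-contained, elementary derivation requiring no additional machinery beyond (\ref{eq:poincare_unperturbed}) and Proposition \ref{corr:unif_pos_def}.
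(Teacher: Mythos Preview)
Your proof is correct and is essentially the paper's argument written out in full: the paper invokes the Holley--Stroock perturbation bound (citing \cite[Lemma 5.1.7]{bakry2013analysis}) after recording two-sided density ratio bounds between $\pi^\epsilon$, $\pi^0$ and $\pi_{ref}$, and then inserts the ellipticity lower bound on $\mathcal{M}$ from Proposition~\ref{corr:unif_pos_def} to pass to the $\mathcal{M}$-weighted Dirichlet form, exactly as you do. One small correction to your closing remark: the argument you give \emph{is} the Holley--Stroock argument, and any sharpening of the constant would come not from a different method but from observing that the normalization ratio $Z_0/Z^\epsilon$ is independent of $x$ and therefore cancels when computing the oscillation of $d\pi^\epsilon/d\pi_{ref}$.
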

The proofs Propositions \ref{prop:weak_limit} and \ref{prop:test_functions} will be deferred to Section \ref{sec:equilibrium}. 

\section{Properties of the Coarse--Grained Process}
\label{sec:properties}
In this section we study the properties of the coefficients of the homogenized SDE (\ref{eq:homogenized_sde}) and its dynamics. 

\subsection{Separable Potentials}
Consider the special case where the potential $V^\epsilon$ is \emph{separable}, in the sense that the fast scale fluctuations do not depend on the slow scale variable, i.e.
$$
  V(x_0, x_1, \ldots, x_N) = V_0(x_0) + V_1(x_1, x_2, \ldots, x_N).
$$
Then, it is clear from the construction of the effective diffusion tensor (\ref{eq:effective_diffusion}) that $\mathcal{M}(x)$ will not depend on $x \in \mathbb{R}^d$.  Moreover, since
$$
  Z(x) = \int_{\mathbb{T}^d}\cdots\int_{\mathbb{T}^d} e^{-\frac{V_0(x) + V_1(y_1, \ldots, y_N)}{\sigma}}\,dy_1\ldots dy_N = \frac{1}{K} e^{-V_0(x)/\sigma},
$$
where $K = \int_{\mathbb{T}^d}\cdots\int_{\mathbb{T}^d}\exp(-V_1(y_1,\ldots,y_N)/\sigma)\,dy_1\,\ldots dy_N$, then it follows that the coarse--grained stationary distribution $\pi^0$ equals the stationary distribution $\pi_{ref} \propto \exp(-V_0(x)/\sigma)$ of the process (\ref{eq:unperturbed_sde}).  For general multiscale potentials however,  $\pi^0$ will be different from $\pi_{ref}$.  Indeed, introducing multiscale fluctuations can dramatically alter the qualitative equilibrium behaviour of the process, including noise-inductioned transitions and noise induced hysteresis, as has been studied for various examples in \cite{duncan2016noise}.

\subsection{Variational bounds on $\mathcal{M}(x)$}
A first essential property is that the constructed matrices $\mathcal{K}_N, \ldots, \mathcal{K}_1$ are uniformly elliptic with respect to all their parameters, which is shown in the following lemma.  For convenience, we shall introduce the notation 
\begin{equation}
\label{eq:XK}
\mathbb{X}_k = \mathbb{R}^d \times \bigtimes_{i=1}^{k} \mathbb{T}^d
\end{equation}
for $k=1,\ldots, N$, and set $\mathbb{X}_0 = \mathbb{R}^d$ for consistency.  First we require the following existence and regularity result for a uniformly elliptic Poisson equation on $\mathbb{T}^d$.

\begin{lemma}
\label{lem:corr_existence_ellipticity}
For $k=1,\ldots, N$, the tensor $\mathcal{K}_k(x_0, \ldots, x_{k-1}, \cdot)$ is uniformly positive definite and in particular satisfies, for all unit vectors $e \in \mathbb{R}^d$,
  \begin{equation}
  \label{eq:K_ellipticity}
    \frac{1}{\widehat{Z}_k(x_0, x_1, \ldots, x_{k-1})} \leq  e\cdot\mathcal{K}_{k}(x_0, x_1, \ldots, x_{k-1}, x_k) \,e,\quad x_k \in \mathbb{T}^d
  \end{equation}
  where 
  $$
    \widehat{Z}_k(x_0, x_1, \ldots, x_{k-1}) = \int\ldots\int e^{ V(x_0, x_1, \ldots, x_{k-1}, x_k, \ldots, x_N)/\sigma}\,dx_N dx_{N-1}\ldots dx_{k},
  $$
  which is independent of $x_k$.  Moreover, the tensor $\mathcal{K}_k$ satisfies $\left(\mathcal{K}_k\right)_{i,j} \in C_b^\infty(\mathbb{X}_k)$, for all $i, j \in \lbrace 1, \ldots, d\rbrace$.
\end{lemma}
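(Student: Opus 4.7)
The plan is to prove the lemma by \emph{downward induction} on $k$, starting from $k=N$ and working to $k=1$, with each step structured around the observation that $\mathcal{K}_{k}$ is the \emph{effective (homogenized) tensor} obtained by integrating out the scale $x_{k+1}$ from $\mathcal{K}_{k+1}$. For the base case $k=N$, the claim is immediate: $\mathcal{K}_N(x_0,\ldots,x_N) = e^{-V/\sigma}I$ is a positive scalar multiple of the identity, and for fixed $x_0,\ldots,x_{N-1}$ the potential $V(x_0,\ldots,x_{N-1},\cdot)$ is smooth on the compact torus $\mathbb{T}^d$, so $\mathcal{K}_N$ has a positive lower bound in $x_N$; the bound $1/\widehat{Z}_N \le e\cdot\mathcal{K}_N e$ then follows directly (by inspection when $k=N$ this is an equality when $\widehat{Z}_k$ is interpreted consistently with the integration order used in the recursion).

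For the inductive step, assume that $\mathcal{K}_{k+1}$ is uniformly positive definite in $x_{k+1}$ and smooth in $(x_0,\ldots,x_{k+1})$. The cell problem (\ref{eq:corrector}) for $\theta_{k+1}$ is then a uniformly elliptic divergence-form equation on $\mathbb{T}^d$ with smooth coefficients. Lax--Milgram on the Hilbert space $H^1(\mathbb{T}^d)/\mathbb{R}$ (equipped with the Dirichlet-type bilinear form induced by $\mathcal{K}_{k+1}$) yields a unique weak solution $\theta_{k+1}(x_0,\ldots,x_k,\cdot)\in H^1(\mathbb{T}^d)$, and interior Schauder/bootstrap estimates promote this to $C^\infty(\mathbb{T}^d)$. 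Differentiating the cell problem with respect to the parameters $x_0,\ldots,x_k$ and applying the same elliptic estimates gives smooth parametric dependence, so $\mathcal{K}_k$, being an integral of smooth quantities, inherits the $C_b^\infty(\mathbb{X}_k)$ regularity claimed.

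The key analytic step is to recast the (manifestly non-symmetric) integral defining $\mathcal{K}_k$ in the standard symmetric variational form of homogenization. By using the cell problem $\nabla_{k+1}\cdot(\mathcal{K}_{k+1}(\nabla_{k+1}\theta_{k+1} + I))=0$ and integration by parts in $x_{k+1}$, one shows that for every unit vector $e\in\mathbb{R}^d$,
\begin{equation*}
e\cdot\mathcal{K}_k(x_0,\ldots,x_k)\,e \;=\; \int_{\mathbb{T}^d}(\nabla_{k+1}\theta_{k+1,e}+e)\cdot\mathcal{K}_{k+1}(\nabla_{k+1}\theta_{k+1,e}+e)\,dx_{k+1},
\end{equation*}
where $\theta_{k+1,e}=\sum_i e_i\theta_{k+1}^{(i)}$. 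This makes $\mathcal{K}_k$ manifestly symmetric and positive definite, and allows a Cauchy--Schwarz bound: since $\int_{\mathbb{T}^d}(\nabla_{k+1}\theta_{k+1,e}+e)\,dx_{k+1}=e$ by periodicity,
\begin{equation*}
1=|e|^2 \le \left(\int_{\mathbb{T}^d} e\cdot\mathcal{K}_{k+1}^{-1}e\,dx_{k+1}\right)\cdot\left(e\cdot\mathcal{K}_k e\right),
\end{equation*}
i.e.\ $\mathcal{K}_k$ dominates the harmonic mean of $\mathcal{K}_{k+1}$. Iterating this inequality all the way down to $\mathcal{K}_N^{-1}=e^{V/\sigma}I$ produces the claimed lower bound $1/\widehat{Z}_k$.

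The main obstacle is the third step: verifying the symmetric variational identity for $\mathcal{K}_k$. Because the recursive formula (\ref{eq:corrector}) builds $\mathcal{K}_k$ as a product of $(I+\nabla\theta^\top)$ factors ordered from the outside in, the integrand is not visibly symmetric, and one must combine the cell problem identities at every previous scale with repeated integration by parts on $\mathbb{T}^d$ to transform it into the quadratic form above. Once this is in place, uniform ellipticity and smoothness propagate cleanly down the induction; the remaining assertions of the lemma (including the smooth dependence on the unbounded slow variable $x_0$, which uses Assumption~\ref{ass:potential1}(3) to control parameter derivatives of $V$) are routine consequences of standard elliptic regularity theory.
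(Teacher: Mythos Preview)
Your proposal is correct and follows the same core strategy as the paper: downward induction on $k$, the symmetric variational representation of $\mathcal{K}_k$ obtained from the cell equation via integration by parts, and a Cauchy--Schwarz (H\"older) step exploiting $\int_{\mathbb{T}^d}(I+\nabla\theta)\,dx=I$. The only difference is organizational. The paper establishes the \emph{full} product symmetric form
\[
\mathcal{K}_k \;=\; \int\cdots\int (I+\nabla_{k+1}\theta_{k+1})^\top\cdots(I+\nabla_N\theta_N)^\top\,e^{-V/\sigma}\,(I+\nabla_N\theta_N)\cdots(I+\nabla_{k+1}\theta_{k+1})\,dx_N\cdots dx_{k+1}
\]
and applies Cauchy--Schwarz once against $\int e^{V/\sigma}\,dx_N\cdots dx_{k+1}$ to read off the $\widehat Z$ bound directly; you instead keep only the one-step symmetric form, derive the harmonic-mean inequality $e\cdot\mathcal{K}_k e \ge \bigl(\int e\cdot\mathcal{K}_{k+1}^{-1}e\,dx_{k+1}\bigr)^{-1}$, and iterate it down to $\mathcal{K}_N^{-1}=e^{V/\sigma}I$. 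Your route is arguably cleaner in that each inductive step uses only the single cell equation at scale $k+1$ (so the ``obstacle'' you flag---combining cell identities at \emph{every} previous scale---is in fact sidestepped by your own recursive argument, once the symmetry of $\mathcal{K}_{k+1}$ is part of the inductive hypothesis); the paper's route has the advantage of yielding the explicit $\widehat Z$ lower bound in one line without tracking an iteration.
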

\begin{proof}
We prove the result by induction on $k$ starting from $k = N$.  For $k = N$ the tensor $\mathcal{K}_N$ is clearly uniformly positive definite for fixed $x_0, \ldots, x_{N-1}\in\mathbb{X}_{N-1}$.  The existence of the solution $\theta_N$ of (\ref{eq:corrector}) is then ensured by the Lemma \ref{lem:corr_existence_ellipticity}, and moreover it follows that $\mathcal{K}_{N-1}$ is well defined.  To show that $\mathcal{K}_{N-1}(x_0, \ldots, x_{N-2}, \cdot)$ is uniformly elliptic on $\mathbb{T}^d$ we first note that
\begin{equation}
\label{eq:K_expression2}
\begin{aligned}
  \int_{\mathbb{T}^d}(I + \nabla_{x_N} \theta_N)^\top &(I + \nabla_{x_N} \theta_N) e^{-V/\sigma}\,dx_N \\ &= \int_{\mathbb{T}^d} \left(I + \nabla_{x_N}\theta_N + \nabla_{x_N}\theta_N^\top +  \nabla_{x_N}\theta_N^\top\nabla_{x_N}\theta_N\right)e^{-V/\sigma}dx_N,
\end{aligned}
\end{equation}
where $V = V(x_0, x_1,\ldots, x_N)$, for $x_0,\ldots,x_{N-1} \in \mathbb{X}_{N-1}$ fixed, and where $\top$  denotes the transpose. From the Poisson equation for $\theta_N$ we have
$$
  \int \theta_N \cdot \nabla_{x_N}\cdot(e^{-V/\sigma}(\nabla_{x_N}\theta_N + I))e^{-V/\sigma}\,dx_N = 0,
$$
from which we obtain, after integrating by parts:
$$
  \int_{\mathbb{T}^d} \nabla_{x_N}\theta_N^\top \nabla_{x_N}\theta_N e^{-V/\sigma}\,dx_N = -\int \nabla_{x_N}\theta_N^\top e^{-V/\sigma}\,dx_N,
$$
so that
\begin{align*}
  \mathcal{K}_{N-1} &= \int_{\mathbb{T}^d} (I + \nabla_{x_N} \theta_N)^\top (I + \nabla_{x_N} \theta_N) e^{-V/\sigma}\,dx_N.
\end{align*}
We note that
$$
  \int_{\mathbb{T}^d} (I + \nabla_N \theta_N)\,dx_N = I,
$$
therefore, it follows by H\"{o}lder's inequality that
$$
  |v|^2 \leq  \left|v\cdot \int_{\mathbb{T}^d} (I + \nabla_N \theta_N)v\right|^2 \leq v\cdot\left(\mathcal{K}_{N-1} \right)v \int_{\mathbb{T}^d} e^{V/\sigma}\,dx_N,
$$
so that
$$
  \frac{|v|^2}{\widehat{Z}_N(x_0, \ldots, x_{N-1})} \leq v\cdot\mathcal{K}_{N-1}(x_0,\ldots, x_{N-1}) v,\quad \forall (x_0, x_1, \ldots, x_{N_1}).
$$
Since $V_1$ is uniformly bounded over $x_0, \ldots, x_{N-1}$ it follows that $\widehat{Z}_N$ is strictly positive, so  that $\mathcal{K}_{N-1}$ is uniformly elliptic, and arguing as above we obtain existence of a unique $\theta_{N-1}$, up to a constant, solving (\ref{eq:theta_k}) for $k=2$.
\\\\
Now, assume that the correctors have been constructed for $i=N, \ldots, N-k+1$ and consider the tensor
\begin{equation}
\label{eq:tensor_nk1}
 \begin{aligned}
    \int\cdots\int  &(I+\nabla_{i+1}\theta_{i+1})^\top\cdots (I+\nabla_{k+1}\theta_{k+1})^\top  \\
        &\mathcal{K}_k  (I+\nabla_{k+1}\theta_{k+1})\cdots(I+\nabla_{i+1}\theta_{i+1})\, dx_N\ldots dx_{i+1}.
  \end{aligned}
\end{equation}
  Integrating by parts the cell equation for $\theta_{k+1}$  we see that
  $$
    \int \left(I + \nabla_{k+1}\theta_{k+1}\right)^\top \mathcal{K}_k \left(I + \nabla_{k+1}\theta_{k+1}\right)\,dx_{k+1} = \mathcal{K}_{k-1}.
  $$
  Continuining this approach by induction, it follows that (\ref{eq:tensor_nk1}) equals $\mathcal{K}_{i+1}$, thus proving the  representation (\ref{eq:K_expression2}), as required.  We now verify (\ref{eq:K_ellipticity}).  First we note that
  $$
    \int\cdots \int (I + \nabla_{N}\theta_N) \cdots (I + \nabla_{i+1}\theta_{i+1}) dx_N\ldots dx_{i+1} = I.
  $$
  Therefore, for any vector $v \in \mathbb{R}^d$:
  \begin{align*}
    |v|^2 &\leq   \left|\left(\int\cdots \int (I + \nabla_{N}\theta_N) \cdots (I + \nabla_{i+1}\theta_{i+1}) dx_N\ldots dx_{i+1}\right)v \right|^2
    \\ &\leq v\cdot  \left( \int\cdots\int  (I+\nabla_{i+1}\theta_{i+1})^\top\cdots (I+\nabla_{i+1}\theta_{i+1})e^{-V/\sigma}dx_N\ldots dx_{i+1}\right)v \int e^{V/\sigma} dx_N\ldots dx_{i+1} \\
    & = \left(v\cdot \mathcal{K}_{i+1}(x_1, \ldots, x_i)v\right) \widehat{Z}(x_1, \ldots, x_i).
  \end{align*}
  The fact that we have strict positivity for fixed $x_1, \ldots x_i$ then follows immediately.
\end{proof}

To obtain upper bounds for the effective diffusion coefficient, we will express the intermediary diffusion tensors $\mathcal{K}_i$ as solutions of a quadratic variational problem.  This variational formulation of the diffusion tensors can be considered as a generalisation of the analogous representation for the effective conductivity coefficient of a two-scale composite material, see for example \cite{jikov2012homogenization,milton1995theory,bensoussan1978asymptotic}.  
\begin{lemma} 
\label{lemma:variational}
For $i=1,\ldots, N$, the tensor $\mathcal{K}_{i}$ satisfies
\begin{equation}
\label{eq:variational_formulation}
\begin{aligned}
&e \cdot \mathcal{K}_{i}(x_0,\ldots, x_{i})e\\ 
 &= \inf_{v_{i+1}, \ldots, v_N \in H^1(\mathbb{T}^d)}\int_{(\mathbb{T}^d)^N} \left|e + \nabla v_{i+1}(x_i) + \ldots + \nabla v_N(x_N)\right|^2  e^{-V(x_0,\ldots, x_{N})/\sigma}\,dx_N\ldots dx_{i+1},
 \end{aligned}
\end{equation}
for all $e \in \mathbb{R}^d$.
\end{lemma}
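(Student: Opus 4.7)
The plan is to prove the identity by downward induction on $i$, reducing at each step to the classical one-scale variational principle for a linear elliptic equation on the torus. The base case $i = N$ is immediate: since $\mathcal{K}_N = e^{-V/\sigma}I$, both sides equal $|e|^2 e^{-V/\sigma}$ (the infimum on the right has an empty index set).

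For the inductive step, I would first record the symmetric recursion
\begin{equation*}
\mathcal{K}_i(x_0, \ldots, x_i) = \int_{\mathbb{T}^d} (I + \nabla_{i+1}\theta_{i+1})^\top\, \mathcal{K}_{i+1}(x_0, \ldots, x_{i+1})\,(I + \nabla_{i+1}\theta_{i+1})\, dx_{i+1},
\end{equation*}
which is already implicit in the proof of Lemma \ref{lem:corr_existence_ellipticity}: pair the corrector equation $\nabla_{i+1}\cdot(\mathcal{K}_{i+1}(I + \nabla_{i+1}\theta_{i+1})) = 0$ against $\theta_{i+1}$ and integrate by parts. This immediately yields the one-step variational characterization
\begin{equation*}
e\cdot\mathcal{K}_i e = \inf_{v\in H^1(\mathbb{T}^d)} \int_{\mathbb{T}^d} (e + \nabla v)\cdot\mathcal{K}_{i+1}(e + \nabla v)\,dx_{i+1},
\end{equation*}
the minimum being attained at $v = e\cdot\theta_{i+1}$. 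Indeed, for any test function $w$, expanding around the minimizer gives
\begin{equation*}
\int (e + \nabla(e\cdot\theta_{i+1}) + \nabla w)\cdot\mathcal{K}_{i+1}(e + \nabla(e\cdot\theta_{i+1}) + \nabla w)\,dx_{i+1} = e\cdot\mathcal{K}_i e + \int \nabla w\cdot\mathcal{K}_{i+1}\nabla w\,dx_{i+1},
\end{equation*}
where the cross term vanishes by the weak form of the corrector PDE, and the remaining quadratic term is nonnegative by the uniform ellipticity of $\mathcal{K}_{i+1}$ (Lemma \ref{lem:corr_existence_ellipticity}).

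To close the induction, I would apply the inductive hypothesis pointwise in $(x_0, \ldots, x_{i+1})$ with the vector $e + \nabla v(x_{i+1})$ in place of $e$: for each fixed $v \in H^1(\mathbb{T}^d)$,
\begin{equation*}
(e + \nabla v)\cdot\mathcal{K}_{i+1}(e + \nabla v) = \inf_{v_{i+2},\ldots,v_N}\int \left|e + \nabla v + \nabla v_{i+2} + \cdots + \nabla v_N\right|^2 e^{-V/\sigma}\,dx_N\cdots dx_{i+2}.
\end{equation*}
Integrating in $x_{i+1}$ and then taking the infimum over $v = v_{i+1}$ produces exactly the right-hand side of \eqref{eq:variational_formulation} for $\mathcal{K}_i$, since the outer infimum over $v_{i+1}$ combines with the inner infima over $(v_{i+2},\ldots,v_N)$ into a single joint infimum over the tuple.

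The only delicate point is the interchange of $\inf$ and $\int$: the inductive hypothesis is a pointwise statement in the parameters $(x_0,\ldots, x_{i+1})$, while the final infimum ranges over fixed functions. I would handle this by first restricting to a countable dense family of smooth test tuples, in which case Fubini applies directly and the pointwise infima can be attained by measurable selections (using the strict convexity of the quadratic functional together with the uniform ellipticity from Lemma \ref{lem:corr_existence_ellipticity}); density then transfers the identity to all of $H^1(\mathbb{T}^d)$. I expect this measurable-selection step, rather than the algebra, to be the main technical obstacle; once it is in place the downward induction closes immediately, and the case $i=0$ recovers the global variational representation of $\mathcal{M}(x) = \mathcal{K}_0(x)/Z(x)$ that underlies the upper bound $e\cdot\mathcal{M}(x)e \le 1$ in Proposition \ref{corr:unif_pos_def}.
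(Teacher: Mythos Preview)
Your approach is essentially identical to the paper's: both establish the one-step variational identity
\[
e\cdot\mathcal{K}_{i-1}e \;=\; \inf_{v\in H^1(\mathbb{T}^d)}\int_{\mathbb{T}^d}(e+\nabla v)\cdot\mathcal{K}_i(e+\nabla v)\,dx_i
\]
by recognising that its Euler--Lagrange equation is exactly the corrector equation for $\theta_i$, and then iterate. The paper phrases the iteration as ``expanding $\mathcal{K}_i$ in a similar fashion'' and ``proceeding recursively''; you phrase it as downward induction on $i$ with base case $i=N$. The only substantive difference is that you explicitly flag the interchange of the pointwise infimum with the $dx_{i+1}$-integral as the delicate step, whereas the paper passes over it in silence.
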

\begin{proof}
  For $i=1,\ldots, N$, from the proof of Lemma \ref{lem:corr_existence_ellipticity} we can express the intermediary diffusion tensor $\mathcal{K}_{i-1}$ in the following recursive manner,
  \begin{align*}
  e\cdot &\mathcal{K}_{i-1}(x_0, \ldots, x_{i-1})e \\
  &= \int_{\mathbb{T}^d}(e+ e\cdot \nabla_{x_{i}}\theta_{i}(x_0,\ldots, x_i))^\top \mathcal{K}_{i}(x_0,\ldots, x_i)(e+e\cdot\nabla_{x_{i}}\theta_{i}(x_0,\ldots, x_i))\, d x_{i}.
  \end{align*}
  For fixed $x_0,\ldots, x_{i-1} \in \mathbb{X}_{i-1}$ and $e \in \mathbb{R}^d$,  consider the tensor $\widetilde{\mathcal{K}}_{i-1}$ defined by the following   quadratic minimization problem
  \begin{equation}
  \label{eq:K_tilde}
    e\cdot \widetilde{\mathcal{K}}_{i-1}(x_0, \ldots, x_{i-1})e = \inf_{v \in H^1(\mathbb{T}^d)} \int_{\mathbb{T}^d} (e + \nabla v(x_i))\cdot \mathcal{K}_i(x_0, \ldots, x_i)(e + \nabla v(x_i))\,dx_i.
  \end{equation}
  Since $\mathcal{K}_i$ is a symmetric tensor, the corresponding Euler-Lagrange equation for the minimiser is given by
  $$
    \nabla_{x_i}\cdot\left(\mathcal{K}_i(x_0, \ldots, x_i)(\nabla_{x_i}\chi(x_0, \ldots, x_i) + e)\right)= 0, \quad x \in \mathbb{T}^d,
  $$
  with periodic boundary conditions. This equation has unique mean zero solution given by $\chi(x_0, \ldots, x_i) = \theta_i(x_0,\ldots, x_i)\cdot e$, where $\theta_i$ is the unique mean-zero solution of (\ref{eq:corrector}).  It thus follows that $e\cdot \mathcal{K}_{i-1}e = e\cdot \widetilde{\mathcal{K}}_{i-1}e$, where $\widetilde{K}_{i-1}$ is given by (\ref{eq:K_tilde}).   Expanding $\mathcal{K}_{i}$ in a similar fashion, we obtain
  \begin{align*}
 & e\cdot \mathcal{K}_{i-1}(x_0, \ldots, x_{i-1})e\\
  & = \inf_{v_i, v_{i+1} \in H^1(\mathbb{T}^d)} \int_{\mathbb{T}^d} \int_{\mathbb{T}^d} (e + \nabla v_i(x_i) + \nabla v_{i+1}(x_{i+1}))\cdot \mathcal{K}_{i+1}(x_0, \ldots, x_{i+1})(e + \nabla v_i(x_i) + \nabla v_{i+1}(x_{i+1}))\,dx_{i+1}dx_i.
  \end{align*}
  Proceeding recursively, we arrive at
  \begin{align*}
  &e\cdot \mathcal{K}_{i-1}(x_0, \ldots, x_{i-1})e \\ &= \inf_{v_i,\ldots, v_N \in H^1(\mathbb{T}^d)}\int_{(\mathbb{T}^d)^{N}} \left|e + \nabla v_i(x_i) + \ldots + \nabla v_{N}(x_{N})\right|^2 e^{-V(x_0,\ldots, x_N)/\sigma}\,dx_N\ldots, dx_i, 
  \end{align*}
  as required.
\end{proof}

\begin{remark}
Proposition \ref{corr:unif_pos_def} follows immediately from Lemma \ref{lemma:variational} by choosing $$v_1 = v_2 = \ldots = v_N = 0,$$ in (\ref{eq:variational_formulation}) in the case where $i= 1$.
\end{remark}


\section{Properties of the Equilibrium Distributions}
\label{sec:equilibrium}
In this section we study in more detail the properties of the equilibrium distributions $\pi^\epsilon$ and $\pi^0$ of the full (\ref{eq:sde_multiscale}) and homogenized (\ref{eq:homogenized_sde}) dynamics, respectively.  We first provide a proof of Proposition \ref{prop:weak_limit}.  The approach we follow in this proof is based on properties of periodic functions, in a manner similar to \cite[Sec. 2]{cioranescu2000introduction}.  

\begin{proof}[Proof of Proposition \ref{prop:weak_limit}]
  First we note that, by Assumptions \ref{ass:potential1}, there exists a $C > 0$ independent of $\epsilon$, such that
  $$
    \int_{\mathbb{R}^d} \left|e^{-V_1(x, x/\epsilon, \ldots, x/\epsilon^N)/\sigma}\right|^2 e^{-V_0(x)/\sigma}\,dx \leq C < \infty.
  $$
  It follows that there exists $\Lambda \in L^2(\mathbb{R}^d; e^{-V_0/\sigma})$ and a subsequence $\left(\epsilon_n\right)_{n \in \mathbb{N}}$ where $\epsilon_n \rightarrow 0$ such that
  $$
    \int_{\mathbb{R}^d} e^{-V_1(x, x/{\epsilon}_n, \ldots, x/{\epsilon_n}^N)/\sigma} g(x)e^{-V_0(x)/\sigma}\,dx \xrightarrow{n\rightarrow \infty} \int_{\mathbb{R}^d} \Lambda(x) g(x)e^{-V_0(x)/\sigma}\,dx,
  $$
  for all $g \in L^2(\pi_{ref})$.  To identify the limit, we choose $g = \mathbf{1}_{\Omega}$ where $\Omega$ is an open bounded subset of $\mathbb{R}^d$ where $\partial \Omega$ is smooth; noting that the span of such functions is dense in $L^2(\pi_{ref})$.
  \\\\
  Following \cite{radu1992homogenization} and \cite[Sec. 2.3]{cioranescu2000introduction}, given $\Omega$ and $\epsilon > 0$, let $\lbrace Y_k \rbrace_{k=1,\ldots, N(\epsilon)}$ be a  collection of pairwise disjoint translations of $\mathbb{T}^d$, such that
  $\epsilon^N Y_k \subset \Omega$, for $k=1,\ldots, N(\epsilon)$ and for all $\delta > 0$, there exists $\epsilon_0$ such that
  $$
    \lambda\left({\Omega\setminus \cup_{k=1}^{N(\epsilon)}\epsilon^N Y_k}\right) < \delta,
  $$
  for all $\epsilon < \epsilon_0$, where $\lambda(\cdot)$ denotes the Lebesgue measure on $\mathbb{R}^d$.  Given $\delta > 0$, there exists $\epsilon_0$ such that for $\epsilon < \epsilon_0$,
  \begin{align*}
    \int_{\Omega} e^{-V^\epsilon(x)/\sigma}\,dx &= \sum_{k=1}^{N(\epsilon)}\int_{\epsilon^{N}Y_k}e^{-V^\epsilon(x)/\sigma}\,dx + O(\delta)\\ &=  \sum_{k=1}^{N(\epsilon)}\int_{\epsilon^N(x_k +  \mathbb{T}^d)} e^{-V(x, x/\epsilon, \ldots,  x/\epsilon^{N-1}, x/\epsilon^{N})/\sigma}\,dx + O(\delta)\\
    &=  \epsilon^{Nd}\sum_{k=1}^{N(\epsilon)}\int_{\mathbb{T}^d} e^{-V(\epsilon^N (x_k+y), \epsilon^{N-1}(x_k+y), \ldots, \epsilon (x_k+y), y)/\sigma}\,dy + O(\delta)\\
     &=  \epsilon^{Nd}\sum_{k=1}^{N(\epsilon)}\int_{\mathbb{T}^d} e^{-V(\epsilon^N x_k, \epsilon^{N-1}x_k, \ldots, \epsilon x_k, y)/\sigma}\,dy  + O(\delta)\\
     &= \int_{\cup_{k=1}^{N(\epsilon)}Y_k}\int_{\mathbb{T}^d} e^{-V(x, x/\epsilon, \ldots, x/\epsilon^{N-1}, y)/\sigma}\,dy\,dx  + O(\delta) \\
     &= \int_{\Omega}\int_{\mathbb{T}^d} e^{-V(x, x/\epsilon, \ldots, x/\epsilon^{N-1}, y)/\sigma}\,dy\,dx  + O(\delta),
  \end{align*}
  where we use the fact that $V$ is smooth with bounded derivatives on $\Omega$.  Proceeding iteratively in the above manner, we obtain that for all $\delta > 0$, there exists $\epsilon_0$ such that
  $$
  \int_{\Omega} e^{-V^\epsilon(x)/\sigma}\,dx = \int_{\Omega}\int_{\mathbb{T}^d}\cdots\int_{\mathbb{T}^d} e^{-V(x, y_1,\ldots, y_N)/\sigma}\,dy_N\,\ldots\,dy_N\,dx + O(\delta),
  $$
  for all $\epsilon < \epsilon_0$.  Thus it follows that 
  $$
    \Lambda(x) = \int_{\mathbb{T}^d}\cdots \int_{\mathbb{T}^d} e^{-V_1(x, y_1, \ldots, y_N)/\sigma}\,dy_N\,dy_{N-1}\,\ldots\,dy_1.
  $$
  In particular,
  $$
  Z^\epsilon = \int_{\mathbb{R}^d} e^{-V^\epsilon(x)/\sigma}\,dx \rightarrow Z^0 = \int_{\mathbb{R}^d}\int_{\mathbb{T}^d}\cdots \int_{\mathbb{T}^d} e^{-V(x, y_1,\ldots, y_N)/\sigma}\,dy_N\,\ldots\,dy_1\,dx,
  $$
  and thus, for all $h \in L^2(\mathbb{R}^d; e^{-V_0(x)/\sigma})$ 
  $$
    \int h(x)\pi^\epsilon(x)\,dx \rightarrow \int h(x)\pi^0(x)\,dx,
  $$
  as $\epsilon\rightarrow 0$, as required.
\end{proof}

\begin{proof}[Proof of Proposition \ref{prop:poincare}]
Since $V_1$ is bounded uniformly by Assumption \ref{ass:potential1}, it is straightforward to check that
\begin{equation}
  \pi_{ref}(x)e^{-2osc(V_1)/\sigma} \leq \pi^\epsilon(x) \leq \pi_{ref}(x)e^{2osc(V_1)/\sigma}. 
\end{equation}
It thus follows directly from (\ref{eq:poincare_unperturbed}), or alternatively from  \cite[Lemma 5.1.7]{bakry2013analysis}, that $\pi^\epsilon$ satisfies Poincar\'{e}'s inequality with constant $$\gamma = \frac{\rho}{\sigma} e^{-2\mbox{osc}(V_1)/\sigma},$$  which implies (\ref{eq:conv2}).  An identical argument follows for the coarse--grained density $\pi^0(x)$.  Finally, using the fact that
$$
|v|^2 e^{-{osc}(V_1)/\sigma} \leq \frac{|v|^2}{Z(x)\widehat{Z}(x)} \leq  v\cdot \mathcal{M}(x)v,
$$ 
for all $v \in \mathbb{R}^d$,  we obtain
\begin{align*}
  \mbox{Var}_{\pi^0}(f) &\leq \frac{\sigma}{\rho} e^{2{osc}(V_1)/\sigma}\int_{\mathbb{R}^d} |\nabla f(x) |^2 \,\pi^0(x)\,dx \\&\leq \frac{\sigma}{\rho}e^{3{osc}(V_1)/\sigma}\int \nabla f(x)\cdot \mathcal{M}(x) \nabla f(x)\, \pi^0(x)\,dx,
\end{align*}
from which (\ref{eq:conv3}) follows.
\end{proof}

\begin{remark}
Note that one can similarly relate the constants in the Logarithmic Sobolev inequalities for the measures $\pi_{ref}$, $\pi^\epsilon$ and $\pi^0$ in an almost identical manner, based on the Holley-Stroock criterion \cite{holley1987logarithmic}.
\end{remark}
\begin{remark}
Proposition \ref{prop:poincare} requires the assumption that the multiscale perturbation $V_1$ is bounded uniformly.  If this is relaxed, then it is no longer guaranteed that $\pi^\epsilon$ will satisfy a Poincar\'{e} inequality, even though $\pi_{ref}$ does. For example, consider the potential
$$
  V^\epsilon(x) = x^2(1 + \alpha \cos(2\pi x/\epsilon)),
$$
then the corresponding Gibbs distribution $\pi^\epsilon(x)$ will not satisfy Poincar\'{e}'s inequality for any $\epsilon > 0$.  Following  \cite[Appendix A]{hebisch2010coercive} we demonstrate this by checking that this choice of $\pi^\epsilon$ does not satisfy the Muckenhoupt criterion \cite{muckenhoupt1972hardy,ane2000inegalites} which is necessary and sufficient for the Poincar\'{e} inequality to hold, namely that $\sup_{r \in \mathbb{R}}B_{\pm}(r) < \infty$, where
$$
  B_{\pm}(r) = \left(\int_{r}^{\pm \infty} \pi^\epsilon(x)\,dx\right)^{\frac{1}{2}}\left(\int_{[0, \pm r]} \frac{1}{\pi^\epsilon(x)}\,dx\right)^{\frac{1}{2}}.
$$
Given $n \in \mathbb{N}$, we set $r/\epsilon = 2\pi n + \pi/2$.  Then we have that
\begin{align*}
  B_{+}(r) &\geq \left(\int_{\epsilon(2\pi n + 2\pi/3)}^{\epsilon(2\pi n + 4\pi/3)}e^{-|x|^2(1 -\alpha/2)/\sigma}\,dx\right)^{1/2}\left(\int_{\epsilon(2\pi n - \pi/3)}^{\epsilon(2\pi n + \pi/3)}e^{|x|^2(1 +\alpha/2)/\sigma}\,dx\right)^{1/2} \\
  &\geq \left(\frac{ 2\pi \epsilon}{3}\right)\exp\left(-\frac{|\pi\epsilon(2n + 4/3)|^2}{2\sigma}\left(1 - \frac{\alpha}{2}\right) + \frac{|\pi\epsilon(2n-1/3)|^2}{2\sigma}\left(1 + \frac{\alpha}{2}\right)\right) \\
  &= \left(\frac{ 2\pi \epsilon}{3}\right)\exp\left(-\frac{|2\pi\epsilon n|^2(1 + 2/3n)^2}{2\sigma}\left(1 - \frac{\alpha}{2}\right) + \frac{|2\pi\epsilon n|^2(1-1/6n)^2}{2\sigma}\left(1 + \frac{\alpha}{2}\right)\right)\\
  &\approx \left(\frac{ 2\pi \epsilon}{3}\right)\exp\left(\frac{|2\pi\epsilon n|^2}{2\sigma}\left(\alpha + o(n^{-1})\right)\right) \rightarrow \infty, \quad \mbox{ as  }n \rightarrow \infty,
\end{align*}
so that Poincar\'{e}'s inequality does not hold for $\pi^\epsilon$.
\end{remark}

$ $
A natural question to ask is whether the weak convergence of $\pi^\epsilon$ to $\pi^0$ holds true in a stronger notion of distance such as total variation.  The following simple one-dimensional example demonstrates that the convergence cannot be strengthened to total variation.

\begin{example}
Consider the one dimensional Gibbs distribution 
$$\pi^\epsilon(x) = \frac{1}{Z^\epsilon}e^{-V^\epsilon(x)/\sigma},$$
where
$$
  V^\epsilon(x) = \frac{x^2}{2} + \alpha \sin\left(2\pi \frac{x}{\epsilon}\right),
$$
and where $Z^\epsilon$ is the normalization constant and  $\alpha\neq 0$.  Then the measure $\pi^\epsilon$ converges weakly to $\pi^0$ given by
$$
  \pi^0(x) = \frac{1}{\sqrt{2\pi \sigma}}e^{-x^2/2\sigma}.
$$
From the plots of the  stationary distributions in Figure \ref{fig:one_dimensional_cex} it becomes clear that the density of $\pi^\epsilon$ exhibits rapid fluctuations which do not appear in $\pi^0$, thus we do not expect to be able to obtain convergence in a stronger metric.  First we consider the distance between $\pi^\epsilon$ and $\pi^0$ in total variation \footnote{we are using the same notation for the measure and for its density with respect to the Lebesgue measure on $\mathbb{R}^d$.}
\begin{align*}
\lVert \pi^\epsilon - \pi^0 \rVert_{TV} &= \int_{\mathbb{R}^d}|\pi^\epsilon(x) - \pi^0(x) |\,dx = \int_{\mathbb{R}^d}\frac{e^{-x^2/2\sigma}}{\sqrt{2\sigma}}\left|1 - \frac{e^{-\frac{\alpha}{\sigma} \cos(2\pi x/\epsilon)}}{K^\epsilon}\right|\,dx,
\end{align*}
where $K^\epsilon = Z^\epsilon/\sqrt{2\pi\sigma}$.  It follows that
\begin{align*}
\lVert \pi^\epsilon - \pi^0 \rVert_{TV} &\geq \sum_{n\geq 0}\int_{\epsilon(2\pi n - \pi/3)}^{\epsilon(2\pi n + \pi/3)} \frac{e^{-x^2/2\sigma}}{\sqrt{2\pi\sigma}}\,dx\left|1 -  \frac{e^{-\frac{\alpha}{2\sigma}}}{K^\epsilon} \right|  \\
& \geq \sum_{n\geq 0}\frac{2\epsilon \pi}{3}\frac{e^{-\epsilon^2( 2n\pi + \pi/3)^2/2\sigma}}{\sqrt{2\pi\sigma}}\left|1 -  \frac{e^{-\frac{\alpha}{2\sigma}}}{K^\epsilon} \right| \\
&\geq \int_{0}^{\infty} \frac{2\pi}{3}\frac{e^{-2\pi^2 (x + \epsilon/6)^2 /\sigma}}{\sqrt{2\pi\sigma}}\left|1 -  \frac{e^{-\frac{\alpha}{2\sigma}}}{K^\epsilon} \right|,
\end{align*}
where we use the fact that $e^{-\alpha/2\sigma}/K^\epsilon \leq 1$.  In the limit $\epsilon \rightarrow 0$, we have  $K^\epsilon \rightarrow I_0(\alpha/\sigma)$, where $I_n(\cdot)$ is the modified Bessel function of the first kind of order $n$.  Therefore, as $\epsilon \rightarrow 0$,
\begin{equation}
\lVert \pi^\epsilon - \pi^0 \rVert_{TV}\geq \int_{0}^{\infty} \frac{2\pi}{3}\frac{e^{-2\pi^2 (x + \epsilon/6)^2 /\sigma}}{\sqrt{2\pi\sigma}}\left|1 -  \frac{e^{-\frac{\alpha}{2\sigma}}}{K^\epsilon} \right| = \frac{1}{6}\left|1 -  \frac{e^{-\frac{\alpha}{2\sigma}}}{I_0(\alpha/\sigma)} \right|,
\end{equation}
which converges to $\frac{1}{6}$ as $\frac{\alpha}{\sigma}\rightarrow \infty$.  Since relative entropy controls total variation distance by Pinsker's theorem, it follows that $\pi^\epsilon$ does not converge to $\pi^0$ in relative entropy, either.  Nonetheless, we shall compute the distance in relative entropy  between $\pi^\epsilon$ and $\pi^0$ to understand the influence of the parameters $\sigma$ and $\alpha$.  Since both $\pi^0$ and $\pi^\epsilon$ have strictly positive densities with respect to the Lebesgue measure on $\mathbb{R}$, we have that
$$
  \frac{d\pi^\epsilon}{d\pi^0}(x) = \frac{\sqrt{2\pi\sigma}}{Z^\epsilon}e^{-V^\epsilon(x)/\sigma+\frac{1}{2}x^2/\sigma}.
$$
Then, for $Z^0 = \sqrt{2\pi \sigma}I_0(1/\sigma)$,
\begin{align*}
H\left(\pi^\epsilon \, | \, \pi^0 \right) &= \frac{1}{Z^\epsilon}\int \left(\frac{1}{2}\log(2\pi \sigma)-\log Z^\epsilon \right)e^{-V^\epsilon(x)/\sigma}\,dx \\ &\qquad + \frac{1}{Z^\epsilon}\int \left(-V^\epsilon(x)/\sigma + x^2/2\sigma\right)e^{-V^\epsilon(x)/\sigma}\,dx\\
&\xrightarrow{\epsilon \rightarrow 0} - \log I_0(\alpha/\sigma) + \frac{\alpha}{\sigma Z^0}\lim_{\epsilon\rightarrow 0}\int \cos(2\pi x/\epsilon)e^{-x^2/2\sigma - \alpha \cos(2\pi x/\epsilon)/\sigma}\,dx \\
&= -\log I_0(\alpha/\sigma) + \frac{\alpha}{\sigma} \frac{I_1(\alpha/\sigma)}{ I_0(\alpha/\sigma)} =: K(\alpha/\sigma).
\end{align*}
and it is straightfoward to check that  $K(s) > 0$, and moreover
$$
  K(s)   \rightarrow \begin{cases} +\infty & \mbox{as } s\rightarrow 0, \\ 0 & \mbox{ as } s\rightarrow \infty. \end{cases}
$$
In Figure \ref{fig:one_dimensional_k} we plot the value of $K(s)$ as a function of $s$.  From this result, we see that for fixed $\alpha > 0$, the measure $\pi^\epsilon$ will converge in relative entropy only in the limit as $\sigma \rightarrow \infty$,  while the measures will become increasingly mutually singular as $\sigma \rightarrow 0$.

\begin{figure}[h!]
    \centering
    \begin{subfigure}[b]{0.45\textwidth}
        \centering
    \includegraphics[width=\textwidth]{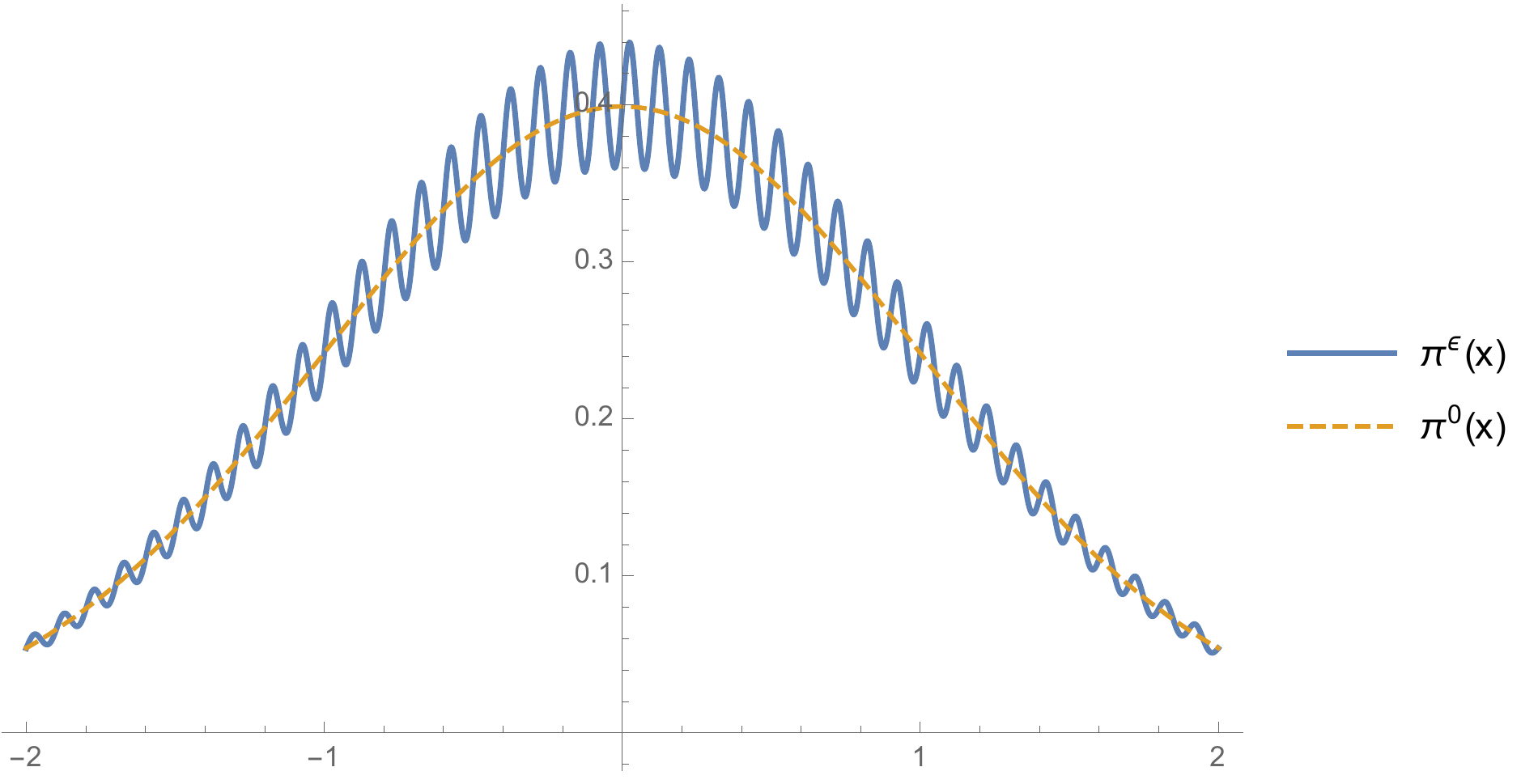}
    \caption{\label{fig:one_dimensional_cex} Plot of $\pi^\epsilon$ and $\pi^0$with $\epsilon = \alpha = 0.1$ and $\sigma=1.0$}
    \end{subfigure}%
    ~ 
    \begin{subfigure}[b]{0.45\textwidth}
        \centering
    \includegraphics[width=\textwidth]{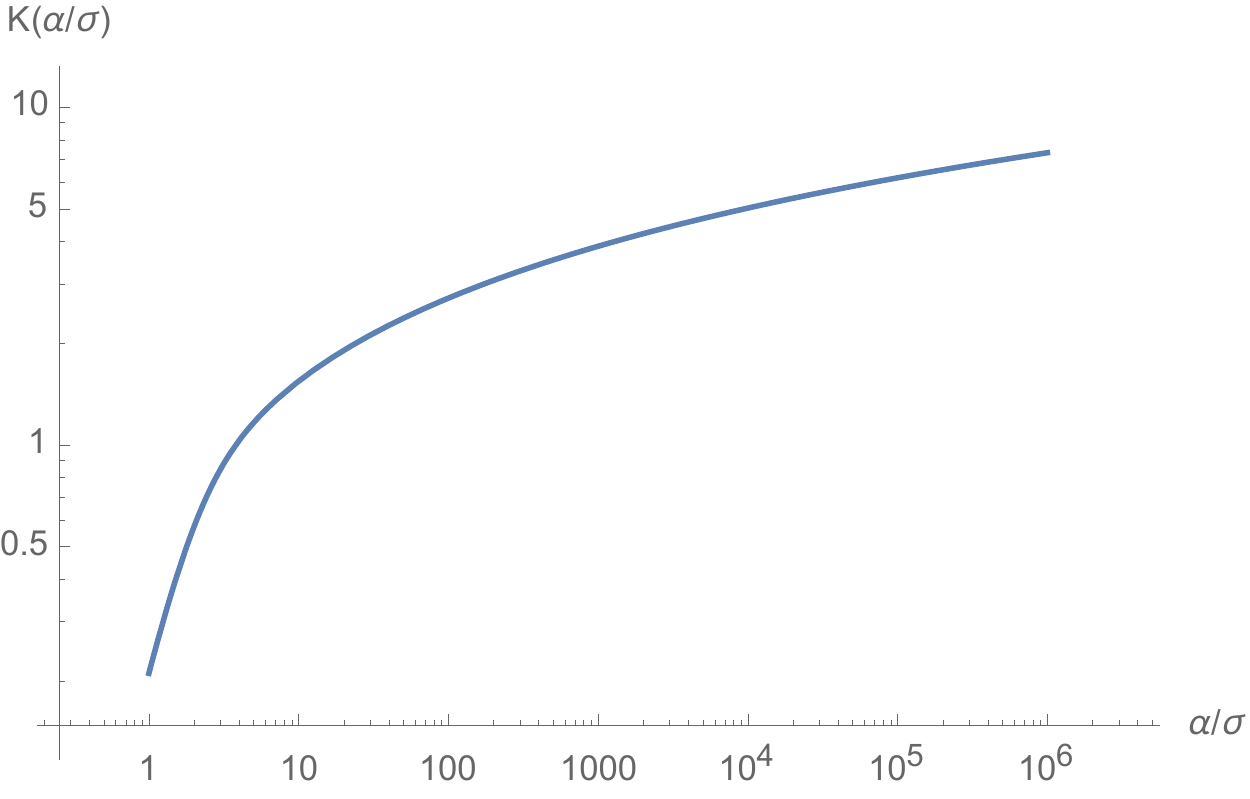}
    \caption{\label{fig:one_dimensional_k}
      Plot of $K(\alpha/\sigma)$ as a function of $\alpha/\sigma$.}
    \end{subfigure}
    \caption{Error between $\pi^\epsilon(x) \propto \exp(-V^\epsilon(x)/\sigma)$ and effective distribution $\pi^0$.}
\end{figure}

\end{example}

\section{Proof of weak convergence}
\label{sec:homog_proof}
In this section we show that over finite time intervals $[0,T]$, the process $X_t^\epsilon$ converges weakly to a process $X^0_t$ which is uniquely identified as the weak solution of a coarse-grained SDE.  The approach we adopt is based on the classical martingale methodology of \cite[Section 3]{bensoussan1978asymptotic}.   The proof of the homogenization result is split into three steps.  
\begin{enumerate}
  \item We construct an appropriate test function which is used to decompose the fluctuations of the process $X_t^\epsilon$ into a martingale part and a term which goes to zero as $\epsilon\rightarrow 0$.  
  \item Using this test function, we demonstrate that the path measure $\mathbb{P}^\epsilon$ corresponding to the family $\left\lbrace \left(X_t^\epsilon\right)_{t\in[0,T]} \right\rbrace_{0 < \epsilon \leq 1}$ is tight in $C([0,T];\mathbb{R}^d)$.  
  \item Finally, we show that any limit point of the family of measures must solve a well-posed martingale problem, and is thus unique.
\end{enumerate}

The test functions will be constructed by solving a recursively defined sequence of Poisson equations on $\mathbb{R}^d$.  We first provide a general well-posedness result for this class of equations.

\begin{proposition}
\label{prop:existence_poisson_general}
  For fixed $(x_0,\ldots, x_{k-1})\in\mathbb{X}_{k-1}$, let $\mathcal{S}_k$ be the operator given by
  \begin{equation}
  \label{eq:poisson_eq_general}
    \mathcal{S}_k = \frac{1}{\rho(x_0, \ldots, x_k)}\nabla_{x_k}\cdot\left(\rho(x_0,\ldots, x_k)D(x_0, \ldots, x_k)\nabla_{x_k} u(x_0,  \ldots, x_{k})\right),\quad f\in C^2(\mathbb{T}^d),
  \end{equation}
   and suppose that $\rho$ is smooth and uniformly positive and bounded, and the tensor $D(x_0,\ldots, x_k)$ is smooth and uniformly positive definite on $\mathbb{X}_{k}$.  Given a function $h$ which is smooth with bounded derivatives, such that for each $(x_0,\ldots, x_{k-1})\in\mathbb{X}_{k-1}$: 
  \begin{equation}
  \label{eq:centering_condition}
    \int h(x_0, \ldots, x_k)\rho(x_0,\ldots, x_k)\,d{x_k} = 0.
  \end{equation}
  Then there exists a unique, mean-zero solution $u \in {H}^1(\mathbb{T}^d)$, to the Poisson equation on $\mathbb{T}^d$ given by
  \begin{equation}
  \label{eq:sk_equation}
   \mathcal{S}_k u(x_0, \ldots, x_k) = h(x_0, \ldots, x_{k}),
  \end{equation}
  which is smooth and bounded with respect to the variable $x_k \in \mathbb{T}^d$ as well as the parameters $x_0, \ldots, x_{k-1} \in \mathbb{X}_{k-1}$.
\end{proposition}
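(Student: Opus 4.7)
The plan is to treat this as a periodic elliptic Fredholm problem: recast the equation in divergence form, apply Lax-Milgram on the mean-zero subspace, bootstrap regularity, and then handle smooth parameter dependence by differentiating the equation and verifying that the induced solvability conditions continue to hold.

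First, I would fix $(x_0,\ldots,x_{k-1})\in\mathbb{X}_{k-1}$ and rewrite $\mathcal{S}_k u = h$ in divergence form as $\nabla_{x_k}\cdot(\rho D\nabla_{x_k} u)=\rho h$. Testing against $v\in H^1(\mathbb{T}^d)$ leads to the bilinear form
\begin{equation*}
a(u,v) = \int_{\mathbb{T}^d} D(x_0,\ldots,x_k)\nabla_{x_k} u\cdot\nabla_{x_k} v\,\rho(x_0,\ldots,x_k)\,dx_k,
\end{equation*}
and the continuous linear functional $\ell(v)=-\int_{\mathbb{T}^d} h v\,\rho\,dx_k$. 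On the closed subspace $V=\{v\in H^1(\mathbb{T}^d):\int v\,dx_k=0\}$, continuity of $a$ is immediate from the $L^\infty$ bounds on $\rho$ and $D$, while coercivity follows from the uniform ellipticity of $D$, the uniform lower bound $\rho\ge\rho_0>0$, and the Poincaré-Wirtinger inequality on $\mathbb{T}^d$. The centering condition \eqref{eq:centering_condition} is precisely what makes $\ell$ vanish on constants, so solvability on $V$ by Lax-Milgram gives a unique mean-zero weak solution $u\in V$; since both $a(\cdot,1)$ and $\ell(1)$ vanish, the identity $a(u,v)=\ell(v)$ then extends to all of $H^1(\mathbb{T}^d)$, confirming that $u$ is a weak solution of \eqref{eq:sk_equation}.

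Next I would upgrade to classical smoothness in the fast variable $x_k$ by standard elliptic regularity: since the coefficients $\rho,D$ and the datum $h$ are smooth, a bootstrap argument (difference quotients in $x_k$, or iterated $H^m$-estimates for second-order uniformly elliptic operators on the torus) places $u\in C^\infty(\mathbb{T}^d)$, with quantitative bounds depending only on the $C^m$-norms of the coefficients and on the ellipticity and boundedness constants of $\rho$ and $D$.

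Finally, for smoothness in the parameters $(x_0,\ldots,x_{k-1})$, I would formally differentiate $\mathcal{S}_k u=h$ with respect to a parameter $x_j$, obtaining
\begin{equation*}
\mathcal{S}_k(\partial_{x_j} u) = \partial_{x_j} h - [\partial_{x_j},\mathcal{S}_k]u,
\end{equation*}
and verify that the centering condition for the new right-hand side follows automatically from the centering of $h$ together with the identity $\int\mathcal{S}_k w\cdot\rho\,dx_k=0$ and the chain rule applied to $\int h\rho\,dx_k\equiv 0$. This reduces the problem for $\partial_{x_j} u$ to the same class, so Lax-Milgram again yields a solution in $V$, which must coincide with the actual parameter derivative (justified rigorously either via difference quotients in $x_j$, whose uniform $H^1$-bounds follow from the quantitative Lax-Milgram estimate, or via a parametric implicit function theorem applied to the smooth family of isomorphisms $\mathcal{S}_k: V\to V^*/\mathrm{span}\{1\}$). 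Iterating this produces smoothness of all orders in $(x_0,\ldots,x_{k-1})$, with uniform bounds since the ellipticity constants are uniform in the parameters.

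The routine part is Lax-Milgram plus interior elliptic regularity; the main point requiring care is the parameter-differentiation step, where one must keep track of the fact that each differentiation produces a commutator $[\partial_{x_j},\mathcal{S}_k]u$ on the right-hand side and check that solvability of the resulting cascade of Poisson problems is preserved. This is where the smoothness of $\rho,D$, and the explicit form of the centering condition are essential.
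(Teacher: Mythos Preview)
Your proposal is correct and follows essentially the same route as the paper: existence and uniqueness via a Fredholm/Lax--Milgram argument on $\mathbb{T}^d$ using the centering condition, interior elliptic regularity for smoothness in $x_k$, and difference quotients/formal differentiation for smoothness in the parameters. The paper phrases the existence step via the Fredholm alternative (compact resolvent, adjoint nullspace spanned by $\rho$) rather than Lax--Milgram, and for parameter regularity it simply cites \cite{bensoussan1978asymptotic,pardoux2001poisson,veretennikov2011sobolev,pardoux2003poisson} instead of spelling out the commutator argument, but the content is the same.
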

\begin{proof}
\begin{sloppypar}
Since $\rho(\cdot)$ and $D(\cdot)$ are strictly positive, for fixed values of $x_0, \ldots, x_{k-1}$,  the operator $\mathcal{S}_k$ is uniformly elliptic, and since $\mathbb{T}^d$ is compact, $\mathcal{S}_k$ has compact resolvent in $L^2(\mathbb{T}^d)$, see \cite[Ch. 6]{evans1998partial} and \cite[Ch 7]{pavliotis2008multiscale}.  The nullspace of the adjoint $\mathcal{S}^*$  is spanned by a single function $\rho(x_0, \ldots, x_{k-1}, \cdot)$.  By the Fredholm alternative, a necessary and sufficient condition for the existence of $u$ is (\ref{eq:centering_condition}) which is assumed to hold.  Thus, there exists a unique solution $u(x_0,\ldots, x_{k-1},\cdot) \in H^1(\mathbb{T}^d)$ having mean zero with respect to $\rho(x_0, \ldots, x_{k})$.  By elliptic estimates and Poincar\'e's inequality, it follows that  there exists $C > 0$ satisfying
$$
\lVert u(x_0, \ldots, x_{k-1}, \cdot) \rVert_{H^1(\mathbb{T}^d)} \leq C\lVert h(x_0, \ldots, x_{k-1}, \cdot) \rVert_{L^2(\mathbb{T}^d)},
$$
for all $(x_0,\ldots, x_{k-1})\in\mathbb{X}_{k-1}$.  Since the components of $D$ and $\rho$ are smooth with respect to $x_k$, standard interior regularity results \cite{gilbarg2015elliptic} ensure that, for fixed ${x_0, \ldots, x_{k-1}\in \mathbb{X}_{k-1}}$, the function $u(x_0, \ldots, x_{k-1},\cdot)$ is smooth.  To prove the smoothness and boundedness with respect to the other parameters $x_0,\ldots, x_{k-1}$, we can apply an approach either similar to \cite{bensoussan1978asymptotic}, by showing that the finite differences approximation of the derivatives of $u$ with respect to the parameters has a limit, or otherwise, by directly differentiating the transition density of the semigroup associated with the generator $\mathcal{S}_k$ ,  see for example \cite{pardoux2001poisson,veretennikov2011sobolev,pardoux2003poisson} as well as \cite[Sec 8.4]{gilbarg2015elliptic}.
\end{sloppypar}
\end{proof}

\begin{remark}
Suppose that the function $h$ in Proposition \ref{prop:existence_poisson_general} can be expressed as 
$$
    h(x_0,\ldots, x_k) = a(x_0, x_1,\ldots, x_k)\cdot \nabla \phi_0(x_0)
  $$
  where $a$ is smooth with all derivatives bounded.  Then the mean-zero solution of (\ref{eq:sk_equation}) can be written as
  $$
    u(x_0, x_1,\ldots, x_k) = \chi(x_0,x_1,\ldots, x_k)\cdot\nabla\phi_0(x_0),
  $$
  where $\chi$ is the classical mean-zero solution to the following Poisson equation 
  $$\mathcal{S}_k\chi(x_0,\ldots, x_k) = a(x_0, \ldots,x_k), \quad (x_0,\ldots, x_{k}) \in \mathbb{X}_{k}.
  $$  
  In particular, $\chi$ is smooth and bounded over $x_0, \ldots, x_k$, so that for some $C > 0$,
  $$
    \lVert\nabla^{\alpha}u(x_0, \ldots, x_k)\rVert_F \leq C\sum_{k=0}^{\alpha_0}\lVert\nabla^{k+1}\phi_0(x_0)\rVert_F,\quad  \forall x_0,x_1,\ldots, x_k,
  $$ 
  for all multi-indices $\alpha=(\alpha_0, \ldots, \alpha_k)$ on the indices $(0, \ldots, k)$, where $\lVert \cdot \rVert_F$ denotes the Frobenius norm.  A similar decomposition is possible for 
  $$
    g(x_0,\ldots, x_k) = A(x_0, x_1,\ldots, x_k):\nabla^2 \phi_0(x_0),
  $$
  where $\nabla^2$ denotes the Hessian.
\end{remark}

\subsection{Contructing the test functions}
It is clear that we can rewrite (\ref{eq:sde_multiscale}) as 
\begin{equation}
\label{eq:nscale_sde_expanded}
  dX^\epsilon_t = -\sum_{i=0}^N \epsilon^{-i}\nabla_{x_i} V(x_1, \ldots,x_N)\Big|_{x_j = X_t^\epsilon/\epsilon^j}\,dt + \sqrt{2\sigma}\,dW_t.
\end{equation}
The generator of $X_t^\epsilon$ denoted by $\mathcal{L}^\epsilon$ can be decomposed into powers of $\epsilon$ as follows
$$
  \mathcal{L}^\epsilon = -\sum_{n=0}^{N}\epsilon^{-n}\nabla_{x_n}V\cdot\nabla_x + \sigma\Delta_{x}.
$$
For functions of the form $f^\epsilon(x) = f(x, x/\epsilon, \ldots, x/\epsilon^N)$ we  write 
$$
  \mathcal{L}^\epsilon f^\epsilon(x) = \sum_{n=0}^{2N}\epsilon^{-n} \mathcal{L}_n f(x_0, x_1 \ldots,x_N)\big|_{x_i = x/\epsilon^i},
$$
where 
$$
  \mathcal{L}_n = e^{ V/\sigma}\sum_{\substack{i, j \in \lbrace 1,\ldots N\rbrace \\ \, i+j = n}}\nabla_{x_i}\cdot\left(\sigma e^{-V/\sigma}\nabla_{x_j}\cdot \right)
$$
Given $\phi_0$, our objective is to construct a test function $\phi^\epsilon$ such that
\begin{align*}
  \phi^\epsilon(x) = &\phi_0(x) + \epsilon\phi_1(x,x/\epsilon) + \ldots +\epsilon^N\phi_N(x, x/\epsilon, \ldots, x/\epsilon^N) \\
  & + \epsilon^{N+1}\phi_{N+1}(x, x/\epsilon, \ldots, x/\epsilon^N) + \ldots + \epsilon^{2N}\phi_{2N}(x, x/\epsilon, \ldots, x/\epsilon^N),
\end{align*}
where $\phi_1, \ldots, \phi_{2N}$ satisfy
\begin{equation}
\label{eq:Lphi_eqn}
  \mathcal{L}^{\epsilon}\phi^\epsilon(x) = F(x) + O(\epsilon),
\end{equation}
for some $F$ which is independent of $\epsilon$.  This is equivalent to the following sequence of $N+1$ equations. 
\begin{subequations}
\begin{align}
\label{eq:asymp_exp1}&\mathcal{L}_{2N} \phi_N + \mathcal{L}_{2N-1} \phi_{N-1} + \ldots \mathcal{L}_N \phi_0 = 0,\\
\label{eq:asymp_exp2}&\mathcal{L}_{2N} \phi_{N+1} + \mathcal{L}_{2N-1}\phi_{N} + \ldots \mathcal{L}_{N-1}\phi_0 = 0,\\
&\notag\vdots\\
&\label{eq:asymp_expN_m_1}\mathcal{L}_{2N} \phi_{2N-1} + \ldots + \mathcal{L}_{1}\phi_0  = 0,\\
& \label{eq:asymp_O1}\mathcal{L}_{2N}\phi_{2N} + \ldots + \mathcal{L}_{0}\phi_0  = F(x),
\end{align}
\end{subequations}
where $F(x)$ is a function of $x$ only.  This generalizes the analogous expansion found in \cite[III-11.3]{bensoussan1978asymptotic}, written for three scales.  These $N+1$ equations correspond to the different powers of $\epsilon$ in an expansion of  $\mathcal{L}^\epsilon \phi^\epsilon$, from $O(\epsilon^{-N})$ to $O(1)$.  For $k=1,\ldots, N$, we note that each term in (\ref{eq:asymp_exp1}), (\ref{eq:asymp_exp2}) to (\ref{eq:asymp_expN_m_1}) has the form
$$
  \sigma e^{V(x_0, \ldots, x_N)/\sigma}\nabla_{x_s}\cdot\left(e^{- V(x_0, \ldots, x_N)/\sigma}\nabla_{t}\phi_{r}\right),
$$
where $k = s + t - r \in \lbrace 1,\ldots, N\rbrace$.  Suppose that $s = 0$, so that $t = k +r$, where $t \in \lbrace 1, \ldots, N\rbrace$ and $r \in \lbrace 0, \ldots, N-1 \rbrace$.  Thus $r < t$, which is a contradiction.  It follows  necessarily that $s \geq 1$, for every term in the first $N$ equations.  In particular, since we have
$$
  V(x_0, \ldots, x_{N}) = V_0(x_0) + V_1(x_0,\ldots, x_N),  
$$ 
we can rewrite the first $N$ equations as 
\begin{subequations}
\label{eq:factorized}
\begin{align}
\label{eq:asymp_exp1_2}&\mathcal{A}_{2N} \phi_N + \mathcal{A}_{2N-1} \phi_{N-1} + \ldots \mathcal{A}_N \phi_0 = 0,\\
\label{eq:asymp_exp2_2}&\mathcal{A}_{2N} \phi_{N+1} + \mathcal{A}_{2N-1}\phi_{N} + \ldots \mathcal{A}_{N-1}\phi_0 = 0,\\
&\notag\vdots\\
&\label{eq:asymp_expN_m_1_2}\mathcal{A}_{2N} \phi_{2N-1} + \ldots + \mathcal{A}_{1}\phi_0  = 0,
\end{align}
\end{subequations}
where
$$
\mathcal{A}_n f = \sigma e^{ V_1(x_0, \ldots, x_N)/\sigma}\sum_{\substack{i\in\lbrace 1,\ldots, N\rbrace \\ j\in\lbrace 0,\ldots, N-1\rbrace \\  i+j=n}} \nabla_{x_i}\cdot\left(e^{- V_1(x_0, \ldots, x_N)/\sigma}\nabla_{x_j}f \right)
$$

\begin{sloppypar}
Before constructing the test functions, we first we introduce the sequence of spaces on which the sequence of correctors will be constructed.  Define $\mathcal{H}$ to be the space of functions on the extended state space, i.e. $\mathcal{H} = L^2(\mathbb{X}_k)$, where $\mathbb{X}_k$ is defined by (\ref{eq:XK}).  We construct the following sequence of subspaces of $\mathcal{H}$.  Let
$$
  \mathcal{H}_{N} = \left\lbrace f \in \mathcal{H} \, : \, \int f(x_0, \ldots, x_{N})e^{- V_1/\sigma}\,dx_{N} = 0 \right\rbrace,
$$
Then clearly $\mathcal{H} = \mathcal{H}_{N} \oplus \mathcal{H}_{N}^\perp$.  Suppose we have defined $\mathcal{H}_{N-k+1}$ then we can define $\mathcal{H}_{N-k}$ inductively by
$$
  \mathcal{H}_{N-k} =  \left\lbrace f \in \mathcal{H}_{N-k+1} \, : \, \int f(x_0, \ldots, x_{N-k})Z_{N-k}(x_0, \ldots, x_{N-k})\,dx_{N-k} = 0 \right\rbrace,
$$
where $Z_{i}(x_0, \ldots, x_i) = \int\ldots\int e^{- V_1(x_0, \ldots, x_N)/\sigma}\,d{x_{i+1}}\,dx_{i+2}\ldots \,dx_N = 0.$   Clearly, we have that $\mathcal{H}_1 \oplus \mathcal{H}_1^\bot \oplus  \ldots \oplus \mathcal{H}_N^\bot = \mathcal{H}$.  We now construct a series of correctors $\theta_1, \ldots, \theta_N$ which are used to define the test functions.  Define 
 $$\mathcal{K}_N(x_0, x_1, \ldots, x_N) = e^{-V_1(x_0, x_1, \ldots,x_N)/\sigma^2}I.$$
 We note that the matrix $\mathcal{K}_N$ is uniformly positive definite over $\mathbb{X}_N$.  Fixing ${x_0, x_1, \ldots, x_{N-1}}$, let $\theta_N$ be the solution of the vector--valued Poisson equation
\begin{equation}
\label{eq:poisson_eqn_theta_n}
  \nabla_{x_N}\cdot\left(\mathcal{K}_N\left(\nabla_{x_N}\theta_N + I\right)\right) = 0, \quad x_N \in \mathbb{T}^d,
\end{equation}
where the notation $\left(\nabla_{x}\theta\right)_{i, j} = \partial_{x_{j}} \theta_{i},\quad i, j \in \lbrace 1, \ldots, d\rbrace$ is used.  By Proposition \ref{prop:existence_poisson_general}, for each $(x_0, \ldots, x_{N-1})$ there exists a unique smooth solution $\theta_N(x_0,\ldots, x_{N-1}, \cdot)$ which is also smooth with respect to the parameters $x_0, \ldots, x_{N-1}$.  Now, suppose that $\mathcal{K}_i$ and $\theta_i$ have been defined for $i \in \lbrace N, \ldots, N-k+1 \rbrace$, define
\begin{equation}
\label{eq:K_i}
\begin{aligned}
  \mathcal{K}_{N-k}&(x_0, x_1, \ldots, x_{N-k}) \\
  &= \int (I + \nabla_N\theta_N^\top)\cdots (I + \nabla_{N-k+1}\theta_{N-k+1}^\top)e^{-V_1/\sigma^2} \,dx_{N}\ldots dx_{N-k+1}.
  \end{aligned}
\end{equation}
Then by Lemma \ref{lem:corr_existence_ellipticity} the matrix $\mathcal{K}_{N-k}$ is strictly positive definite over $(x_0, \ldots, x_{N-k})$ and so there exists a unique vector--valued solution $\theta_{N-k}$ in $\left(H^1(\mathbb{T}^d)\cap \mathcal{H}_{N-k}\right)^{d}$ to the Poisson equation:
\begin{equation}
  \label{eq:theta_k}
  \nabla_{x_{N-k}}\cdot\left(\mathcal{K}_{N-k}(\nabla_{x_{N-k}}\theta_{N-k} + I)\right) = 0, \quad x_{N-k} \in \mathbb{T}^d.
\end{equation}
\end{sloppypar}

\begin{proposition} 
\label{prop:test_functions}
Given $\phi_0 \in C^\infty(\mathbb{R}^d)$, there exist smooth functions $\phi_i$ for $i=1,\ldots, 2N-1$ such that equations  (\ref{eq:asymp_exp1_2})-(\ref{eq:asymp_expN_m_1_2}) are satisfied, and moreover we have the following pointwise estimates, which hold uniformly on $x_0,\ldots, x_k \in \mathbb{X}_k$:
\begin{equation}
\label{eq:pointwise_estimates}
  \lVert \nabla^{\alpha} \phi_i(x_0,\ldots, x_k)\rVert_F\leq C\sum_{l=1}^{\alpha_0+2}\lVert\nabla^{l}_{x_0}\phi_0(x_0)\rVert_F,
\end{equation}
for some constant $C > 0$, and all multiindices $\alpha$ on $(0, \ldots, k)$, and all $0 \leq k \leq i \leq 2N-1$.   Finally, equation  (\ref{eq:asymp_O1})  is satisfied with 
\begin{equation}
\label{eq:F}
F(x) = \frac{1}{Z(x)}\nabla_{x_0}\left(\mathcal{K}_1(x)\nabla_{x_0}\phi_0(x)\right).
\end{equation}
\end{proposition}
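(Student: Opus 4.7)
The plan is to solve the cascade of equations (\ref{eq:asymp_exp1_2})--(\ref{eq:asymp_expN_m_1_2}) recursively from the most singular order $\epsilon^{-N}$ downward, and then read off $F(x)$ from (\ref{eq:asymp_O1}). I adopt the ansatz that $\phi_k=\phi_k(x_0,x_1,\ldots,x_k)$ depends on $x_0$ only through derivatives of $\phi_0$, so that the inductive bound (\ref{eq:pointwise_estimates}) propagates at each stage. The workhorse at every step is Proposition \ref{prop:existence_poisson_general}: the Poisson equation posed on the innermost unresolved torus is solvable precisely when its right-hand side is centered with respect to the appropriate weighted measure, and this centering will be verified using the nested cell problems (\ref{eq:theta_k}) and the tensors $\mathcal{K}_k$ from Lemma \ref{lem:corr_existence_ellipticity}.

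Concretely, equation (\ref{eq:asymp_exp1_2}) is a Poisson equation in $x_N$ for $\phi_N$ whose leading part is $\sigma e^{V_1/\sigma}\nabla_{x_N}\cdot(e^{-V_1/\sigma}\nabla_{x_N}\cdot)$ and whose right-hand side is already in $x_N$-divergence form; the centering condition against $e^{-V_1/\sigma}\,dx_N$ therefore holds automatically, and Proposition \ref{prop:existence_poisson_general} gives $\phi_N$ as a sum of terms of the form $\theta_N^{(\alpha)}(x_0,\ldots,x_N)\,\nabla^\alpha \phi_0(x_0)$, where $\theta_N$ is the corrector from (\ref{eq:poisson_eqn_theta_n}). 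Moving to (\ref{eq:asymp_exp2_2}), the centering condition in $x_N$ is a nontrivial constraint. Integrating the source out in $x_N$ and using the cell problem for $\theta_N$ converts this condition into a Poisson equation in $x_{N-1}$ whose leading tensor is precisely $\mathcal{K}_{N-1}$; this is uniformly elliptic by Lemma \ref{lem:corr_existence_ellipticity}, and Proposition \ref{prop:existence_poisson_general} applied with weight proportional to $Z_{N-1}\,dx_{N-1}$ determines $\phi_{N-1}$ in terms of $\theta_{N-1}$ and derivatives of lower $\phi_j$'s. Iterating this doubly-nested recursion through (\ref{eq:asymp_expN_m_1_2}) produces the complete sequence $\phi_1,\ldots,\phi_{2N-1}$, and the pointwise bounds (\ref{eq:pointwise_estimates}) follow at each step from the remark after Proposition \ref{prop:existence_poisson_general}.

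To extract $F$ I integrate (\ref{eq:asymp_O1}) against $e^{-V/\sigma}\,dx_N\cdots dx_1$. Every term involving $\phi_j$ with $j\geq 1$ can be reorganized using the cell equations; after $N$ successive integrations by parts in $x_N,x_{N-1},\ldots,x_1$, each contributing a factor $(I+\nabla_{x_i}\theta_i^\top)$ via the identity exploited in the proof of Lemma \ref{lem:corr_existence_ellipticity}, the left-hand side telescopes into $\nabla_{x_0}\cdot(\mathcal{K}_1\nabla_{x_0}\phi_0)$ divided by $Z(x_0)$, which matches (\ref{eq:F}).

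The main obstacle will be the combinatorial bookkeeping: each $\mathcal{L}_n$ at intermediate orders couples several $\phi_k$'s across different scales, and after peeling off the divergence terms supplied by the cell problems one must verify at every level that the remaining source is centered with respect to the invariant measure of the next operator in the cascade. The uniform ellipticity of every $\mathcal{K}_k$ from Lemma \ref{lem:corr_existence_ellipticity} is what allows the Fredholm alternative to be applied with estimates that are uniform in $(x_0,\ldots,x_{k-1})$, ensuring that the recursive construction terminates with the pointwise derivative bounds claimed in (\ref{eq:pointwise_estimates}).
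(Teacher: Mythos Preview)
Your proposal is correct and follows essentially the same route as the paper: process the cascade from $O(\epsilon^{-N})$ downward, invoke Proposition~\ref{prop:existence_poisson_general} at each level so that the centering condition in $x_{N-k+1}$ becomes a uniformly elliptic Poisson equation in $x_{N-k}$ with tensor $\mathcal{K}_{N-k}$, express the solutions through the correctors $\theta_k$, and finally read off $F$ by integrating (\ref{eq:asymp_O1}) over all fast variables. The paper makes the ``doubly-nested recursion'' you anticipate fully explicit by introducing the filtration $\mathcal{H}_N\supset\mathcal{H}_{N-1}\supset\cdots$ and decomposing each $\phi_j$ into a determined part plus a remainder $r_j^{(m)}\in\mathcal{H}_{N-m}^\perp$ that is fixed only at a later stage, but this is precisely the bookkeeping you flag as the main obstacle.
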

\begin{proof}
We start from the $O(\epsilon^{-N})$ equation.  Since the operator $\mathcal{A}_{2N}$ has a compact resolvent in $L^2(\mathbb{T}^d)$,  by the Fredholm alternative a necessary and sufficient condition for $\phi_N$ in (\ref{eq:asymp_exp1}) to have a solution is that
 $$
  \int \left(\mathcal{A}_{2N-1} \phi_{N-1} + \mathcal{A}_{2N-2}\phi_{N-2} + \ldots + \mathcal{A}_{N} \phi_0\right) e^{-V/\sigma}\,dx_N = 0.
 $$
We can check that the only non-zero terms in the above summation are:
$$
  \mathcal{A}_i \phi_i = \sigma e^{V/\sigma}\nabla_{x_N}\cdot\left(e^{-V/\sigma}\nabla_{x_{N-i}} \phi_{i}\right),
$$
for $i = 1,\ldots, N$, so that the compatibility condition holds, by the periodicity of the domain.  Then $\theta_N$ defined by (\ref{eq:poisson_eqn_theta_n}) is the unique mean--zero solution of 
$$
  \mathcal{A}_{2N} \theta_N = \nabla_{x_N}\cdot e^{-V/\sigma},
$$
then the solution $\phi_N$ to (\ref{eq:asymp_exp1}) can be written as 
\begin{equation}
  \phi_N = \theta_N \cdot \left(\nabla_{x_{N-1}}\phi_{N-1} + \ldots + \nabla_{x_{0}}\phi_{0} \right) + r_N^{(1)}(x_0, \ldots, x_{N-1}),
\end{equation}
where  
$$\theta_N\cdot(\nabla_{x_N-1}\phi_{N-1} + \ldots + \nabla_{x_0}\phi_0) \in  \mathcal{H}_N$$
and $r_N^{(1)} \in \mathcal{H}_N^\bot$ has not yet been specified.  A sufficient condition for $\phi_{N+1}$ to have a solution in (\ref{eq:asymp_exp2}) is that
\begin{equation}
\label{eq:compat_cndn2}
\int \left(\mathcal{A}_{2N-1}\phi_N + \ldots + \mathcal{A}_{N-2} \phi_1 + \mathcal{A}_{N-1}\phi_0 \right) e^{-V/\sigma}\,dx_N = 0.
\end{equation}
Since $r_N^{(1)}$ does not depend on $x_N$ it follows that:
$$
  \int e^{-V/\sigma}\mathcal{A}_{2N-1} \phi_N \,d{x_N} = \nabla_{x_{N-1}}\cdot\left(\int e^{-V/\sigma}\nabla_{x_{N}}\theta_N \left(\nabla_{x_{N-1}}\phi_{N-1} \ldots + \nabla_{x_0}\phi_0\right)\,dx_N\right),
$$
thus (\ref{eq:compat_cndn2}) can be written as
\begin{align*}
 0= &\nabla_{x_{N-1}}\cdot\left(\int e^{-V/\sigma}\nabla_{x_{N}}\theta_N \left(\nabla_{x_{N-1}}\phi_{N-1} \ldots + \nabla_{x_0}\phi_0\right)\,dx_N\right)\\
  & + \nabla_{x_{N-1}}\cdot\left(\int e^{-V/\sigma} \,dx_{N} \left(\nabla_{x_{N-1}}\phi_{N-1} + \ldots +\nabla_{x_0}\phi_0\right)\right),
\end{align*}
resulting in the following equation for $\phi_{N-1}$:
\begin{equation}
\label{eq:corrector_o2}
 \nabla_{x_{N-1}}\cdot\left(\mathcal{K}_N\nabla_{x_{N-1}}\phi_{N-1}\right) = -\nabla_{x_{N-1}}\cdot\mathcal{K_N}\left(\nabla_{x_{N-2}}\phi_{N-2} + \ldots + \nabla_{x_0}\phi_0\right) = 0,
\end{equation}
where 
$$
\mathcal{K}_N = \int\left(I + \nabla_{x_{N}}\theta_N \right) e^{-V/\sigma}\, dx_N.
$$
By Lemma \ref{lem:corr_existence_ellipticity}, for fixed $x_0, x_1, \ldots, x_{N-1}$ the tensor $\mathcal{K}_N$ is uniformly positive definite over $x_{N-1} \in \mathbb{T}^d$. As a consequence, the operator defined in (\ref{eq:corrector_o2}) is uniformly elliptic, with adjoint nullspace spanned by $Z_N(x_0, x_1,\ldots, x_{N-1})$.  Since the right hand side has mean zero, this implies that a solution $\phi_{N-1}$ exists.  Indeed,  we can write $\phi_{N-1}$ as 
$$
  \phi_{N-1} = \theta_{N-1}\cdot \left(\nabla_{x_{N-2}}\phi_{N-2} + \ldots + \nabla_{x_0}\phi_0\right) + r_{N-1}^{(1)}(x_0, \ldots, x_{N-2}),
$$
where $r^{(1)}_{N-1} \in \mathcal{H}_{N-1}^\bot$ is still unspecified.  Since (\ref{eq:compat_cndn2}) has been satisfied, it follows from Proposition \ref{prop:existence_poisson_general} that there exists a unique decomposition of $\phi_{N+1}$ into
$$
\phi_{N+1}(x_0, x_1, \ldots, x_{N}) = \widetilde{\phi}_{N+1}(x_0, x_1, \ldots, x_{N}) + r_{N+1}^{1}(x_0, x_1, \ldots, x_{N-1}),
$$
where  $\widetilde{\phi}_{N+1} \in \mathcal{H}_{N}$ and $r_{N+1}^{(1)} \in \mathcal{H}_{N}^\bot,$  such that $r_{N+1}^{(1)}$ is still unspecified.  For the sake of illustration we now consider the $O(\epsilon^{-(N-2)})$ equation in (\ref{eq:factorized}).  This equation for $\phi_{N+2}$ has a solution if and only if
$$
  \int \left(\mathcal{A}_{2N-1}\phi_{N+1} +  \mathcal{A}_{2N-2}\phi_{N} + \ldots + \mathcal{A}_{N-2}\phi_{0}\right)\, e^{-V/\sigma}dx_N = 0.
$$
Fixing the variables $x_0, \ldots, x_{N-2}$, we can rewrite the above equation as:
\begin{equation}
\label{eq:rN1}
  \widetilde{\mathcal{A}}_{2N-2} r_N^{(1)} := \nabla_{N-1}\cdot \left(Z_{N-1} \nabla_{N-1}r_{N}^{(1)}\right) =  -RHS,
\end{equation}
where the $RHS$ contains all the remaining terms.  We note that all the functions of $x_{N-1}$ in the RHS are known,  so that all the remaining undetermined terms can be viewed as constants for fixed $x_0, \ldots, x_{N-2} \in \mathbb{X}_{N-2}$.  A necessary and sufficient condition for a unique mean zero solution to exist to (\ref{eq:rN1}) is that the RHS has integral zero with respect to $x_{N-1}$, which is equivalent to:
$$
  \nabla_{N-2}\cdot\left(\int\int \left(\nabla_N \phi_N + \nabla_{N-1}\phi_{N-1} + \ldots + \nabla_{0}\phi_0 \right) e^{-V/\sigma}\,dx_N dx_{N-1}\right)=0,
$$
or equivalently:
$$
  \nabla_{N-2}\cdot\left(\mathcal{K}_{N-2}\left(\nabla_{N-2}\phi_{N-2} + \ldots + \nabla_0 \phi_0\right)\right)=0.
$$
Once again, this implies that
$$
  \phi_{N-2} = \theta_{N-2}\cdot\left(\nabla_{N-3}\phi_{N-3} + \ldots + \nabla_{0}\phi_0\right) + r_{N-2}^{(1)}(x_0, \ldots, x_{N-3}),
$$
where $r_{N-2}^{(1)} \in \mathcal{H}_{N-2}^{\bot}$ is  unspecified.  Since the compatibility condition holds, by Proposition \ref{prop:existence_poisson_general} equation (\ref{eq:rN1}) has a solution, so that we can write
$$
  r_{N}^{(1)}(x_0, \ldots, x_{N-1}) = \widetilde{r}_{N}^{(1)}(x_0, \ldots, x_{N-1}) + r_{N}^{(2)}(x_0, \ldots, x_{N-2}),
$$
where $\widetilde{r}^{(1)}_{N} \in \mathcal{H}_{N-1}$ is the unique smooth solution of (\ref{eq:rN1}) and for some $r_{N}^{(2)} \in \mathcal{H}_{N-1}^\bot$.
\\\\
For the inductive step, suppose that for some $k < N$, the functions $\phi_N, \ldots \phi_{N\pm(k-1)}$ have all been determined.  We shall consider the case when $k$ is even, noting that the $k$ odd case follows \emph{mutatis mutandis}.  From the previous steps, each term in 
$$
\phi_{N+k-2}, \phi_{N+k-4}, \ldots, \phi_{N-k-2},
$$
admits a decomposition such that in each case we can write:
$$
  \phi_{N+k - 2i} = \widetilde{\phi}_{N+k-2i} + r^{(k/2-i)}_{N+k-2i},
$$
where 
$$\widetilde{\phi}_{N+k-2i}  \in \mathcal{H}_{k/2-i},$$ 
has been uniquely specified, and the remainder term
$$r_{N+k-2i}^{(k/2-i)} \in \mathcal{H}_{k/2-i}^\bot,$$ 
remains to be determined.  The $O(\epsilon^{N-k})$ equation is given by
\begin{equation}
\label{eq:ON_minus_k}
  \mathcal{A}_{2N}\phi_{N+k} + \mathcal{A}_{2N-1}\phi_{N+k-1} + \ldots + \mathcal{A}_{N-k}\phi_{0} = 0. 
\end{equation}
Following the example of the $O(\epsilon^{N-2})$ step.  In descending order we successively apply the compatibility conditions which must be satisfied for the equations involving $r_{N+k}^{(1)}, \ldots, r_{N-k-2}^{(k-1)}$ of the form
\begin{equation}
\label{eq:general_remainder}
  \mathcal{\widetilde{A}}_{2N-2k-2i}r_{N+k-2i}^{(k/2-i)} = RHS,
\end{equation}
where in (\ref{eq:general_remainder}), all terms dependent on the variable $x_{k/2-i}$ have been specified uniquely and where
$$
  \widetilde{A}_{2N-2k-2i} u = \nabla_{x_{N-k-i}}\cdot\left(Z_{N-k-i}\nabla_{x_{N-k-i}}u\right).
$$  
This results in (\ref{eq:ON_minus_k}) being integrated with respect to the variables $N, \ldots, N-k+1$.  In particular, all terms $\mathcal{A}_{2N-j}\phi$ for $j=0, \ldots, k-1$ will have integral zero, and thus vanish.  The resulting equation is then
\begin{equation}
\label{eq:inductive_step_integral}
  \int\ldots\int \left(\mathcal{A}_{2N-k}\phi_{N} + \ldots + \mathcal{A}_{N-k}\phi_0\right)e^{-V/\sigma}\,dx_N\ldots dx_{N-k+1} = 0.
\end{equation}
Moreover, since the function $\phi_{N-i}$ depends only on the variables $x_0, \ldots, x_{N-i}$ , then (\ref{eq:inductive_step_integral}) must be of the form
$$
  \nabla_{N-k}\cdot\left(\int\ldots\int \left(\nabla_{x_N}\phi_N + \ldots \nabla_{x_{N-1}}\phi_{N-1} + \ldots \nabla_{x_0}\phi_0\right) e^{-V/\sigma}\,dx_{N}\ldots dx_{N-k+1}\right) = 0.
$$
We now apply the inductive hypothesis to see that
\begin{align*}
& \int \left(\nabla_{x_N} \phi_N + \ldots \nabla_0\phi_0\right) e^{-V/\sigma}\,dx_{N}\cdots dx_{N-k+1} \\ &= \int \int\left(\nabla_{x_N} \theta_N + I\right)dx_N\left(\nabla_{N-1}\phi_{N-1} + \ldots + \nabla_{x_0}\phi_0\right) e^{-V/\sigma}\,dx_{N-1}\cdots dx_{N-k+1}\\
& = \int \int\int\left(\nabla_{x_N} \theta_N + I\right)\,dx_{N}\left(\nabla_{x_{N-1}} \theta_{N-1} + I\right)\,dx_{N-1}\left(\nabla_{x_{N-2}}\phi_{N-2} + \ldots + \nabla_{x_0}\phi_0\right) e^{-V/\sigma}\,dx_{N-2}\cdots dx_{N-k+1}\\
&\vdots\\
&= \mathcal{K}_{N-k+1}\left(\nabla_{x_{N-k}} \phi_{N-k} + \ldots \nabla_{x_0}\phi_0\right).
\end{align*}
Thus, the compatibility condition for the $O(\epsilon^{N-k})$ equation reduces to the elliptic PDE
$$
  \nabla_{x_k}\cdot\left(\mathcal{K}_{x_{N-k}}\left(\nabla_{N-k} \phi_{x_{N-k}} + \ldots \nabla_{x_0}\phi_0\right)\right) = 0,
$$
so that $\phi_{N-k}$ can be written as 
$$
  \phi_{N-k} = \theta_{N-k}\left(\nabla_{x_{N-k-1}}\phi_{x_{N-k-1}} + \ldots \nabla_{x_{0}}\phi_0\right) + r_{N-k}^{(1)},
$$
where $r_{N-k}^{(1)}$ is an element of $\mathcal{H}_{N-k}^\bot,$ which is yet to be determined.  Moreover, each remainder term $r_{N+k-2i}^{(k/2-i)}$ can be further decomposed as 
$$
r_{N+k-2i}^{(k/2-i)} = \widetilde{r}_{N+k-2i}^{(k/2-i)}  + {r}_{N+k-2i}^{(k/2-i+1)}, 
$$
where 
$$\widetilde{r}_{N+k-2i}^{(k/2-i)}  \in \mathcal{H}_{k/2-i+1},$$
is uniquely determined and 
$${r}_{N+k-2i}^{(k/2-i+1)}  \in \mathcal{H}_{k/2-i+1}^\bot,$$
is still unspecified.  Continuing the above procedure inductively, starting from a smooth function $\phi_0$ we construct a series of correctors $\phi_1, \ldots, \phi_{2N-1}$.  
\\\\
We now consider the final equation (\ref{eq:asymp_O1}).  Arguing as before, we note that we can rewrite (\ref{eq:O1}) as 
\begin{equation}
\mathcal{A}_{2N}\phi_{2N} + \ldots \mathcal{A}_{N+1}\phi_{N+1} = F(x) - \sum_{i=1}^{N}\mathcal{L}_i \phi_i.
\end{equation}
A necessary and sufficient condition for $\phi_{2N}$ to have a solution is that
\begin{equation}
\label{eq:O1}
\begin{aligned}
  \int_{\mathbb{T}^d} &\left(\mathcal{A}_{2N-1}\phi_{2N-1} + \ldots + \mathcal{A}_{N+1} \phi_{N+1} \right)e^{-V/\sigma}\,dx_N \\
   & = \int_{\mathbb{T}^d} \left(F(x)-\sum_{i=1}^N\mathcal{L}_i\phi_i \right)e^{-V/\sigma}\,dx_N.
  \end{aligned}
\end{equation}
At this point, the remainder terms will be of the form 
$$
r_{2N-2}^{(1)}, r_{2N-4}^{(2)}, \ldots r_{2N-2k}^{(k)}, \ldots, r_{2}^{(1)},
$$
such that $r_{2N-2i}^{(i)} \in \mathcal{H}_{i}^{\bot}$, is unspecified.  Starting from $r_{2N-2}^{(1)}$ a necessary and sufficient condition for the remainder $r_{2N-2i}^{(i)}$ to exist is that the integral of the equation with respect to $dx_{N-i}$ vanishes, i.e.
\begin{equation}
\label{eq:O1_final}
\begin{aligned}
 F(x)Z(x) &= \int_{(\mathbb{T}^d)^N} \left(\mathcal{A}_{2N-1}\phi_{2N-1} + \ldots \mathcal{A}_{N+1}\phi_{N+1}\right)e^{-V/\sigma}\,dx_N dx_{N-1}\ldots dx_{1} \\
          &\quad +  \int_{(\mathbb{T}^d)^N} \left(\mathcal{L}_{N}\phi_{N} + \ldots \mathcal{L}_{1}\phi_1\right)e^{-V/\sigma}\,dx_N dx_{N-1}\ldots dx_{1}
\end{aligned}
\end{equation}
where $$Z(x) = \int_{\mathbb{T}^d}\ldots\int_{\mathbb{T}^d} e^{-V/\sigma}\,dx_N\ldots dx_{1}.$$  As above, after simplification, (\ref{eq:O1_final}) becomes
$$
  \nabla_{x_0}\cdot\left(\nabla_{x_N}\phi_N + \ldots +\nabla_{x_0}\phi_0 \right) =Z(x)F(x), 
$$
which can be written as 
$$
  \frac{\sigma}{Z(x)}\nabla_{x_0}\cdot\left(\int_{(\mathbb{T}^d)^N}\left(I + \nabla_{x_N}\theta_N\right)\cdot\ldots \cdot\left(I + \nabla_{x_1}\theta_1\right)e^{-V/\sigma}\,dx_N\ldots dx_1\nabla_{x_0}\phi_0\right) = F(x),
$$
or more compactly
$$
  F(x) = \frac{\sigma}{Z(x)}\nabla_{x_0}\cdot\left(\mathcal{K}_1(x) \nabla_{x_0}\phi_0(x)\right),
$$
where the terms in the right hand side have been specified and are unique.   Thus, the $O(1)$ equation (\ref{eq:O1_final}) provides a unique expression for $F(x)$.  Moreover, for each $i=1,\ldots, N-1$, there exists a smooth unique solution $r_{2N-2i}^{(i)} \in  \mathcal{H}_{i-1}$ and $\phi_{2N} \in \mathcal{H}_{2N}$ by Proposition \ref{prop:existence_poisson_general}.  
\\\\
Note that we have not uniquely identified the functions $\phi_1, \ldots, \phi_{2N}$, since after the above $N$ steps there will be remainder terms which are still unspecified.  However, conditions (\ref{eq:asymp_exp1_2})-(\ref{eq:asymp_expN_m_1_2}) will hold for any choice of remainder terms which are still unspecified.  In particular, we can set all the remaining unspecified remainder terms to $0$.  Moreover,  every Poisson equation we have solved in the above steps has been of the form:
$$
\mathcal{S}_k u(x_0,\ldots, x_k) = a(x_0,\ldots,x_k)\cdot\nabla_{x_0}\phi_0(x_0) + A(x_0,\ldots, x_k):\nabla^2_{x_0}\phi_0(x_0),
$$
where $\mathcal{S}_k$ is of the form (\ref{eq:poisson_eq_general}), and $a$ and $A$ are uniformly bounded with bounded derivatives. In particular, from the remark following Proposition \ref{prop:existence_poisson_general} the pointwise estimates (\ref{eq:pointwise_estimates}) hold. 
\end{proof}

\begin{remark}
Although we do not have an explicit formula for the test functions, for $i = 1,\ldots, N$, we have that an expression for the gradient of $\phi_i$ in terms of the correctors $\theta_i$:
$$
\nabla_{x_i}\phi_i = \nabla_{x_i}\theta_i(1 + \nabla_{x_{i-1}}\theta_{x_{i-1}})\cdot\cdots\cdot (1 + \nabla_{x_{1}}\theta_{x_{1}})\nabla_{x_0}\phi_0.
$$
As we shall see, these are the only terms that are required for the calculation of the homogenized diffusion tensor, thus we can obtain an explicit characterisation of the effective coefficients.
\end{remark}

\subsection{Tightness of Measures}
In this section we establish the weak compactness of the family of measures corresponding to $\lbrace X_t^\epsilon : 0 \leq t \leq T\rbrace_{0 < \epsilon \leq 1\rbrace}$ in $C([0,T]; \mathbb{R}^d)$ by establishing tightness. Following \cite{pardoux2001poisson}, we verify the following two conditions which are a slight modification of the sufficient conditions stated in \cite[Theorem 8.3]{billingsley2008probability}. 

\begin{lemma}
The collection $\lbrace X_t^\epsilon\, : \, 0 \leq t \leq T\rbrace_{\lbrace 0 < \epsilon \leq 1\rbrace}$ is relatively compact in $C([0,T]; \mathbb{R}^d)$ if it satisfies: 
\begin{enumerate}
  \item \label{it:tight1}For all $\delta > 0$, there exists $M > 0$ such that 
  $$
    \mathbb{P}\left(\sup_{0 \leq t \leq T}|X_t^\epsilon| > M \right) \leq \delta,   \quad 0 < \epsilon \leq 1.
  $$
  \item \label{it:tight2} For any $\delta > 0$, $M > 0$, there exists $\epsilon_0$ and $\gamma$ such that
  $$
  \gamma^{-1}\sup_{0<\epsilon<\epsilon_0}\sup_{0\leq t_0 \leq T}\mathbb{P}\left(\sup_{t\in[t_0, t_0+\gamma]}\left|X_t^\epsilon - X_{t_0}^\epsilon\right| \geq \delta \, ; \, \sup_{0 \leq s \leq T}|X_s^\epsilon| \leq M\right)\leq \delta.
  $$
\end{enumerate}
\end{lemma}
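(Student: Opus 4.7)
The plan is to verify the classical Billingsley tightness criterion for probability measures on $C([0,T];\mathbb{R}^d)$, namely (a) tightness of the laws of $X_0^\epsilon$ and (b) uniform control of the modulus of continuity $w(X^\epsilon,\gamma) := \sup_{|s-t|\leq \gamma}|X_s^\epsilon - X_t^\epsilon|$; by Prokhorov's theorem this yields the desired relative compactness. Condition (a) is immediate from item~1 of the hypothesis, since $\{|X_0^\epsilon| > M\} \subseteq \{\sup_{0\leq t\leq T}|X_t^\epsilon| > M\}$, so the work is entirely in showing
\[
\forall\,\delta_0,\eta_0 > 0,\ \exists\,\gamma,\epsilon_0 > 0\ :\ \sup_{\epsilon<\epsilon_0}\mathbb{P}\bigl(w(X^\epsilon,\gamma)\geq\delta_0\bigr)\leq\eta_0,
\]
which I extract from item~2 by a partition-and-union-bound.

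Given $\delta_0,\eta_0 > 0$, I first invoke item~1 to obtain $M$ with $\mathbb{P}(\sup_{s\leq T}|X_s^\epsilon|>M)\leq\eta_0/2$ uniformly in $\epsilon$. The calibration step is to apply item~2 not with the target threshold itself but with $\delta := \min\{\delta_0/3,\ \eta_0/(2(T+1))\}$, which produces $\gamma,\epsilon_0 > 0$ such that for all $\epsilon<\epsilon_0$ and all $t_0\in[0,T]$,
\[
\mathbb{P}\Bigl(\sup_{t\in[t_0,t_0+\gamma]}|X_t^\epsilon - X_{t_0}^\epsilon|\geq\delta\,;\,\sup_{s\leq T}|X_s^\epsilon|\leq M\Bigr)\leq \gamma\delta.
\]
Partition $[0,T]$ into $N = \lceil T/\gamma\rceil$ subintervals $[t_k,t_{k+1}]$ with $t_k = k\gamma$, and let $A_k^\epsilon$ denote the event above for $t_0 = t_k$; the union bound yields $\mathbb{P}\bigl(\bigcup_k A_k^\epsilon\bigr)\leq N\gamma\delta\leq (T+1)\delta\leq \eta_0/2$.

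The elementary partition observation is that any $s,t$ with $|s-t|\leq\gamma$ lie in at most two adjacent subintervals, so on the complement of $\{\sup_s|X_s^\epsilon|>M\}\cup\bigcup_k A_k^\epsilon$ the triangle inequality bounds $|X_s^\epsilon - X_t^\epsilon|$ by a sum of three quantities each strictly less than $\delta$, giving $w(X^\epsilon,\gamma) < 3\delta \leq \delta_0$. Combining the two $\eta_0/2$ contributions yields the modulus bound for $\epsilon<\epsilon_0$; the remaining $\epsilon\in[\epsilon_0,1]$ is addressed by the sequential formulation of Prokhorov tightness (any sequence $\epsilon_n\to 0$ eventually enters $(0,\epsilon_0)$) together with a separate application of item~2 for each of the finitely many thresholds needed. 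The argument is largely bookkeeping; the only real subtlety, and the step I would be most careful about, is the calibration of $\delta$ above, because item~2 uses the same symbol for the local oscillation threshold and the bound on the right, so one must separate these two roles by combining the factor-of-$3$ triangle-inequality loss with the global error budget $\eta_0/(2(T+1))$ before invoking the hypothesis.
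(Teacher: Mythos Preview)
Your proof is essentially correct and follows the standard route: reduce the stated hypotheses to Billingsley's modulus-of-continuity tightness criterion via a partition-and-union-bound argument. Two minor technical points are worth tidying up. First, your bound $N\gamma\delta \leq (T+1)\delta$ tacitly assumes $\gamma \leq 1$; this is harmless since one can always shrink $\gamma$, but it should be said. Second, your handling of $\epsilon \in [\epsilon_0,1]$ is a bit elliptical; the cleanest statement is that for the purposes of the paper one only needs relative compactness along sequences $\epsilon_n \to 0$, and any such sequence is eventually in $(0,\epsilon_0)$, so the argument applies.

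As for comparison with the paper: the paper does not actually prove this lemma. It is stated as a criterion, attributed to a slight modification of \cite[Theorem~8.3]{billingsley2008probability}, and the remainder of Section~5.2 is devoted to \emph{verifying} conditions~1 and~2 for the specific process $X_t^\epsilon$ using the multiscale test functions built in Proposition~5.2. So your proposal supplies something the paper deliberately omits --- a self-contained justification of why the criterion implies tightness --- rather than a different proof of something the paper proves. What you have written is the standard textbook argument behind the cited result, and it is correct.
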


To verify condition \ref{it:tight1} we follow the approach of \cite{pardoux2001poisson} and consider a test function of the form $\phi_0(x) = \log(1 + |x|^2)$.  The motivation for this choice is that while $\phi_0(x)$ is increasing, we have that
\begin{equation}
\label{eq:phi0bounds}
\sum_{k=1}^{3}(1 + |x|)^l\lVert \nabla_x^l \phi_0(x)\rVert_F \leq C.
\end{equation}
Let $\phi_1, \ldots, \phi_{2N-1}$ be the first $2N-1$ test functions constructed in Proposition \ref{prop:test_functions}.  Consider the test function
\begin{equation}
\label{eq:tightness_test_function}
  \phi^\epsilon(x) = \phi_0(x) + \epsilon\phi_1(x, x/\epsilon) + \ldots + \epsilon^{2N-1}\phi_{2N-1}(x, x/\epsilon, \ldots, x/\epsilon^{2N-2}, x/\epsilon^{2N-1}).
\end{equation}
Applying It\^{o}'s formula, we have that
$$
\phi^\epsilon(X^\epsilon_t) = \phi^\epsilon(x) + \int_0^t G(X_s^\epsilon)\,ds + \sum_{i=0}^{N}\sum_{k=0}^{2N-1} \epsilon_i  \int_0^t\nabla_{x_i}\phi_j\,dW_s,
$$
where $G(x)$ is a smooth function consisting of terms of the form:
\begin{equation}
\label{eq:tightness_decomposition}
\epsilon^{i+j-k}e^{V/\sigma}\nabla_{x_i}\cdot\left(e^{- V/\sigma} \nabla_{j}\phi_{k}\right)(x, x/\epsilon, \ldots, x/\epsilon^N),
\end{equation}
To obtain relative compactness we need to individually control the terms arising in the drift.  More specifically, we must show that
\begin{equation}
\label{eq:term1}
  \mathbb{E} \sup_{0\leq t \leq T} \int_0^t\left|e^{V/\sigma}\nabla_{x_i}\cdot\left(e^{- V/\sigma} \nabla_{j}\phi_{k}\right)(X_s^\epsilon, X_s^\epsilon/\epsilon, \ldots, X_s^\epsilon/\epsilon^N)\,ds\right| < \infty,
\end{equation}
where $i+j-k \geq 0$, and moreover, for terms arising from the martingale part,
\begin{equation}
\label{eq:term2}
 \mathbb{E}\left|\sup_{0\leq t\leq T}\int_0^t \nabla_{x_j}\phi_k(X^\epsilon_s,X^\epsilon_s/\epsilon, \ldots,X^\epsilon_s/\epsilon^N)\,dW_s\right|^2 < \infty,
\end{equation}
together with 
\begin{equation}
\label{eq:term3}
\sup_{0\leq t \leq T}|\phi_j(X^\epsilon_t)| < \infty.
\end{equation}
Terms of the type (\ref{eq:term1}) can be bounded above by:
\begin{align*}
\mathbb{E} \sup_{0 \leq t \leq T}\int_0^t \left|\left(\nabla_{x_i}V\cdot\nabla_{x_j} \phi_k\right)(X_s^\epsilon,\ldots, X_s^\epsilon/\epsilon^N)\right| + \left|\sigma\nabla_{x_i}\cdot\nabla_{x_j} \phi_k(X_s^\epsilon,\ldots, X_s^\epsilon/\epsilon^N)\right|\,ds.
\end{align*}
If $i > 0$, then $\nabla_{x_i}V$ is uniformly bounded, and so the above expectation is bounded above by
\begin{align*}
&C\,\mathbb{E}  \int_0^T |\nabla_{x_j}\phi_k(X_s^\epsilon, \ldots, X_s^\epsilon/\epsilon^N)| + |\nabla_{x_i}\cdot\nabla_{x_j}\phi_k(X_s^\epsilon, \ldots, X_s^\epsilon/\epsilon^N)| \,ds \\
=&C \mathbb{E} \int_0^T\sum_{m=1}^3\left\lVert \nabla^{m}_{x_0}\phi_0(X_s^\epsilon)\right\rVert_F\,ds \leq KT,
\end{align*}
using (\ref{eq:phi0bounds}), for some constant $K > 0$.    For the case when $i = 0$, an additional term arises from the derivative $\nabla_{x_0}V_0$ and we obtain an upper bound of the form
\begin{equation}
\label{eq:tight_bound1}
\begin{aligned}
\mathbb{E} \int_0^T&\sum_{m=1}^3\left\lVert \nabla^{m}_{x_0}\phi_0(X_t^\epsilon)\right\rVert(1 + \left|\nabla_{x_0} V_0(X_t^\epsilon)\right|)\,dt \\
&\leq  \,\mathbb{E} \int_0^T\sum_{m=1}^3\left\lVert \nabla^{m}_{x_0}\phi_0(X_t^\epsilon)\right\rVert(1 + \lVert\nabla\nabla V_0\rVert_{L^\infty}|X_t^\epsilon|)\,dt
\end{aligned}
\end{equation}
and which is bounded by Assumption \ref{ass:potential1} and (\ref{eq:phi0bounds}).  For (\ref{eq:term2}), we have
\begin{align*}
\mathbb{E}\left|\sup_{0\leq t\leq T}\int_0^t \nabla_{x_j}\phi_k(X^\epsilon_s,X^\epsilon_s/\epsilon, \ldots,X^\epsilon_s/\epsilon^N)\,dW_s\right|^2 &\leq 4\mathbb{E}\int_0^T |\nabla_{x_j}\phi_k(X^\epsilon_s,X^\epsilon_s/\epsilon, \ldots,X^\epsilon_s/\epsilon^N)|^2\,ds\\
&\leq C\,\mathbb{E}\int_0^T\sum_{m=1}^3\left\lVert \nabla^{m}_{x_0}\phi_0(X_s^\epsilon)\right\rVert_F\,ds,
\end{align*}
which is again bounded.  Terms of the type (\ref{eq:term3}) follow in a similar manner.  Condition \ref{it:tight1} then follows by an application of  Markov's inequality.
\\\\
To prove Condition \ref{it:tight2}, we set $\phi_0(x)=x$ and let $\phi_{1}, \ldots, \phi_{2N-1}$ be the test functions which exist by Proposition \ref{prop:test_functions}.  Applying It\^{o}'s formula to the corresponding multiscale test function (\ref{eq:tightness_test_function}), so that for $t_0 \in [0,T]$ fixed,
\begin{equation}
\label{eq:tightness2_decomposition}
  X_t^{\epsilon} - X_{t_0}^\epsilon = \int_{t_0}^t {G}\,ds + \sum_{i=0}^{N}\sum_{k=0}^{2N-1}\epsilon^i \int_{t_0}^t \nabla_{x_i}\phi_j\,dW_s,
\end{equation}
where $G$ is of the form given in (\ref{eq:tightness_decomposition}). Let $M > 0$, and let 
\begin{equation}
\label{eq:stopping_time}
\tau^\epsilon_M = \inf\lbrace t \geq 0\, ; \, |X_t^\epsilon| > M\rbrace.
\end{equation}
Following \cite{pardoux2001poisson}, it is sufficient to show that
\begin{equation}
\label{eq:tight2_condition1}
\mathbb{E} \left[\sup_{t_0\leq t \leq T} \int_{t_0\wedge \tau^\epsilon_M}^{t\wedge \tau^\epsilon_M}\left|e^{V/\sigma}\nabla_{x_i}\cdot\left(e^{- V/\sigma} \nabla_{j}\phi_{k}\right)(X_s^\epsilon, X_s^\epsilon/\epsilon, \ldots,  X_s^\epsilon/\epsilon^N)\,ds\right|^{1+\nu}\right] < \infty
\end{equation}
and
\begin{equation}
\label{eq:tight2_condition2}
\mathbb{E}\left(\sup_{t_0\leq t \leq t_0 + \gamma}\left|\int_{t_0\wedge \tau^\epsilon_M}^{t\wedge\tau^\epsilon_M} \nabla_{x_i}\phi_j(X_s^\epsilon, X_s^\epsilon/\epsilon, \ldots,  X_s^\epsilon/\epsilon^N)\,dW_s\right|^{2+2\nu}\right) < \infty
\end{equation}
for some fixed $\nu > 0$.   For (\ref{eq:tight2_condition1}), when $i > 0$, the term $\nabla_{x_i}V$ is uniformly bounded.  Moreover, since $\nabla\phi_0$ is bounded, so are the test functions $\phi_1,\ldots, \phi_{2N+1}$.  Therefore, by Jensen's inequality one obtains a bound of the form
\begin{align*}
&C\gamma^{\nu}\mathbb{E}\int_{t_0}^{t_0+\gamma}\left|e^{ V/\sigma}\nabla_{x_i}\cdot\left(e^{- V/\sigma} \nabla_{j}\phi_{k}\right)(X_s^\epsilon, X_s^\epsilon/\epsilon, \ldots,  X_s^\epsilon/\epsilon^N)\right|^{1+\nu}\,ds\\
\leq&C\gamma^{\nu}\int_{t_0}^{t_0+\gamma} |K|^{1+\nu}\,ds \leq K'\gamma^{1+\nu}.
\end{align*}
When $i = 0$, we must control terms involving $\nabla_{x_0}V_0$ of the form, 
$$
\mathbb{E}\left[\sup_{t_0\leq t\leq t_0+\gamma}\int_{t_0\wedge\tau^\epsilon_M}^{t\wedge\tau^\epsilon_M} \left|\nabla V_0\cdot \nabla_{x_j} \phi_k\right|^{1+\nu}\,ds\right]
$$
where $\tau_M^\epsilon$ is given by (\ref{eq:stopping_time}).  However, applying Jensen's inequality,
\begin{align}
\notag\mathbb{E}\left[\sup_{t_0\leq t\leq t_0+\gamma}\int_{t_0\wedge\tau^\epsilon_M}^{t\wedge\tau^\epsilon_M} \left|\nabla V_0\cdot \nabla_{x_j} \phi_k\right|^{1+\nu}\,ds\right] &\leq C\gamma^{\nu}\int_{t_0\wedge \tau^\epsilon_M}^{(t_0+\gamma)\wedge \tau^\epsilon_M} \mathbb{E}\left|\nabla V_0\cdot \nabla_{x_j} \phi_k\right|^{1+\nu}\,ds \\
&\notag\leq C\gamma^{\nu}\int_{t_0\wedge \tau^\epsilon_M}^{(t_0+\gamma)\wedge \tau^\epsilon_M} \mathbb{E}\left|\nabla V_0(X_s^\epsilon)\right|^{1+\nu}\,ds\\
&\notag\leq C\gamma^{\nu}\left\lVert \nabla^2 V_0\right\rVert_{\infty}^{1+\nu}\int_{t_0\wedge \tau^\epsilon_M}^{(t_0+\gamma)\wedge \tau^\epsilon_M} \mathbb{E}|X_s^\epsilon|^{1+\nu}\,ds\\
&\label{eq:tight_bound2}\leq CM\gamma^{1+\nu}\left\lVert \nabla^2 V_0\right\rVert_{L^{\infty}}^{1+\nu},
\end{align}
as required.  Similarly, to establish (\ref{eq:tight2_condition2}) we follow a similar argument, first using the Burkholder-Gundy-Davis inequality to obtain:
\begin{align*}
  \mathbb{E}\left(\sup_{t_0 \leq t \leq t_{0} + \gamma}\int_{t_0}^{t}|\nabla_{x_i}\phi_j\,dW_s|^{2+2\nu}\right) &\leq   \mathbb{E}\left(\int_{t_0}^{t_0+\gamma}\left|\nabla_{x_i}\phi_j\right|^2 \,ds\right)^{1+\nu}\\
  &\leq   \gamma^{\nu}\int_{t_0}^{t_0+\gamma}\mathbb{E}\left|\nabla_{x_i}\phi_j\right|^{2+2\gamma} \,ds \\
  &\leq C \gamma^{1+\nu}.
\end{align*}  
We note that Assumption \ref{ass:potential1} (3) is only used to obtain the bounds (\ref{eq:tight_bound1}) and (\ref{eq:tight_bound2}).  A straightforward application of Markov's inequality then completes the proof of condition \ref{it:tight2}.  It follows from Prokhorov's theorem that the family $\lbrace X_t^\epsilon ; t \in [0,T]\rbrace_{0 < \epsilon \leq 1}$  is relatively compact in the topology of weak convergence of stochastic processes taking paths in $C([0,T]; \mathbb{R}^d)$. In particular, there exists a process $X^0$ whose paths lie in  $C([0,T]; \mathbb{R}^d)$ such that $\lbrace X^{\epsilon_n}; t\in[0,T] \rbrace \Rightarrow \lbrace X^{0}; t\in[0,T] \rbrace $ along a subsequence $\epsilon_n$.

\subsection{Identifying the Weak Limit}
In this section we uniquely identify any limit point the set $\lbrace X_t^\epsilon; t \in [0,T]\rbrace_{0 < \epsilon \leq 1}$.  Given $\phi_0 \in C^\infty_c(\mathbb{R}^d)$ define $\phi^\epsilon$ to be
$$
  \phi^\epsilon(x) = \phi_0(x) + \epsilon \phi_1(x/\epsilon) + \ldots \epsilon^N \phi_N(x, x/\epsilon,\ldots, x/\epsilon^N) + \ldots + \epsilon^{2N}\phi_{2N}(x, x/\epsilon, \ldots, x/\epsilon^N),
$$
where $\phi_1, \ldots, \phi_N$ are the test functions obtained from Proposition \ref{prop:test_functions}.  Since each test function is smooth, we can apply It\^{o}'s formula to  $\phi^\epsilon(X_t^\epsilon)$ to see that
$$
  \mathbb{E}\left[\phi_0(X_t^\epsilon) - \int_{s}^t \frac{\sigma}{Z(X_u^\epsilon)}\nabla_{x_0}\cdot\left({Z}(X_u^\epsilon)\mathcal{M}(X_u^\epsilon)\nabla \phi_0(X_u^\epsilon)\right)\,du + \epsilon R_\epsilon\,\Big|\, \mathcal{F}_s\right] = \phi_0(X_s^\epsilon),
$$
where $R_\epsilon$ is a remainder term which is bounded in $L^2(\pi^\epsilon)$ uniformly with respect to $\epsilon$, and where the homogenized diffusion tensor $\mathcal{M}(x)$ is defined in Theorem \ref{thm:homog_main}.   Taking $\epsilon \rightarrow 0$ we see that any limit point is a solution of the martingale problem
$$
\mathbb{E}\left[\phi_0(X^0_t) - \int_{s}^t \frac{\sigma}{Z(X^0_u)}\nabla_{x_0}\cdot\left(Z(X^0_u)\mathcal{M}(X^0_u)\nabla \phi_0(X_u^0)\right)\,du \,\Big|\, \mathcal{F}_s\right] = \phi_0(X_s^0).
$$
This implies that $X^0$ is a solution to the martingale problem for $\mathcal{L}^0$ given by
$$
  \mathcal{L}_0 f(x) = \frac{\sigma}{Z(x)}\nabla\cdot(Z(x)\mathcal{M}(x)\nabla f(x)).
$$

From Lemma \ref{lem:corr_existence_ellipticity}, the matrix $\mathcal{M}(x)$ is smooth, strictly positive definite and has bounded derivatives.  Moreover,
\begin{align*}
  Z(x) &= \int_{\mathbb{T}^d}\cdots\int_{\mathbb{T}^d}e^{-V(x, x_1,\ldots, x_N)/\sigma}\,dx_1\ldots dx_N \\ &= e^{-V_0(x)/\sigma}\int_{\mathbb{T}^d}\cdots\int_{\mathbb{T}^d}e^{-V_1(x, x_1,\ldots, x_N)/\sigma}\,dx_1\ldots dx_N,
\end{align*}
where the term in the integral is uniformly bounded.  It follows from Assumption \ref{ass:potential1}, that for some $C > 0$, 
$$
  \left|\mathcal{M}(x)\nabla \Psi(x)\right| \leq C(1 + |x|), \quad \forall x\in\mathbb{R}^d,
$$
where $\Psi = -\log Z$.  Therefore, the conditions of the Stroock-Varadhan theorem \cite[Theorem 24.1]{rogers2000diffusions} holds, and therefore the martingale problem for $\mathcal{L}^0$ possesses a unique  solution.  Thus $X^0$ is the unique (in the weak sense) limit point of the family $\lbrace X^\epsilon \rbrace_{0 < \epsilon \leq 1}$.  Moreover, by \cite[Theorem 20.1]{rogers2000diffusions}, the process $\lbrace X^0_t; t\in[0,T] \rbrace$ will be the unique solution of the SDE (\ref{eq:homogenized_sde}), completing the proof.

\section{Acknowledgements}
\label{sec:ack}
The authors thank S. Kalliadasis and M. Pradas for useful discussions. They also thank B. Zegarlinski for useful discussions and for pointing out Ref. \cite{hebisch2010coercive}. We acknowledge financial support by the Engineering and Physical Sciences Research Council of the UK through Grants Nos.  EP/J009636, EP/L020564, EP/L024926 and EP/L025159.

\section*{References}
\bibliographystyle{standard}
\def\cprime{$'$}


\begin{thebibliography}{10}

\bibitem{allaire1992homogenization}
{\sc G.~Allaire}, {\em Homogenization and two-scale convergence}, SIAM Journal
  on Mathematical Analysis, 23 (1992), pp.~1482--1518.

\bibitem{allaire1996multiscale}
{\sc G.~Allaire and M.~Briane}, {\em Multiscale convergence and reiterated
  homogenisation}, Proceedings of the Royal Society of Edinburgh: Section A
  Mathematics, 126 (1996), pp.~297--342.

\bibitem{almada2014scaling}
{\sc S.~A. Almada and K.~Spiliopoulos}, {\em Scaling limits and exit law for
  multiscale diffusions}, Asymptotic Analysis, 87 (2014), pp.~65--90.

\bibitem{ane2000inegalites}
{\sc C.~An{\'e} et~al.}, {\em Sur les in{\'e}galit{\'e}s de {S}obolev
  logarithmiques},  (2000).

\bibitem{ansari2000mean}
{\sc A.~Ansari}, {\em Mean first passage time solution of the {Smoluchowski}
  equation: Application to relaxation dynamics in myoglobin}, The Journal of
  Chemical Physics, 112 (2000), pp.~2516--2522.

\bibitem{bakry2013analysis}
{\sc D.~Bakry, I.~Gentil, and M.~Ledoux}, {\em Analysis and geometry of Markov
  diffusion operators}, vol.~348, Springer Science \& Business Media, 2013.

\bibitem{banerjee2014diffusion}
{\sc S.~Banerjee, R.~Biswas, K.~Seki, and B.~Bagchi}, {\em Diffusion in a rough
  potential revisited},  (2014).

\bibitem{ben2003multiscale}
{\sc G.~Ben~Arous and H.~Owhadi}, {\em Multiscale homogenization with bounded
  ratios and anomalous slow diffusion}, Communications on Pure and Applied
  Mathematics, 56 (2003), pp.~80--113.

\bibitem{bensoussan1978asymptotic}
{\sc A.~Bensoussan, J.~Lions, and G.~Papanicolaou}, {\em Asymptotic analysis
  for periodic structures}, vol.~5, North Holland, 1978.

\bibitem{billingsley2008probability}
{\sc P.~Billingsley}, {\em Probability and measure}, John Wiley \& Sons, 2008.

\bibitem{blanc2010finite}
{\sc X.~Blanc, C.~Le~Bris, F.~Legoll, and C.~Patz}, {\em Finite-temperature
  coarse-graining of one-dimensional models: mathematical analysis and
  computational approaches}, Journal of Nonlinear Science, 20 (2010),
  pp.~241--275.

\bibitem{bryngelson1995funnels}
{\sc J.~D. Bryngelson, J.~N. Onuchic, N.~D. Socci, and P.~G. Wolynes}, {\em
  Funnels, pathways, and the energy landscape of protein folding: a synthesis},
  Proteins: Structure, Function, and Bioinformatics, 21 (1995), pp.~167--195.

\bibitem{bryngelson1987spin}
{\sc J.~D. Bryngelson and P.~G. Wolynes}, {\em Spin glasses and the statistical
  mechanics of protein folding}, Proceedings of the National Academy of
  Sciences, 84 (1987), pp.~7524--7528.

\bibitem{cioranescu2000introduction}
{\sc D.~Cioranescu and P.~Donato}, {\em Introduction to homogenization},
  (2000).

\bibitem{dean2014diffusion}
{\sc D.~S. Dean, S.~Gupta, G.~Oshanin, A.~Rosso, and G.~Schehr}, {\em Diffusion
  in periodic, correlated random forcing landscapes}, Journal of Physics A:
  Mathematical and Theoretical, 47 (2014), p.~372001.

\bibitem{duncan2016noise}
{\sc A.~Duncan, S.~Kalliadasis, G.~Pavliotis, and M.~Pradas}, {\em
  Noise-induced transitions in rugged energy landscapes},  (2016).

\bibitem{dupuis2011rare}
{\sc P.~Dupuis, K.~Spiliopoulos, and H.~Wang}, {\em Rare event simulation for
  rough energy landscapes}, in Simulation Conference (WSC), Proceedings of the
  2011 Winter, IEEE, 2011, pp.~504--515.

\bibitem{evans1998partial}
{\sc L.~C. Evans}, {\em Partial differential equations}, Graduate Studies in
  Mathematics, 19 (1998).

\bibitem{gardiner1985handbook}
{\sc C.~Gardiner}, {\em Stochastic methods}, Springer Series in Synergetics,
  Springer-Verlag, Berlin, fourth~ed., 2009.
\newblock A handbook for the natural and social sciences.

\bibitem{gilbarg2015elliptic}
{\sc D.~Gilbarg and N.~S. Trudinger}, {\em Elliptic partial differential
  equations of second order}, springer, 2015.

\bibitem{Hartmann2014279}
{\sc C.~Hartmann, J.~C. Latorre, W.~Zhang, and G.~A. Pavliotis}, {\em Optimal
  control of multiscale systems using reduced-order models}, Journal of
  Computational Dynamics, 1 (2014), pp.~279--306.

\bibitem{hebisch2010coercive}
{\sc W.~Hebisch and B.~Zegarli{\'n}ski}, {\em Coercive inequalities on metric
  measure spaces}, Journal of Functional Analysis, 258 (2010), pp.~814--851.

\bibitem{hijon2010mori}
{\sc C.~Hij{\'o}n, P.~Espa{\~n}ol, E.~Vanden-Eijnden, and
  R.~Delgado-Buscalioni}, {\em {Mori--Zwanzig} formalism as a practical
  computational tool}, Faraday discussions, 144 (2010), pp.~301--322.

\bibitem{holley1987logarithmic}
{\sc R.~Holley and D.~Stroock}, {\em {Logarithmic Sobolev} inequalities and
  stochastic {Ising} models}, Journal of statistical physics, 46 (1987),
  pp.~1159--1194.

\bibitem{hyeon2003can}
{\sc C.~Hyeon and D.~T.}, {\em Can energy landscape roughness of proteins and
  {RNA} be measured by using mechanical unfolding experiments?}, Proceedings of
  the National Academy of Sciences, 100 (2003), pp.~10249--10253.

\bibitem{jikov2012homogenization}
{\sc V.~V. Jikov, S.~M. Kozlov, and O.~A. Oleinik}, {\em Homogenization of
  differential operators and integral functionals}, Springer Science \&
  Business Media, 2012.

\bibitem{komorowski2012fluctuations}
{\sc T.~Komorowski, C.~Landim, and S.~Olla}, {\em Fluctuations in Markov
  processes: time symmetry and martingale approximation}, vol.~345, Springer
  Science \& Business Media, 2012.

\bibitem{krumscheid2014perturbation}
{\sc S.~Krumscheid}, {\em Perturbation-based inference for diffusion processes:
  Obtaining coarse-grained models from multiscale data},  (2014).

\bibitem{krumscheid2013semiparametric}
{\sc S.~Krumscheid, G.~A. Pavliotis, and S.~Kalliadasis}, {\em Semiparametric
  drift and diffusion estimation for multiscale diffusions}, Multiscale
  Modeling \& Simulation, 11 (2013), pp.~442--473.

\bibitem{legoll2010effective}
{\sc F.~Legoll and T.~Lelievre}, {\em Effective dynamics using conditional
  expectations}, Nonlinearity, 23 (2010), p.~2131.

\bibitem{lifson1962self}
{\sc S.~Lifson and J.~L. Jackson}, {\em On the self-diffusion of ions in a
  polyelectrolyte solution}, The Journal of Chemical Physics, 36 (1962),
  pp.~2410--2414.

\bibitem{milton1995theory}
{\sc G.~W. Milton}, {\em The theory of composites}, Materials and Technology,
  117 (1995), pp.~483--93.

\bibitem{mondal2009noise}
{\sc D.~Mondal, P.~K. Ghosh, and D.~S. Ray}, {\em Noise-induced transport in a
  rough ratchet potential}, The Journal of chemical physics, 130 (2009),
  p.~074703.

\bibitem{muckenhoupt1972hardy}
{\sc B.~Muckenhoupt}, {\em Hardy's inequality with weights}, Studia
  Mathematica, 44 (1972), pp.~31--38.

\bibitem{muller1980reaction}
{\sc K.~M{\"u}ller}, {\em Reaction paths on multidimensional energy
  hypersurfaces}, Angewandte Chemie International Edition in English, 19
  (1980), pp.~1--13.

\bibitem{radu1992homogenization}
{\sc M.~Neuss-Radu}, {\em Homogenization techniques}, PhD thesis, University of
  Cluj-Napoca, Romania, and University of Heidelberg, Germany, 1992.

\bibitem{nguetseng1989general}
{\sc G.~Nguetseng}, {\em A general convergence result for a functional related
  to the theory of homogenization}, SIAM Journal on Mathematical Analysis, 20
  (1989), pp.~608--623.

\bibitem{onuchic1997theory}
{\sc J.~N. Onuchic, Z.~Luthey-Schulten, and P.~G. Wolynes}, {\em Theory of
  protein folding: the energy landscape perspective}, Annual review of physical
  chemistry, 48 (1997), pp.~545--600.

\bibitem{papanicolaou1977martingale}
{\sc G.~C. Papanicolaou, D.~Stroock, and S.~R.~S. Varadhan}, {\em Martingale
  approach to some limit theorems}, in Duke Turbulence Conference (Duke Univ.,
  Durham, NC, 1976), vol.~6, 1977.

\bibitem{pardoux2001poisson}
{\sc {\`E}.~Pardoux and A.~Y. Veretennikov}, {\em On the {P}oisson equation and
  diffusion approximation. {I}}, Ann. Probab., 29 (2001), pp.~1061--1085.

\bibitem{pardoux2003poisson}
{\sc {\`E}.~Pardoux and A.~Y. Veretennikov}, {\em On {P}oisson equation and
  diffusion approximation. {II}}, Ann. Probab., 31 (2003), pp.~1166--1192.

\bibitem{pavliotis2014stochastic}
{\sc G.~A. Pavliotis}, {\em Stochastic processes and applications}, vol.~60 of
  Texts in Applied Mathematics, Springer, New York, 2014.
\newblock Diffusion processes, the Fokker-Planck and Langevin equations.

\bibitem{pavliotis2007parameter}
{\sc G.~A. Pavliotis and A.~M. Stuart}, {\em Parameter estimation for
  multiscale diffusions}, Journal of Statistical Physics, 127 (2007),
  pp.~741--781.

\bibitem{pavliotis2008multiscale}
{\sc G.~A. Pavliotis and A.~M. Stuart}, {\em Multiscale methods: averaging and
  homogenization}, Springer Verlag, 2008.

\bibitem{peters2008projection}
{\sc E.~A. J.~F. Peters}, {\em Projection-operator formalism and
  coarse-graining},  (2008).

\bibitem{ren2002probing}
{\sc W.~Ren and E.~Vanden-Eijnden}, {\em Probing multi-scale energy landscapes
  using the string method},  (2002).

\bibitem{rogers2000diffusions}
{\sc L.~C.~G. Rogers and D.~Williams}, {\em Diffusions, Markov processes and
  martingales: Volume 2, It{\^o} calculus}, vol.~2, Cambridge university press,
  2000.

\bibitem{saven1994kinetics}
{\sc J.~G. Saven, J.~Wang, and P.~G. Wolynes}, {\em Kinetics of protein
  folding: the dynamics of globally connected rough energy landscapes with
  biases}, J. Chem. Phys., 101 (1994), pp.~11037--11043.

\bibitem{spiliopoulos2015rare}
{\sc K.~Spiliopoulos}, {\em Rare event simulation for multiscale diffusions in
  random environments}, Multiscale Modeling \& Simulation, 13 (2015),
  pp.~1290--1311.

\bibitem{veretennikov2011sobolev}
{\sc A.~Y. Veretennikov}, {\em On {S}obolev solutions of {P}oisson equations in
  {$\Bbb R^d$} with a parameter}, J. Math. Sci. (N. Y.), 179 (2011),
  pp.~48--79.
\newblock Problems in mathematical analysis. No. 61.

\bibitem{zwanzig1988diffusion}
{\sc R.~Zwanzig}, {\em Diffusion in a rough potential}, Proceedings of the
  National Academy of Sciences, 85 (1988), pp.~2029--2030.

\end{thebibliography}
\end{document}